\documentclass{amsart}
\usepackage[foot]{amsaddr}
\usepackage[margin=1in]{geometry}
\usepackage{amsmath, amsfonts, amssymb}
\usepackage{color}
\usepackage{algpseudocode}
\usepackage{algorithm}
\usepackage{siunitx}
\usepackage{graphicx}
\usepackage{subfig}
\usepackage{physics}
\usepackage{bbm}
\usepackage{tikz}
\usepackage{mathtools}
\usepackage{mathrsfs}
\usetikzlibrary{patterns,snakes,arrows.meta}
\colorlet{ColorPink}{black!10}

\usepackage{hyperref}
\usepackage{cleveref}

\newcommand\deleted[1]{}


\newcommand{\dt}{\Delta t}
\newcommand{\e}{\mathrm{e}} 
\newcommand{\id}{\mathrm{Id}} 
\newcommand{\ii}{\mathrm{i}} 
\newcommand{\ff}{\mathrm{f}}
\newcommand{\sgn}{\mathrm{sgn}}
\newcommand{\Ls}{\mathcal{L}}
\newcommand{\Ms}{\mathcal{M}}
\newcommand{\bs}{\boldsymbol{s}}
\newcommand{\UijI}{\widehat{\mathscr{U}{}}{}_{\!\!ij}^{\!I}}
\newcommand{\Uij}{\widehat{\mathscr{U}{}}{}_{\!\!ij}}

\DeclareMathOperator{\SPAN}{span}

\newtheorem{theorem}{Theorem}
\newtheorem{lemma}[theorem]{Lemma}

\newtheorem{proposition}[theorem]{Proposition}
\newtheorem{definition}[theorem]{Definition}

\theoremstyle{remark}

\title{The Bold-Thin-Bold Diagrammatic Monte Carlo Method for Open Quantum Systems}

\author{Zhenning Cai}
\address[Zhenning Cai]{Department of Mathematics, National University of Singapore,
  Level 4, Block S17, 10 Lower Kent Ridge Road, Singapore 119076}
\email{matcz@nus.edu.sg}

\author{Geshuo Wang}
\address[Geshuo Wang]{Department of Mathematics, National University of Singapore,
  Level 4, Block S17, 10 Lower Kent Ridge Road, Singapore 119076}
\email{geshuowang@u.nus.edu}

\author{Siyao Yang}
\address[Siyao Yang]{Department of Mathematics, National University of Singapore,
  Level 4, Block S17, 10 Lower Kent Ridge Road, Singapore 119076}
\email{matsiya@nus.edu.sg}

\thanks{Zhenning Cai and Siyao Yang's work was supported by the Academic Research Fund of the Ministry of Education of Singapore under grant A-0004592-00-00.}

\keywords{Dyson series; inchworm Monte Carlo; bold-thin-bold Monte Carlo;  integro-differential equation; fast algorithms}

\begin{document}

\maketitle

\begin{abstract}
We present two diagrammatic Monte Carlo methods for quantum systems coupled with harmonic baths, whose dynamics are described by integro-differential equations. The first approach can be considered as a reformulation of Dyson series, and the second one, called ``bold-thin-bold diagrammatic Monte Carlo'', is based on resummation of the diagrams in the Dyson series to accelerate its convergence. The underlying mechanism of the governing equations associated with the two methods lies in the recurrence relation of the path integrals, which is the most costly part in the numerical methods. The proposed algorithms give an extension to the work [``Fast algorithms of bath calculations in simulations of quantum system-bath dynamics", Computer Physics Communications, to appear], where the algorithms are designed based on reusing the previous calculations of bath influence functionals. Compared with the algorithms therein, our methods further include the reuse of system associated functionals and show better performance in terms of computational efficiency and memory cost. We demonstrate the two methods in the framework of spin-boson model, and numerical experiments are carried out to verify the validity of the methods.       
\end{abstract}

\section{Introduction}

It is a great challenge to numerically simulate a quantum system 
 coupled to its environment
 to a non-negligible extent.
Nevertheless,
 open quantum systems are 
 playing increasingly important roles
 in many fields including 
 but not limited to
quantum computation \cite{nielsen2002quantum},
 quantum optical systems \cite{breuer2002theory,carmichael2009open},
 quantum sensing
 \cite{correa2017enhancement,salado2021spectroscopy},
 and chemical physics \cite{leggett1987dynamics,weiss2012quantum}.
In general,
 the main difficulty comes from
 the coupling of the system 
 and the bath, which leads to
 a non-Markovian process.
The non-locality of the simulation
 generally requires large memory cost.
By using an influence functional
 to describe the effect of the bath
 to the system \cite{feynman1963theory},
 the iterative quasi-adiabatic propagator path integral (i-QuAPI) \cite{makri1995numerical,makri1998quantum}
 assumes finite memory length
 to prevent the unrestrained growth of the memory cost.
While the bottleneck of the i-QuAPI method is its memory cost,
 some methods based on similar ideas
 have been developed to further reduce the memory cost
 and improve the computational efficiency.
For instance,
 the blip-summed decomposition
 eliminates a majority of system paths with small contributions \cite{makri2014blip,makri2017blip};
 the differential equation based path integral (DEBPI) studies the continuous form of the i-QuAPI 
 and formulates a system of differential equations \cite{wang2021differential}
 so that classical numerical methods can be applied.
Recently, the small matrix decomposition of the path integral (SMatPI) is proposed as a stable and efficient algorithm that overcomes
 the exponential scaling of the memory cost with respect to the memory length \cite{makri2020small,makri2021small1,makri2021small2}.

Another main idea to simulate the system-bath coupling
 is to use the Nakajima-Zwanzig generalized quantum master equation (GQME) \cite{nakajima1958quantum,zwanzig1960ensemble,shi2003new,zhang2006nonequilibrium}, which is an integro-differential equation describing the non-Markovian process of the system density matrix.
For a weakly coupled system,
 one may neglect the memory effect
 and work only on the Markovian approximation, resulting in the Lindblad equation \cite{davies1974markovian,lindblad1976generators}.
Methods based on the non-Markovian GQME includes momentum jump-GQME \cite{Kelly2013efficient}, which utilizes the surface hopping method in the computation of the memory kernel.
The transfer tensor method (TTM) for open quantum systems \cite{cerrillo2014non,buser2017initial,chen2020non} can also be considered as a discretization of GQME.
It deduces the memory kernel of a time-nonlocal quantum master equation and reconstructs the dynamical operators, which can extend simulations to very long times.

Except for the deterministic methods,
 another natural idea is to apply stochastic methods for the high-dimensional integrals in the Dyson series \cite{dyson1949radiation}.
The diagrammatic quantum Monte Carlo (dQMC) method uses diagrams to better visualize the coupling between the system and the bath \cite{prokof1998polaron,werner2009diagrammatic}.
In this method, the memory cost is no longer an issue, but it suffers from the numerical sign problem \cite{loh1990sign,cai2020numerical},
where the variance increases exponentially 
with respect to the simulation time $t$.
To tame the sign problem, 
 different techniques are developed.
The inchworm Monte Carlo method is one of the recent successful improvements, which applies ``bold-line diagrams'' to
 accelerate the convergence of series \cite{chen2017inchworm1,chen2017inchworm2,cai2020inchworm,yang2021inclusion,cai2020numerical}. It has achieved many successes in the computation of impurity models \cite{cohen2015taming, ridley2018numerically, eidelstein2020multiorbital}. The idea of using bold lines can be traced back to \cite{prokof1997bold}, and has been applied to the impurity model in \cite{gull2010bold}. In general, a bold diagram is the sum of a number of original diagrams denoting the terms in the Dyson series. By clustering some diagrams in the Dyson series, the total number of diagrams to be summed is reduced, and hence improves the numerical efficiency.
  
In \cite{cai2022fast},
 the authors design a fast algorithm
 that reuses the samples generated in previous time steps to reduce the calculations of the bath influence functional without losing accuracy of dQMC and the inchworm Monte Carlo method.
In this paper,
 we further discuss the reuse of previously calculated results based on the intrinsic invariance of the quantum propagators.
For the computation of Dyson series,
 we explore a recurrence relation of the series to be computed at each time step, and
 propose a fast algorithm
 that requires a much smaller amount of new samples at each time step.
While this idea can not be directly applied to the inchworm Monte Carlo method, we modify the inchworm Monte Carlo method by allowing some thin lines in the diagrams, and develop
 a new ``bold-thin-bold'' (BTB) Monte Carlo method
 where the recurrence relation can again be formulated and the fast algorithm can also be designed. The name ``bold-thin-bold'' describes the typical structure of the diagrams, where a thin line is sandwiched between two bold sections.

For the rest sections of this paper,
 we present the derivation of the fast algorithms for Dyson series and the BTB method.
In \Cref{sec:dyson},
 we first introduce the spin-boson model
 and the Dyson series,
 and then design a fast algorithm
 for the computation of Dyson series
 based on a recurrence formula. \Cref{section_semi_inchworm} contributes to the derivation of our new diagrammatic Monte Carlo method.
In \Cref{sec_review_imc},
 we give a brief introduction to inchworm Monte Carlo method
 and its diagrammatic representation.
In \Cref{Modified_IMCM},
 the BTB method is presented intuitively by diagrams
 and formally by equations.
The fast algorithm
 for BTB method is designed 
 based on the recurrence relation
 in \Cref{sec:semi_inchworm_numerical_method}.
At the end of both \Cref{sec:dyson} and \Cref{section_semi_inchworm},
 some remarks about the number of samples required
 and the comparison between methods are given for better understanding. In \Cref{section_numerical_results},
 we carry out some numerical experiments
 to illustrate the validity of our algorithms. 
Some discussions about the computation efficiency 
 and the variances of two methods
 are also given in \Cref{section_numerical_results}.
In \Cref{final_conclusion},
 we briefly conclude the idea of the whole paper
 and discuss the possible future work.

\section{Fast algorithm for Dyson series}
 \label{sec:dyson}
\subsection{Spin-boson model and Dyson series}
We study the density matrix $\rho(t)$ of the system-bath dynamics in an open quantum system, which can be described by the following von Neumann equation:
\begin{equation}
    \ii \frac{\dd \rho(t)}{\dd t} 
   = H\rho(t) - \rho(t)H
    \label{vonNeumannEq}
\end{equation}
The Hamiltonian $H$ is an operator on the Hilbert space $\mathcal{H} = \mathcal{H}_s \otimes \mathcal{H}_b$, where $\mathcal{H}_s$ and $\mathcal{H}_b$
 represent the spaces associated with the system and bath, respectively. The Hamiltonian $H$ can be written as the sum of three parts, the system, the bath and their coupling: 
\begin{equation*}
    H = H_s \otimes \id_b + \id_s \otimes H_b + W,
\end{equation*}
where $H_s$ are operators on $\mathcal{H}_s$ and $H_b$ are operators on $\mathcal{H}_b$. The identity operators
$\id_s,\id_b$ are defined over the system space $\mathcal{H}_s$ and bath space $\mathcal{H}_b$, respectively. 
The coupling term is assumed to have the tensor-product form 
\begin{displaymath}
W = W_s \otimes W_b. 
\end{displaymath}
In such an open quantum system,
 we are generally indifferent to the evolution of the bath.
However, as the system is coupled with its bath,
 the effect of the bath cannot be neglected, which introduces significant difficulties in the simulation of open quantum systems.
In this paper, we only consider spin-boson model,
 where the system is a single spin
 and the bath is modeled by a large number of harmonic oscillators. While the algorithms proposed in this work can be easily generalized to any multiple-state open quantum systems, the spin-boson model contains most difficulties for the simulations of system bath dynamics and thus is used as a fundamental example throughout this paper to demonstrate our methods.  
 
 \subsubsection{Spin-boson model}
In the spin-boson model, the Hilbert spaces $\mathcal{H}_s,\mathcal{H}_b$ are
\begin{equation*}
    \mathcal{H}_s = \SPAN\{\ket{-1},\ket{1}\}, \quad
    \mathcal{H}_b = \bigotimes_{j} \left(\mathscr{L}^2\left(\mathbb{R}^3\right)\right)
\end{equation*}
with $\mathscr{L}^2(\mathbb{R}^3)$ being the $\mathscr{L}^2$ space over $\mathbb{R}^3$
 and $j$ is the index of harmonic oscillators in the bath.
Additionally, the Hamiltonians and coupling operators are given by
\begin{equation*}
H_s = \epsilon \hat{\sigma}_z + \Delta \hat{\sigma}_x, \quad
H_b = \sum_{j} \frac{1}{2} \left(\hat{p}_j^2 + \omega_j^2 \hat{q}_j^2\right), \quad
W_s = \hat{\sigma}_z, \quad
W_b = \sum_j c_j \hat{q}_j.
\end{equation*}
The notations used in the equations above are defined as follows:
\begin{itemize}
\item Pauli matrices: $\hat{\sigma}_x = \ket{-1}\bra{1} + \ket{1}\bra{-1}$, $\hat{\sigma}_z = \dyad{1}{1}-\dyad{-1}{-1}$.
\item $\epsilon$: The energy difference between two spin states.
\item $\Delta$: The frequency of the spin flipping.
\item $\hat{p}_j$: The momentum operator of the $j$th harmonic oscillator, $\psi(\cdots, q_j, \cdots) \mapsto \partial_{q_j} \psi(\cdots, q_j, \cdots)$.
\item $\hat{q}_j$: The position operator of the $j$th harmonic oscillator, $\psi(\cdots, q_j, \cdots) \mapsto q_j \psi(\cdots, q_j, \cdots)$.
\item $\omega_j$: The frequency of the $j$th harmonic oscillator.
\item $c_j$: The coupling intensity between the spin and the $j$th harmonic oscillator.
\end{itemize}
The solution to \eqref{vonNeumannEq} can be formally written as $\rho(t) = \e^{-\ii tH} \rho(0) \e^{\ii t H}$.
With the assumption that the system and bath are not entangled initially $\rho(0) = \rho_s \otimes \rho_b$,
 the expectation of a given observable over the system $\hat{O}=\hat{O}_s \otimes \id_b$ can be computed by
\begin{equation*}
    \expval{\hat{O}(t)} = \tr\left(\hat{O}\rho(t)\right)
    = \tr\left(\hat{O}\e^{-\ii tH} \rho(0) \e^{\ii t H}\right)
    = \tr\left(\rho_s\otimes \rho_b \e^{\ii tH} \hat{O}_s \e^{-\ii t H} \right).
\end{equation*}
We define the propagator
\begin{equation*}
    G(-t,t) = \tr_b\left( \rho_b \e^{\ii t H} \hat{O}_s \e^{-\ii t H} \right)
\end{equation*}
where $\tr_b$ is the partial trace operator with respect to bath. 
With this propagator, $\expval{\hat{O}(t)} = \tr_s \left( \rho_s G(-t,t) \right)$ with $\tr_s$ is the partial trace operator with respect the system.
One may check that this propagator is Hermitian, i.e, $(G(-t,t))^\dagger = G(-t,t)$.
\subsubsection{Dyson series}
It is practically prohibitive to compute $\e^{\pm \ii t H}$ directly
 because of the high dimensionality of the bath space $\mathcal{H}_b$.
In mathematical physics, 
 the time evolution operator in the interaction picture can be expanded as Dyson series \cite{dyson1949radiation,cai2020inchworm,cai2022fast}:
\begin{equation}
    G(-t,t) = \e^{\ii t H_s} \hat{O}_s \e^{-\ii t H_s}
    + \sum_{m=1}^{+\infty}
    \ii^m
    \int_{-t\leqslant \boldsymbol{s} \leqslant t}
    \dd \boldsymbol{s}
    \left[(-1)^{\#\{\boldsymbol{s}<0\}}
    \mathcal{U}^{(0)} (-t,\boldsymbol{s},t)
    \mathcal{L}_b(\boldsymbol{s})\right] \text{~for~} t \geqslant 0, 
    \label{fullPropagator}
\end{equation}
where the variable of integration $\boldsymbol{s}$ is an $m$-dimensional vector
 and its integral domain $-t\leqslant \boldsymbol{s} \leqslant t$ is an $m$-dimensional simplex
\begin{equation*}
    \left\{
    (s_1,\cdots,s_m)\in\mathbb{R}^m \vert
    -t \leqslant s_1 \leqslant s_2 \leqslant \cdots \leqslant s_m \leqslant t
    \right\},
\end{equation*}
and thus the integral in \eqref{fullPropagator} should be interpreted as
\begin{equation*}
    \int_{-t\leqslant \boldsymbol{s} \leqslant t} \dd \boldsymbol{s}
    = \int_{-t}^{t} \dd s_m
    \int_{-t}^{s_m} \dd s_{m-1}
    \cdots
    \int_{-t}^{s_2} \dd s_1.
\end{equation*}
In the integrand, the notation $\#\{\boldsymbol{s}<0\}$ denotes the number of negative components 
in $\boldsymbol{s} = (s_1,\cdots,s_m)$, and
$\mathcal{U}^{(0)}(-t,\boldsymbol{s},t)$ describes the contribution of the system dynamics, which is defined by
\begin{equation}
    \label{U0}
    \mathcal{U}^{(0)} (-t,\boldsymbol{s},t)
    = G_s^{(0)}(s_m,t) W_s
    G_s^{(0)}(s_{m-1},s_m) W_s
    \cdots
    W_s G_s^{(0)}(s_1,s_2) 
    W_s G_s^{(0)}(-t,s_1)
\end{equation}
where
\begin{equation}
    G_s^{(0)} (s_\ii,s_\ff)
    = \begin{cases}
    \e^{-\ii (s_\ff-s_\ii)H_s}, 
    &\text{if }s_\ii \leqslant s_\ff < 0 \\
    \e^{-\ii (s_\ii-s_\ff)H_s},
    &\text{if }0 \leqslant s_\ii \leqslant s_\ff\\
    \e^{\ii s_\ff H_s}\hat{O}_s \e^{\ii s_\ii H_s},
    &\text{if } s_\ii < 0 \leqslant s_\ff
    \end{cases}.
    \label{Gs0_definition}
\end{equation}
The bath influence functional $\mathcal{L}_b$ satisfies the following Wick's theorem \cite{Negele1988}
\begin{equation*}
    \mathcal{L}_b (s_1,\cdots,s_m) 
    = \begin{cases}
    0, & \quad\text{if $m$ is odd} \\
    \displaystyle\sum_{\mathfrak{q}\in\mathcal{Q}(\boldsymbol{s})}
    \prod_{(s_j,s_k)\in\mathfrak{q}}
    B(s_j,s_k), & \quad\text{if $m$ is even} 
    \end{cases}
\end{equation*}
where $B(\tau_1,\tau_2)$ is a complex-valued functional
\begin{equation}
    B(\tau_1,\tau_2) = \frac{1}{\pi}
    \int_0^{+\infty} J(\omega) \left[
    \coth\left(\frac{\beta\omega}{2}\right)
    \cos(\omega \Delta \tau) - \ii \sin(\omega \Delta \tau)
    \right]\dd \omega
    \label{bath_function_B}
\end{equation}
with $\Delta \tau = \vert \tau_1 \vert - \vert \tau_2 \vert$
and $J(\omega)$ being the spectral density.
The set $\mathcal{Q}(\boldsymbol{s})$ is given by
\begin{equation}
\label{Q}
\begin{split}
    \mathcal{Q}(s_1,\cdots,s_m)
    = \left.\Big\{
    \left\{
    (s_{j_1},s_{k_1}),\cdots,(s_{j_{m/2}},s_{k_{m/2}})   
    \right\}
    \right\vert&
    \{j_1,\cdots,j_{m/2},k_1,\cdots,k_{m/2}\} = \{1,\cdots,m\}, \\
    &j_l < k_l \text{ for any }l=1,\cdots,m/2
    \Big\}.
\end{split}
\end{equation}
For example, 
\begin{align*}
     &\mathcal{Q}(s_1,s_2) = \big\{\{(s_1,s_2)\}\big\},\\
     &  \mathcal{Q}(s_1,s_2,s_3,s_4) = \big\{\{(s_1,s_2),(s_3,s_4)\},\{(s_1,s_3),(s_2,s_4)\},\{(s_1,s_4),(s_2,s_3)\}\big\}
\end{align*}
and the corresponding bath influence functionals are formulated as 
\begin{align*}
    &\Ls_b(s_1,s_2) = B(s_1,s_2),\\
    & \Ls_b(s_1,s_2,s_3,s_4) = B(s_1,s_2)B(s_3,s_4)+ B(s_1,s_3)B(s_2,s_4) + B(s_1,s_4)B(s_2,s_3).
\end{align*}
Inserting the definition of $\Ls_b$ into \eqref{fullPropagator}, the series now only sums over the even terms as the bath influence functional vanishes when $m$ is odd:   
\begin{equation}
    \begin{split}
    G(-t,t) = & \ \e^{\ii t H_s} \hat{O}_s \e^{-\ii t H_s}
    + \sum_{\substack{m=2\\m \text{~is even}}}^{+\infty}
   \int_{-t\leqslant \boldsymbol{s} \leqslant t}
    \dd \boldsymbol{s} \sum_{\mathfrak{q}\in\mathcal{Q}(\boldsymbol{s})}
    (-1)^{\#\{\boldsymbol{s}<0\}} \ii^m
    \mathcal{U}^{(0)} (-t,\boldsymbol{s},t)
    \prod_{(s_j,s_k)\in\mathfrak{q}}
    B(s_j,s_k)  \\
    = & \ \e^{\ii t H_s} \hat{O}_s \e^{-\ii t H_s}
    +  \int_{-t\leqslant s_1 \leqslant s_2 \leqslant t} \dd s_1 \dd s_2  (-1)^{\#\{\bs<0\}}\ii^2   \mathcal{U}^{(0)} (-t,s_1,s_2,t) B(s_1,s_2)\\
    & + \int_{-t\leqslant s_1 \leqslant s_2 \leqslant s_3 \leqslant s_4 \leqslant t} \dd s_1 \dd s_2 \dd s_3 \dd s_4  (-1)^{\#\{\bs<0\}}\ii^4   \mathcal{U}^{(0)} (-t,s_1,s_2,s_3,s_4,t) B(s_1,s_2)B(s_3,s_4)\\
    & + \int_{-t\leqslant s_1 \leqslant s_2 \leqslant s_3 \leqslant s_4 \leqslant t} \dd s_1 \dd s_2 \dd s_3 \dd s_4  (-1)^{\#\{\bs<0\}}\ii^4   \mathcal{U}^{(0)} (-t,s_1,s_2,s_3,s_4,t) B(s_1,s_3)B(s_2,s_4)\\    
    & + \int_{-t\leqslant s_1 \leqslant s_2 \leqslant s_3 \leqslant s_4 \leqslant t} \dd s_1 \dd s_2 \dd s_3 \dd s_4  (-1)^{\#\{\bs<0\}}\ii^4   \mathcal{U}^{(0)} (-t,s_1,s_2,s_3,s_4,t) B(s_1,s_4)B(s_2,s_3)+\cdots. 
    \end{split}
    \label{fullPropagator Dyson}
\end{equation}

The above expansion of Dyson series can also be expressed by the following diagrammatic representation: 
\begin{equation} \label{fullpropagator diagram}
   \begin{split}
&\begin{tikzpicture}[anchor=base, baseline] 
 \fill [black] (-1.4,-0.13) rectangle (1.4,0.13);
 \draw [thick] (-1.4,-0.18)--(-1.4,0.18); \draw [thick] (1.4,-0.18)--(1.4,0.18);
 \node [below] at (-1.4,-0.18) {$-t$}; 
  \node [below] at (1.4,-0.18) {$t$}; 
\end{tikzpicture}
 = 
 \begin{tikzpicture}[anchor=base, baseline] 
 \draw [thick] (-1.4,0) -- (1.4,0);
 \draw [thick] (-1.4,-0.1)--(-1.4,0.1); \draw [thick] (1.4,-0.1)--(1.4,0.1);
 \node [below] at (-1.4,-0.1) {$-t$}; 
  \node [below] at (1.4,-0.1) {$t$}; 
\end{tikzpicture}
+ 
 \begin{tikzpicture}[anchor=base, baseline] 
 \draw [thick] (-1.4,0) -- (1.4,0);
 \draw [thick] (-1.4,-0.1)--(-1.4,0.1); \draw [thick] (1.4,-0.1)--(1.4,0.1);
 \node [below] at (-1.4,-0.1) {$-t$}; 
  \node [below] at (1.4,-0.1) {$t$}; 
  \draw[-] (-1,0) to[bend left=60] (1,0);
  \draw plot[only marks,mark =*, mark options={color=black, scale=0.5}]coordinates {(-1,0)(1,0)};
  \node [below] at (-1,0) {$s_1$};
    \node [below] at (1,0) {$s_2$};
 \end{tikzpicture}
 + \\
& \hspace{2cm} + \begin{tikzpicture}[anchor=base, baseline] 
 \draw [thick] (-1.4,0) -- (1.4,0);
 \draw [thick] (-1.4,-0.1)--(-1.4,0.1); \draw [thick] (1.4,-0.1)--(1.4,0.1);
 \node [below] at (-1.4,-0.1) {$-t$}; 
  \node [below] at (1.4,-0.1) {$t$}; 
  \draw[-] (-1,0) to[bend left=75] (-0.2,0);
    \draw[-] (0.2,0) to[bend left=75] (1,0);
  \draw plot[only marks,mark =*, mark options={color=black, scale=0.5}]coordinates {(-1,0) (-0.2,0) (0.2,0) (1,0)};
  \node [below] at (-1,0) {$s_1$};  \node [below] at (-0.2,0) {$s_2$};   \node [below] at (0.2,0) {$s_3$};\node [below] at (1,0) {$s_4$};
\end{tikzpicture}
+ \begin{tikzpicture}[anchor=base, baseline] 
 \draw [thick] (-1.4,0) -- (1.4,0);
 \draw [thick] (-1.4,-0.1)--(-1.4,0.1); \draw [thick] (1.4,-0.1)--(1.4,0.1);
 \node [below] at (-1.4,-0.1) {$-t$}; 
  \node [below] at (1.4,-0.1) {$t$}; 
  \draw[-] (-1,0) to[bend left=75] (0.2,0);
    \draw[-] (-0.2,0) to[bend left=75] (1,0);
  \draw plot[only marks,mark =*, mark options={color=black, scale=0.5}]coordinates {(-1,0) (-0.2,0) (0.2,0) (1,0)};
  \node [below] at (-1,0) {$s_1$};  \node [below] at (-0.2,0) {$s_2$};   \node [below] at (0.2,0) {$s_3$};\node [below] at (1,0) {$s_4$};
\end{tikzpicture}
+ \begin{tikzpicture}[anchor=base, baseline] 
 \draw [thick] (-1.4,0) -- (1.4,0);
 \draw [thick] (-1.4,-0.1)--(-1.4,0.1); \draw [thick] (1.4,-0.1)--(1.4,0.1);
 \node [below] at (-1.4,-0.1) {$-t$}; 
  \node [below] at (1.4,-0.1) {$t$}; 
  \draw[-] (-1,0) to[bend left=75] (1,0);
    \draw[-] (-0.4,0) to[bend left=75] (0.4,0);
  \draw plot[only marks,mark =*, mark options={color=black, scale=0.5}]coordinates {(-1,0) (-0.4,0) (0.4,0) (1,0)};
  \node [below] at (-1,0) {$s_1$};  \node [below] at (-0.4,0) {$s_2$};   \node [below] at (0.4,0) {$s_3$};\node [below] at (1,0) {$s_4$};
\end{tikzpicture}+ \cdots 
   \end{split}
\end{equation}
where the propagator $G(-t,t)$ on left-hand side is denoted by a bold line. On the right-hand side, each diagram represents an integral of the form
\begin{displaymath}
\int_{-t\leqslant \bs \leqslant t} \dd \bs  (-1)^{\#\{\bs<0\}}\ii^m   \mathcal{U}^{(0)} (-t,\bs,t) B(\cdot,\cdot)B(\cdot,\cdot)\cdots.
\end{displaymath}
In detail, each thin line segment connecting two adjacent time points
  $s_j$ and $s_{j+1}$ means a bare propagator
  $G_s^{(0)}(s_j,s_{j+1})$ appearing in the system associated functional $\mathcal{U}^{(0)}$; 
  a black dot at $s_j$ introduces a perturbation operator $\sgn(s_j) \ii W_s$ in $\mathcal{U}^{(0)}$; 
  an arc connecting $s_j$ and $s_k$ stands for the two-point correlation $B(s_j,s_k)$.
In particular, the first term $\e^{\ii t H_s} \hat{O}_s \e^{-\ii t H_s}$ on the right-hand side of \eqref{fullPropagator} is denoted by a thin line with no dots or arcs (the first diagram on the right-hand side of \eqref{fullpropagator diagram}).  
  
In general, the integral over $m$ time points
\begin{equation}
\int_{-t\leqslant \boldsymbol{s} \leqslant t}
    \dd \boldsymbol{s} \sum_{\mathfrak{q}\in\mathcal{Q}(\boldsymbol{s})}
    (-1)^{\#\{\boldsymbol{s}<0\}} \ii^m
    \mathcal{U}^{(0)} (-t,\boldsymbol{s},t)
    \prod_{(s_j,s_k)\in\mathfrak{q}}
    B(s_j,s_k),
    \label{integral}
\end{equation}
which appears in the expansion of $G(-t,t)$ (see \eqref{fullPropagator Dyson}), is the sum of all possible combinations of the pairings of $(s_1,\cdots,s_m)$, which includes $(m-1)!!$ diagrams in total. For example, when $m = 4$, the integral \eqref{integral} is diagrammatically represented by
\begin{displaymath}
\begin{tikzpicture}[anchor=base, baseline] 
 \draw [thick] (-1.4,0) -- (1.4,0);
 \draw [thick] (-1.4,-0.1)--(-1.4,0.1); \draw [thick] (1.4,-0.1)--(1.4,0.1);
 \node [below] at (-1.4,-0.1) {$-t$}; 
  \node [below] at (1.4,-0.1) {$t$}; 
  \draw[-] (-1,0) to[bend left=75] (-0.2,0);
    \draw[-] (0.2,0) to[bend left=75] (1,0);
  \draw plot[only marks,mark =*, mark options={color=black, scale=0.5}]coordinates {(-1,0) (-0.2,0) (0.2,0) (1,0)};
  \node [below] at (-1,0) {$s_1$};  \node [below] at (-0.2,0) {$s_2$};   \node [below] at (0.2,0) {$s_3$};\node [below] at (1,0) {$s_4$};
\end{tikzpicture}
+ \begin{tikzpicture}[anchor=base, baseline] 
 \draw [thick] (-1.4,0) -- (1.4,0);
 \draw [thick] (-1.4,-0.1)--(-1.4,0.1); \draw [thick] (1.4,-0.1)--(1.4,0.1);
 \node [below] at (-1.4,-0.1) {$-t$}; 
  \node [below] at (1.4,-0.1) {$t$}; 
  \draw[-] (-1,0) to[bend left=75] (0.2,0);
    \draw[-] (-0.2,0) to[bend left=75] (1,0);
  \draw plot[only marks,mark =*, mark options={color=black, scale=0.5}]coordinates {(-1,0) (-0.2,0) (0.2,0) (1,0)};
  \node [below] at (-1,0) {$s_1$};  \node [below] at (-0.2,0) {$s_2$};   \node [below] at (0.2,0) {$s_3$};\node [below] at (1,0) {$s_4$};
\end{tikzpicture}
+ \begin{tikzpicture}[anchor=base, baseline] 
 \draw [thick] (-1.4,0) -- (1.4,0);
 \draw [thick] (-1.4,-0.1)--(-1.4,0.1); \draw [thick] (1.4,-0.1)--(1.4,0.1);
 \node [below] at (-1.4,-0.1) {$-t$}; 
  \node [below] at (1.4,-0.1) {$t$}; 
  \draw[-] (-1,0) to[bend left=75] (1,0);
    \draw[-] (-0.4,0) to[bend left=75] (0.4,0);
  \draw plot[only marks,mark =*, mark options={color=black, scale=0.5}]coordinates {(-1,0) (-0.4,0) (0.4,0) (1,0)};
  \node [below] at (-1,0) {$s_1$};  \node [below] at (-0.4,0) {$s_2$};   \node [below] at (0.4,0) {$s_3$};\node [below] at (1,0) {$s_4$};
\end{tikzpicture}.
\end{displaymath}
To evaluate $G(-t,t)$ using the series expansion \eqref{fullPropagator Dyson}, one natural idea is to truncate this Dyson series at $m = \bar{M}$ for a sufficiently large even integer $\bar{M}$ and compute the high-dimensional integrals using Monte Carlo method, resulting in the method of bare dQMC \cite{prokof1998polaron,werner2009diagrammatic} which evaluates $G(-t,t)$ by 
\begin{equation}\label{bare dqmc}
  G(-t, t)  \approx  \e^{\ii t H_s} \hat{O}_s \e^{- \ii  t H_s} + \sum^{\bar{M}}_{\substack{m=2 \\ m \text{~is even}}} \frac{1}{\Ms^{(m)}} \  \sum^{\Ms^{(m)}}_{i = 1} \  \frac{(2t)^m}{m!}  \cdot    \ii^{m} 
(-1)^{\#\{\bs_m^{(i)} < 0\}}  \mathcal{U}^{(0)}(-t, \bs_m^{(i)} , t) 
    \mathcal{L}_b(\bs_m^{(i)}) 
\end{equation}
where for simplicity, the Monte Carlo samples $\{\bs_m^{(i)}\}_{i=1}^{\Ms^{(m)}}$ are drawn independently according to the uniform distribution $U(-t \leqslant \bs_m \leqslant t)$ with $\bs_m = (s_1,\cdots,s_m)$, and $\frac{(2t)^m}{m!}$ in the formula above is the volume of the domain of integration $\{-t \leqslant \bs_m \leqslant t\}$. One may also apply other sampling strategy in general. 

Alternatively, we can also solve $G(-t,t)$ via the following integro-differential equation, which is obtained by taking time derivative on both sides of \eqref{fullPropagator} (see \cite{cai2022fast} for the derivation): 
\begin{equation}
    \frac{\dd }{\dd t}G(-t,t) = \ii [H_s,G(-t,t)] + W_s \mathcal{K}(-t,t) + \left(W_s \mathcal{K}(-t,t)\right)^\dagger, \qquad \forall t > 0,
  \label{integro_differential_eq}
\end{equation}
where $\mathcal{K}(-t,t)$ stands for the series:
\begin{equation}\label{K dyson}
\mathcal{K}(-t,t) = \mathcal{K}^{(0)}(-t,t) \coloneqq \sum_{\substack{m=1 \\ m \text{~is odd}}}^{+\infty}  \ii^{m+1} \int_{-t\leqslant \bs \leqslant t} \dd \bs  (-1)^{\#\{\bs < 0\}} \mathcal{U}^{(0)}(-t, \bs , t)  \Ls_b(\bs,t)  
\end{equation}
and the initial condition is given by $G(0,0)=\hat{O}_s$ according to definition \eqref{fullPropagator}. Here we use the notation $\mathcal{K}^{(0)}(-t,t)$ to denote the series with expression \eqref{K dyson} in order to distinguish it from other formulations of $\mathcal{K}(-t,t)$ to be introduced later in this paper.
Note that now in the function $\Ls_b$, the last time point is fixed as $t$, and thus the series on the right-hand side now sums over odd $m$. Diagrammatically, this series is written as  
\begin{equation} \label{diagram dyson}
 \mathcal{K}^{(0)}(-t,t) = \begin{tikzpicture}[anchor=base, baseline] 
 \draw [thick] (-1.4,0) -- (1.4,0);
 \draw [thick] (-1.4,-0.1)--(-1.4,0.1); \draw [thick] (1.4,-0.1)--(1.4,0.1);
 \node [below] at (-1.4,-0.1) {$-t$}; 
  \node [below] at (1.4,-0.1) {$t$}; 
  \draw[-] (-1,0) to[bend left=60] (1.4,0);
  \draw plot[only marks,mark =*, mark options={color=black, scale=0.5}]coordinates {(-1,0)};
  \node [below] at (-1,0) {$s_1$};
 \end{tikzpicture}
+ \begin{tikzpicture}[anchor=base, baseline] 
 \draw [thick] (-1.4,0) -- (1.4,0);
 \draw [thick] (-1.4,-0.1)--(-1.4,0.1); \draw [thick] (1.4,-0.1)--(1.4,0.1);
 \node [below] at (-1.4,-0.1) {$-t$}; 
  \node [below] at (1.4,-0.1) {$t$}; 
  \draw[-] (-1,0) to[bend left=75] (-0.2,0);
    \draw[-] (0.6,0) to[bend left=75] (1.4,0);
  \draw plot[only marks,mark =*, mark options={color=black, scale=0.5}]coordinates {(-1,0) (-0.2,0) (0.6,0)};
  \node [below] at (-1,0) {$s_1$};  \node [below] at (-0.2,0) {$s_2$};   \node [below] at (0.6,0) {$s_3$};
\end{tikzpicture}
+ \begin{tikzpicture}[anchor=base, baseline] 
 \draw [thick] (-1.4,0) -- (1.4,0);
 \draw [thick] (-1.4,-0.1)--(-1.4,0.1); \draw [thick] (1.4,-0.1)--(1.4,0.1);
 \node [below] at (-1.4,-0.1) {$-t$}; 
  \node [below] at (1.4,-0.1) {$t$}; 
  \draw[-] (-1,0) to[bend left=75] (0.6,0);
    \draw[-] (-0.2,0) to[bend left=75] (1.4,0);
  \draw plot[only marks,mark =*, mark options={color=black, scale=0.5}]coordinates {(-1,0) (-0.2,0) (0.6,0)};
  \node [below] at (-1,0) {$s_1$};  \node [below] at (-0.2,0) {$s_2$};   \node [below] at (0.6,0) {$s_3$};
\end{tikzpicture}
+ \begin{tikzpicture}[anchor=base, baseline] 
 \draw [thick] (-1.4,0) -- (1.4,0);
 \draw [thick] (-1.4,-0.1)--(-1.4,0.1); \draw [thick] (1.4,-0.1)--(1.4,0.1);
 \node [below] at (-1.4,-0.1) {$-t$}; 
  \node [below] at (1.4,-0.1) {$t$}; 
  \draw[-] (-1,0) to[bend left=75] (1.4,0);
    \draw[-] (-0.2,0) to[bend left=75] (0.6,0);
  \draw plot[only marks,mark =*, mark options={color=black, scale=0.5}]coordinates {(-1,0) (-0.2,0) (0.6,0)};
  \node [below] at (-1,0) {$s_1$};  \node [below] at (-0.2,0) {$s_2$};   \node [below] at (0.6,0) {$s_3$};
\end{tikzpicture} + \cdots .
\end{equation}
With this integro-differential equation, one can solve $G(-t,t)$ iteratively by Runge-Kutta method. Throughout this paper, we apply the second-order Heun's method, which reads
\begin{equation}
  \label{heun}
  \begin{split}
     & G_\star =  G_n + \Delta t \left( \ii[H_s,G_n] + W_s \mathcal{K}_n + \left(  W_s \mathcal{K}_n\right)^\dagger     \right), \\  
    & G_{\star\star} = G_\star + \Delta t \left( \ii[H_s,G_\star] + W_s \mathcal{K}_{n+1} + \left(  W_s \mathcal{K}_{n+1}\right)^\dagger     \right),  \\
    & G_{n+1} = \frac{1}{2}\left(  G_n + G_{\star\star}  \right)
    \end{split}
\end{equation}
where $\Delta t$ is the time step length. The initial condition is given by $G_0 = \hat{O}_s$. $G_n$ is the approximated solution of $G(-t_n,t_n)$ with $t_n = n\dt$, and $\mathcal{K}_n$ approximates $\mathcal{K}(-t_n,t_n)$ again by truncating the series and replacing the integrals by the average of Monte Carlo samples:  
\begin{equation}
 \label{Kn}
\mathcal{K}_n =  \sum_{\substack{m=1 \\ m \text{~is odd}}}^{\bar{M}} \frac{1}{\bar{\Ms}^{(m)}_n}  \   \sum^{\bar{\Ms}^{(m)}_n}_{i = 1} \ \frac{(2t_n)^m}{m!} \cdot    \ii^{m+1} 
(-1)^{\#\{\bs_m^{(i)} < 0\}}  \mathcal{U}^{(0)}(-t_n, \bs_m^{(i)} , t_n) 
    \mathcal{L}_b(\bs_m^{(i)},t_n) 
\end{equation}
In $n$th time step, the samples $\{\bs_m^{(i)}\}_{i=1}^{\bar{\Ms}^{(m)}_n}$ are drawn from uniform distribution $U(-t_n \leqslant \bs_m \leqslant t_n)$. In practice, the number of samples $\bar{\Ms}^{(m)}_n$ is assumed to be proportional to the integral of the absolute value of bath influence functional
\begin{equation*}
    \bar{\mathcal{M}}^{(m)}_n \propto \int_{-t_n \leqslant \bs_m \leqslant t_n} \dd \bs |\Ls_b(\bs,t_n)|  = \int_{-t_n \leqslant \bs_m \leqslant t_n} \dd \bs  \left| \sum_{\mathfrak{q} \in \mathcal{Q}(\bs,t_n)} \prod_{(s_j,s_k) \in \mathfrak{q}} B(s_j,s_k) \right|.
\end{equation*}
The two-point correlation is then replaced by some empirical constant $\mathcal{B}\in(0,\max\vert B \vert)$, which yields
\begin{equation}\label{Mnm1}
     \bar{\mathcal{M}}^{(m)}_n =    \mathcal{M}_0 \cdot  \left\vert \{-t_n \leqslant \bs_m \leqslant t_n\}\right\vert \cdot m!! \mathcal{B}^{\frac{m+1}{2}} 
=   \mathcal{M}_0 \cdot \frac{(2t_n)^m}{(m-1)!!} \cdot \mathcal{B}^{\frac{m+1}{2}}
\end{equation}
where $\mathcal{M}_0$ is some positive constant. In practice, we may first assign the value of $\mathcal{M}_0$, then the general $\bar{\mathcal{M}}^{(m)}_n$ is set to be the nearest integer of the value \eqref{Mnm1}.

Compared with the bare dQMC \eqref{bare dqmc}, the numerical methods based on \eqref{integro_differential_eq} preserve the Hermitian property of $G(-t,t)$ as the right-hand side of \eqref{integro_differential_eq} is always Hermitian, while this is not guaranteed by bare dQMC which directly evaluates the Dyson series \eqref{fullPropagator}. 
With the time evolution of $G(-t,t)$ provided by the numerical scheme, we are ready to propose our fast algorithms based on calculation reuse which will be detailed in the following section.


\subsection{Numerical method based on reuse of calculations}
\label{Sec_Dyson_reuse}
The main idea of the reuse of calculations concentrates on the following relations of the bare propagator $G_s^{(0)}(\cdot,\cdot)$ and two-point correlation $B(\cdot,\cdot)$:
 \begin{proposition}
 Given any $s_\ii \leqslant s_\ff$ and $\Delta t \geqslant 0$, we have 
 \begin{itemize}
     \item The bare propagator \eqref{Gs0_definition} associated with the system satisfies
     \begin{equation}
     \label{properties_Gs0}
     G_s^{(0)}(s_\ii,s_\ff)
        = \left\{ \begin{array}{l l}
         G_s^{(0)} (s_\ii -\dt, s_\ff-\dt), & \text{for~} s_\ii \leqslant s_\ff < 0, \\
         G_s^{(0)} (s_\ii +\dt, s_\ff+\dt), & \text{for~}  0 \leqslant s_\ii \leqslant s_\ff
              \end{array} \right.
     \end{equation}
     \item The two-point correlation \eqref{bath_function_B} satisfies
     \begin{equation}
     \label{properties_B}
       B(s_\ii,s_\ff)
        = \left\{ \begin{array}{l l}
       B(s_\ii-\dt,s_\ff-\dt), & \text{for~} s_\ii \leqslant s_\ff < 0, \\
        B(s_\ii+\dt,s_\ff+\dt), & \text{for~}  0 \leqslant s_\ii \leqslant s_\ff, \\
        B(s_\ii-\dt,s_\ff+\dt), & \text{for~}  s_\ii < 0 \leqslant s_\ff. 
              \end{array} \right.
     \end{equation}
\end{itemize}
 \end{proposition}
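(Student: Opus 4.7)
The plan is to read off both identities directly from the form of the definitions \eqref{Gs0_definition} and \eqref{bath_function_B}: in every relevant case, $G_s^{(0)}$ or $B$ depends on its arguments only through a single combination that is manifestly invariant under the prescribed shift. So the proof is really a case-by-case verification, with the only extra point being that the shifted arguments still fall in the same branch of the piecewise definition.

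For \eqref{properties_Gs0}, I would split along the two cases of \eqref{Gs0_definition} that appear in the statement. In the first case $s_\ii \leqslant s_\ff < 0$, the definition gives $G_s^{(0)}(s_\ii,s_\ff) = \e^{-\ii(s_\ff - s_\ii)H_s}$, which depends only on the difference $s_\ff - s_\ii$; since $\Delta t \geqslant 0$, the shifted pair $(s_\ii - \Delta t, s_\ff - \Delta t)$ still satisfies $s_\ii - \Delta t \leqslant s_\ff - \Delta t < 0$, so it lies in the same branch and produces the same exponent. The case $0 \leqslant s_\ii \leqslant s_\ff$ is symmetric: $G_s^{(0)}$ depends only on $s_\ii - s_\ff$, a positive shift of both arguments preserves that difference, and $0 \leqslant s_\ii + \Delta t$ keeps us in the same branch.

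For \eqref{properties_B}, the key observation from \eqref{bath_function_B} is that $B(\tau_1,\tau_2)$ depends on its arguments solely through $\Delta\tau = |\tau_1| - |\tau_2|$. I would then check each of the three cases. For $s_\ii \leqslant s_\ff < 0$, $\Delta\tau = -s_\ii + s_\ff$, and shifting both arguments by $-\Delta t$ leaves them negative and leaves this combination unchanged. For $0 \leqslant s_\ii \leqslant s_\ff$, $\Delta\tau = s_\ii - s_\ff$, preserved under a common positive shift that keeps both arguments non-negative. For the mixed case $s_\ii < 0 \leqslant s_\ff$, $\Delta\tau = -s_\ii - s_\ff$, and here the asymmetric shift $(s_\ii - \Delta t,\, s_\ff + \Delta t)$ increases $|s_\ii|$ and $|s_\ff|$ by exactly the same amount $\Delta t$; the combination $-s_\ii - s_\ff$ is preserved, and by construction the sign pattern is preserved too.

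There is no real obstacle here: the entire proof is the observation that the quantities $s_\ff - s_\ii$, $s_\ii - s_\ff$, and $|\tau_1| - |\tau_2|$ on which the two functions depend are translation-invariant in the appropriate sense, together with the trivial check that the shifted arguments remain in the correct branch of their respective piecewise definitions. The mild point worth flagging explicitly is that the shift directions in \eqref{properties_Gs0} and \eqref{properties_B} are specifically chosen so that no argument ever crosses zero into a different branch.
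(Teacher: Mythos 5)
Your verification is correct and complete: each relevant branch of \eqref{Gs0_definition} depends only on the translation-invariant combination $s_\ff - s_\ii$ (or $s_\ii - s_\ff$), the correlation \eqref{bath_function_B} depends on its arguments only through $\Delta\tau = |\tau_1|-|\tau_2|$, and you rightly flag the one point needing care, namely that the prescribed shifts (downward for negative pairs, upward for non-negative pairs, outward for the mixed case) never move an argument across zero into a different branch. The paper itself contains no proof of this proposition, deferring it to the appendix of \cite{cai2022fast}, and the argument there is exactly this elementary case-by-case check, so your proposal is the intended reasoning.
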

 The rigorous proof for the relations above can be found in the appendix of \cite{cai2022fast}. These relations describe the invariance of $G_s^{(0)}(s_\ii,s_\ff)$ and $B(s_\ii,s_\ff)$ after ``shifting" or ``stretching" $(s_\ii,s_\ff)$ by the time step length $\Delta t$. We hope to utilize such properties to reuse the previous calculations in the scheme \eqref{heun}. In fact, one can easily use \eqref{properties_B} to derive the following property.
 \begin{proposition}
 Given $\bs = (s_1,\cdots,s_m)$ and $t>0$ such that $-t \leqslant s_1 \leqslant \cdots s_m \leqslant t$ and $\Delta t \geqslant 0$, define
\begin{equation} \label{I}
\mathcal{I}_{\dt}(\boldsymbol{s}) = (\tilde{s}_1,\cdots,\tilde{s}_m) \text{~where~}    \tilde{s}_k = \begin{cases}
        s_k + \dt, &\text{if } s_k \geqslant 0 \\
        s_k - \dt, &\text{if } s_k < 0
    \end{cases},
\end{equation}
we have 
\begin{equation}
     \label{shift_properties_Lb}
    \mathcal{L}_b(\mathcal{I}_{\dt}(\boldsymbol{s}),t+\dt)
    = \mathcal{L}_b(\boldsymbol{s},t).
\end{equation}
 \end{proposition}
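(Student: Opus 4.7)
The plan is to reduce the identity to the pointwise invariance \eqref{properties_B} of the two-point correlation by exploiting the fact that $\mathcal{L}_b$ is a sum of products of $B$'s indexed by pairings, where the pairing structure depends only on the number of time points involved and not on their values.

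First I would unfold the definition of $\mathcal{L}_b$ on both sides. Writing $\boldsymbol{s} = (s_1,\ldots,s_m)$ and appending the final time point $t$ (resp.\ $t+\dt$), both $\mathcal{L}_b(\boldsymbol{s},t)$ and $\mathcal{L}_b(\mathcal{I}_{\dt}(\boldsymbol{s}),t+\dt)$ involve $m+1$ time points. Since the set of admissible pairings $\mathcal{Q}$ defined in \eqref{Q} depends only on the cardinality of the tuple, there is a canonical bijection $\mathfrak{q}\mapsto\tilde{\mathfrak{q}}$ between pairings of the two tuples obtained by matching indices. If $m+1$ is odd, both sides vanish trivially, so I may assume $m$ is odd and $m+1$ is even (which is exactly the case relevant to \eqref{K dyson}).

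Next I would reduce the claim to showing that for every unordered pair of indices $(j,k)$, the corresponding two-point correlation is invariant, i.e.
\begin{equation*}
  B(\tilde{s}_j,\tilde{s}_k) = B(s_j,s_k),
\end{equation*}
with the convention that $s_{m+1}=t$ and $\tilde{s}_{m+1}=t+\dt$. A short case analysis on the signs of $s_j$ and $s_k$ then invokes the three branches of \eqref{properties_B}: both negative ($\tilde{s}_j=s_j-\dt$, $\tilde{s}_k=s_k-\dt$, use the first branch), both nonnegative (use the second branch, noting that the endpoint $t\geqslant 0$ is shifted to $t+\dt$, matching the second branch for the index $m+1$), and mixed signs with $s_j<0\leqslant s_k$ (use the third branch). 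In each subcase the definition of $\mathcal{I}_{\dt}$ in \eqref{I} produces exactly the shift pattern required by the corresponding branch of \eqref{properties_B}, so the invariance holds termwise.

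Finally, taking the product over $(s_j,s_k)\in\mathfrak{q}$ on one side and over $(\tilde{s}_j,\tilde{s}_k)\in\tilde{\mathfrak{q}}$ on the other, and summing over the common index set of pairings, yields \eqref{shift_properties_Lb}. I do not anticipate any genuine obstacle: the only subtlety is bookkeeping to make sure that (i) the final entry $t$ is always treated as nonnegative so its image is $t+\dt$ (consistent with \eqref{I} applied to a nonnegative time), and (ii) the ordering $s_j<s_k$ required for the branches of \eqref{properties_B} is preserved by $\mathcal{I}_{\dt}$, which it is since $\mathcal{I}_{\dt}$ is monotone on each of the half-lines $(-\infty,0)$ and $[0,\infty)$ and maps the negative half strictly below the nonnegative half.
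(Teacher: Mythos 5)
Your proposal is correct and takes essentially the same route the paper intends: the paper only asserts that ``one can easily use \eqref{properties_B} to derive'' the result, and your argument---identifying pairings of $(s_1,\dots,s_m,t)$ with those of $(\tilde{s}_1,\dots,\tilde{s}_m,t+\dt)$ through their common index sets and applying the three branches of \eqref{properties_B} factorwise to each $B(s_j,s_k)$, with the parity and ordering bookkeeping you note---is exactly that derivation made explicit. One small streamlining: by \eqref{I} every point satisfies $\vert\tilde{s}_k\vert = \vert s_k\vert + \dt$ (including the appended endpoint $t\mapsto t+\dt$), so $\Delta\tau = \vert\tau_1\vert-\vert\tau_2\vert$ in \eqref{bath_function_B} is unchanged for every pair simultaneously, which collapses your three-way sign case analysis into a single observation.
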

In \cite{cai2022fast}, fast algorithms that avoid duplicate computations of $\Ls_b$ have been proposed based on such invariance of the bath influence functional.
Specifically, suppose we have sampled time sequences $\{\bs_n^{(i)}\}_{i=1}^{\bar{\Ms}_n}$ and calculated the corresponding bath influence functionals $\{\Ls_b(\bs_n^{(i)},t_n)\}_{i=1}^{\bar{\Ms}_n}$ for $\mathcal{K}_n$ \eqref{Kn}. Then at the next time step, we can use $\{\mathcal{I}_{\Delta t}(\bs_n^{(i)})\}_{i=1}^{\bar{\Ms}_n}$ as part of the samples to compute $\mathcal{K}_{n+1}$ so that $\{\Ls_b(\mathcal{I}_{\Delta t}(\bs_n^{(i)}),t_{n+1})\}_{i=1}^{\bar{\Ms}_n}$ can be directly obtained by \eqref{shift_properties_Lb}. One may refer to \cite{cai2022fast} for more details.
 
However, due to the existence of the observable $\hat{O}_s$ in the definition of $G_s^{(0)}(s_\ii,s_\ff)$ \eqref{Gs0_definition} when $s_\ii < 0 \leqslant s_\ff$, there is no similar invariance for the system associated functional $\mathcal{U}^{(0)}$. Consequently, in the reuse algorithm developed in \cite{cai2022fast}, $\mathcal{U}^{(0)}$ needs to be evaluated with all samples at each time step. In the rest of this section, we propose a new fast algorithm for Dyson series which can achieve the reuse of $\mathcal{K}^{(0)}(-t,t)$ as a whole. To do this, we introduce the linear basis of $G_s^{(0)}(s_\ii,s_\ff)$, $\mathcal{U}^{(0)}(-t,\bs,t)$ and $\mathcal{K}^{(0)}(-t,t)$:
\begin{definition}
\label{defineBases}
    Given two states $\ket{i},\ket{j} \in \{\ket{-1},\ket{1}\}$ and $t \geqslant 0$, we 
     define
    \begin{equation}
     \label{basis}
        \mathscr{G}_{ij}^{(0)} (s_\ii,s_\ff)
        = \begin{cases}
            G_s^{(0)}(s_\ii,s_\ff),
            &\text{ if }s_\ii \leqslant s_\ff < 0
            \text{ or }0\leqslant s_\ii \leqslant s_\ff \\
            \e^{\ii s_\ff H_s} \dyad{i}{j} \e^{\ii s_\ii H_s},
            &\text{ if } s_\ii < 0 \leqslant s_\ff
        \end{cases},
    \end{equation}
        \begin{equation}\label{U0 basis}
    \mathscr{U}_{ij}^{(0)}(-t,\boldsymbol{s},t)
    = \mathscr{G}_{ij}^{(0)}(s_m,t) W_s \mathscr{G}_{ij}^{(0)}(s_{m-1},s_m)
    W_s \cdots W_s 
    \mathscr{G}_{ij}^{(0)}(s_1,s_2)
    W_s
    \mathscr{G}_{ij}^{(0)}(-t,s_1),
    \end{equation}
    and
        \begin{equation}
        \label{Sscr}
        \mathscr{K}_{ij}^{(0)}(-t,t)
        = \sum^{+\infty}_{\substack{m=1\\m \text{ is odd}}} 
        \ii^{m+1}
        \int_{-t\leqslant \boldsymbol{s} \leqslant t}
        \dd \boldsymbol{s} (-1)^{\#\{\boldsymbol{s}<0\}}
        \mathscr{U}_{ij}^{(0)}(-t,\boldsymbol{s},t) \mathcal{L}_b(\boldsymbol{s},t).
  \end{equation}
\end{definition}
With the definitions above, the bare propagator $G_s^{(0)}(s_k,s_{k+1})$ for $s_k < 0 \leqslant s_{k+1}$ can now be expressed as the linear combination 
\begin{equation}\label{bare propagator basis}
    G_{s}^{(0)}(s_k,s_{k+1}) =  \e^{\ii s_{k+1} H_s} \hat{O}_s\e^{\ii s_k H_s}
    = \sum_{i,j} a_{ij} \mathscr{G}_{ij}^{(0)}(s_k,s_{k+1})
\end{equation}
 where the coefficients $a_{ij}$ satisfies that 
\begin{equation} \label{aij}
 \hat{O}_s = \displaystyle \sum_{i,j} a_{ij} \dyad{i}{j}.
\end{equation}
Inserting \eqref{bare propagator basis} into \eqref{U0}, the system associated functional $\mathcal{U}^{(0)}(-t,\bs,t)$ is written as 
linear combination of $\mathscr{U}_{ij}^{(0)}$:
\begin{equation*}
    \mathcal{U}^{(0)} (-t,\boldsymbol{s},t)
    = \sum_{i,j} a_{ij} \mathscr{U}_{ij}^{(0)}(-t,\boldsymbol{s},t),
\end{equation*}
and thus we may express  
\begin{equation*}
    \mathcal{K}^{(0)}(-t,t) = \sum_{i,j} a_{ij} \mathscr{K}_{ij}^{(0)}(-t,t). 
\end{equation*}

At this point, we can rewrite the original integro-differential equation \eqref{integro_differential_eq} as
\begin{equation}
    \frac{\dd}{\dd t} G(-t,t) 
    = \ii \left[H_s,G(-t,t)\right]
    + \sum_{i,j} a_{ij} W_s\mathscr{K}_{ij}^{(0)}(-t,t)
    + \left(\sum_{i,j} a_{ij} W_s\mathscr{K}_{ij}^{(0)}(-t,t)\right)^\dagger,
    \label{intego_differential_compact_form}
\end{equation}
which can again be numerically solved using a discretization similar to \eqref{heun}. Under such scheme, we aim to find some recurrence relation of the basis $\mathscr{K}_{ij}^{(0)}(-t,t)$ so that it can be computed iteratively.
Below we begin with the iterative relation of $\mathscr{U}_{ij}^{(0)}$:
\begin{lemma}
\label{lemma:Uscr_shift}
   Given two states $\ket{i},\ket{j} \in \{\ket{-1},\ket{1}\}$ and $\dt \geqslant 0$, we have 
   \begin{equation}
            \mathscr{U}_{ij}^{(0)}(-t-\dt,\mathcal{I}_{\dt}(\boldsymbol{s}),t+\dt)
            = \sum_{k,l} b_{kl}^{ij} \mathscr{U}_{kl}^{(0)}(-t,\boldsymbol{s},t),
            \label{shift_properties_Uscr}
        \end{equation}
   where the coefficients $b_{kl}^{ij}$ satisfy that 
   \begin{equation} \label{bij}
       \e^{\ii \dt H_s} \dyad{i}{j} \e^{-\ii \dt H_s} = \sum_{k,l} b_{kl}^{ij}\dyad{k}{l}.
   \end{equation}
   \end{lemma}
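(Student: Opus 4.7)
The plan is to trace through the definition of $\mathscr{U}_{ij}^{(0)}$ factor by factor and show that the $\mathcal{I}_{\dt}$ transformation leaves all but one factor invariant, while the remaining factor picks up exactly the operator $\e^{\ii \dt H_s}\dyad{i}{j}\e^{-\ii \dt H_s}$ that appears in \eqref{bij}. I would set the convention $s_0 = -t$ and $s_{m+1} = t$, so that
\begin{equation*}
\mathscr{U}_{ij}^{(0)}(-t,\boldsymbol{s},t) = \prod_{k=0}^{m} \mathscr{G}_{ij}^{(0)}(s_k,s_{k+1}) \cdot (\text{perturbation operators}),
\end{equation*}
and note that since $s_0 < 0 \leqslant s_{m+1}$, there is a unique index $k^\star\in\{0,1,\ldots,m\}$ at which the sign changes, i.e.\ $s_{k^\star} < 0 \leqslant s_{k^\star+1}$. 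This crossing index is preserved under $\mathcal{I}_{\dt}$, which by \eqref{I} shifts $s_k$ to $s_k - \dt$ for $k\leqslant k^\star$ and to $s_k + \dt$ for $k \geqslant k^\star+1$.

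For each pair $(s_k, s_{k+1})$ with $k \ne k^\star$, both endpoints have the same sign, so $\mathscr{G}_{ij}^{(0)}$ reduces to $G_s^{(0)}$ by the first two branches of \eqref{basis}, and the invariance \eqref{properties_Gs0} gives $\mathscr{G}_{ij}^{(0)}(\tilde s_k,\tilde s_{k+1}) = \mathscr{G}_{ij}^{(0)}(s_k,s_{k+1})$, independent of the indices $i,j$. For the single crossing pair $k = k^\star$, the third branch of \eqref{basis} gives
\begin{equation*}
\mathscr{G}_{ij}^{(0)}(s_{k^\star}-\dt,\, s_{k^\star+1}+\dt) = \e^{\ii(s_{k^\star+1}+\dt)H_s}\dyad{i}{j}\e^{\ii(s_{k^\star}-\dt)H_s} = \e^{\ii s_{k^\star+1}H_s}\bigl(\e^{\ii \dt H_s}\dyad{i}{j}\e^{-\ii \dt H_s}\bigr)\e^{\ii s_{k^\star}H_s},
\end{equation*}
and substituting \eqref{bij} converts this into $\sum_{k,l} b_{kl}^{ij}\,\mathscr{G}_{kl}^{(0)}(s_{k^\star},s_{k^\star+1})$.

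Inserting this identity back into the product defining $\mathscr{U}_{ij}^{(0)}(-t-\dt,\mathcal{I}_{\dt}(\boldsymbol{s}),t+\dt)$ factors the sum out, and because all other $\mathscr{G}$-factors are unchanged and independent of $i,j$, the combined expression collapses to $\sum_{k,l} b_{kl}^{ij}\,\mathscr{U}_{kl}^{(0)}(-t,\boldsymbol{s},t)$, which is \eqref{shift_properties_Uscr}. The boundary sub-cases $k^\star = 0$ (all $s_k \geqslant 0$) and $k^\star = m$ (all $s_k < 0$) are handled identically, since the conventions $s_0 = -t$ and $s_{m+1} = t$ make the crossing factor take exactly the same form as in the interior case.

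I do not anticipate a serious obstacle: the argument is essentially a careful bookkeeping of where the observable $\dyad{i}{j}$ sits inside the product and an application of the pre-existing invariance relations. The only place that requires attention is making sure the boundary factors $\mathscr{G}_{ij}^{(0)}(-t-\dt,\tilde s_1)$ and $\mathscr{G}_{ij}^{(0)}(\tilde s_m,t+\dt)$ are treated under the correct branch of \eqref{basis} depending on the sign of $s_1$ and $s_m$; the unified convention $s_0=-t,\, s_{m+1}=t$ eliminates this case-splitting cleanly.
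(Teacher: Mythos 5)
Your proposal is correct and follows essentially the same route as the paper's proof: isolate the unique sign-crossing factor, rewrite $\mathscr{G}_{ij}^{(0)}(s_{k^\star}-\dt,s_{k^\star+1}+\dt)$ as $\e^{\ii s_{k^\star+1}H_s}\bigl(\e^{\ii\dt H_s}\dyad{i}{j}\e^{-\ii\dt H_s}\bigr)\e^{\ii s_{k^\star}H_s}$ and expand via \eqref{bij}, while \eqref{properties_Gs0} handles all same-sign factors. Your explicit bookkeeping of the crossing index $k^\star$ and the boundary cases $k^\star=0,m$ only makes explicit what the paper leaves implicit.
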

       \begin{proof}
       For the bare propergator $\mathscr{G}_{ij}^{(0)}(s_k,s_{k+1})$ with $s_k < 0 \leqslant s_{k+1}$,
        directly using the definition \eqref{basis} yields that 
       \begin{equation} \label{shift_property}
    \begin{split}
        \mathscr{G}_{ij}^{(0)}(s_k-\dt,s_{k+1}+\dt)
        &= \e^{\ii (s_{k+1}+\dt) H_s} \dyad{i}{j} \e^{\ii (s_k -\dt) H_s} 
        = \e^{\ii s_{k+1} H_s} 
        \e^{\ii \dt H_s} \dyad{i}{j} \e^{-\ii \dt H_s}
        \e^{\ii s_k H_s} \\
        &= \e^{\ii s_{k+1} H_s} 
        \sum_{k,l} b_{kl}^{ij} \dyad{k}{l}
        \e^{\ii s_k H_s} = \sum_{k,l} b_{kl}^{ij} \mathscr{G}_{kl}^{(0)}(s_k,s_{k+1}).
    \end{split}
    \end{equation}
Insert the formula above into \eqref{U0 basis}, due to \eqref{properties_Gs0},
 we arrive at 
        \begin{equation*}
            \begin{split}
                &\mathscr{U}_{ij}^{(0)}(-t-\dt,\mathcal{I}_{\dt}(\boldsymbol{s}),t+\dt) \\
                =& \  \mathscr{G}_{ij}^{(0)}(s_m+\dt,t+\dt) W_s  \cdots 
                W_s
                \mathscr{G}_{ij}^{0}(s_{k}-\dt,s_{k+1}+\dt)
                W_s
                \cdots
                W_s
                \mathscr{G}_{ij}^{(0)}(-t-\dt,s_1-\dt) \\
                =& \ \mathscr{G}_{ij}^{(0)}(s_m,t) W_s  \cdots 
                W_s
                \left[\sum_{k,l} b_{kl}^{ij} \mathscr{G}_{kl}^{(0)}
                (s_{k},s_{k+1})\right]
                W_s
                \cdots
                W_s
                \mathscr{G}_{ij}^{(0)}(-t,s_1) 
                = \  \sum_{k,l}b_{kl}^{ij} \mathscr{U}_{kl}^{(0)}(-t,\boldsymbol{s},t). \qedhere
            \end{split}
        \end{equation*}
    \end{proof}
Note that the coefficients $b_{kl}^{ij}$ depend only on the observable, the Hamiltonian $H_s$ and the time step $\dt$. 
Therefore, coefficients $b_{kl}^{ij}$ can be computed offline for any given time step $\Delta t$. At the time step $t+\Delta t$, we now split the $m$-dimensional integral in $\mathscr{K}_{ij}^{(0)}(-t-\Delta t, t + \Delta t)$ as 
 \begin{equation}
    \label{separation_of_integral}
    \begin{split}
    & \int_{-t - \Delta t \leqslant \boldsymbol{s} \leqslant t + \Delta t}
        \dd \boldsymbol{s} (-1)^{\#\{\boldsymbol{s}<0\}}
        \mathscr{U}_{ij}^{(0)}(-t - \Delta t,\boldsymbol{s},t + \Delta t) \mathcal{L}_b(\boldsymbol{s},t +\Delta t) \\  
         & \hspace{4cm}= \int_{\mathcal{I}_{\dt}\left(
            \{ -t\leqslant \boldsymbol{s} \leqslant t\}
            \right)}
            (\text{integrand})
            + \int_{T_{t+\dt}^{(m)}}(\text{integrand})
             \end{split}
\end{equation} 
where the domains of integration on the right-hand side are defined as follows:
\begin{definition}
    Given $t \geqslant 0$, $\Delta t \geqslant 0$ and $S \subset \mathbb{R}^m$, 
    \begin{equation*}
        \mathcal{I}_{\dt}(S) \coloneqq \bigcup\limits_{\boldsymbol{s}\in S}\{\mathcal{I}_{\dt}(\boldsymbol{s})\}
    \end{equation*}
    and 
    \begin{equation*}
    T_{t+\dt}^{(m)} \coloneqq \{ -t-\dt\leqslant \boldsymbol{s}\leqslant t+\dt \ \big| \ \exists \ 
    s_j \emph{~for~} j=1,\cdots,m \emph{~such that~}  |s_j| \leqslant \Delta t 
    \}.
\end{equation*}
\end{definition}
The sum of the first integral on the right-hand side of \eqref{separation_of_integral} over all odd $m$ can be directly obtained using the previous calculations of $ \mathscr{K}_{ij}^{(0)}(-t,t)$: 
\begin{equation*}
    \begin{split}
      &  \sum^{+\infty}_{\substack{m=1\\m \text{ is odd}}} 
            \ii^{m+1} 
            \int_{\mathcal{I}_{\dt}\left(
            \{ -t\leqslant \boldsymbol{s} \leqslant t\}
            \right)}
            \dd \tilde{\boldsymbol{s}} (-1)^{\#\{\tilde{\boldsymbol{s}}<0\}}
            \mathscr{U}_{ij}^{(0)}(-t-\dt,\tilde{\boldsymbol{s}},t+\dt) \mathcal{L}_b\left(\tilde{\boldsymbol{s}},t+\dt\right) \\
            = & \  \sum^{+\infty}_{\substack{m=1\\m \text{ is odd}}} 
            \ii^{m+1} 
            \int_{-t\leqslant \boldsymbol{s} \leqslant t}
            \dd \boldsymbol{s} (-1)^{\#\{\boldsymbol{s}<0\}}
            \mathscr{U}_{ij}^{(0)}(-t-\dt,\mathcal{I}_{\dt}\left(\boldsymbol{s}\right),t+\dt) \mathcal{L}_b\left(\mathcal{I}_{\dt}\left(\boldsymbol{s}\right),t+\dt\right) \\
            = & \ \sum^{+\infty}_{\substack{m=1\\m \text{ is odd}}} 
            \ii^{m+1} 
            \int_{-t\leqslant \boldsymbol{s} \leqslant t}
            \dd \boldsymbol{s} (-1)^{\#\{\boldsymbol{s}<0\}}
            \sum_{k,l} b_{kl}^{ij}\mathscr{U}_{kl}^{(0)}(-t,\boldsymbol{s},t) \mathcal{L}_b(\boldsymbol{s},t) 
            =  \ \sum_{k,l}b_{kl}^{ij} \mathscr{K}_{kl}^{(0)}(-t,t),
    \end{split}
\end{equation*}
where we have used \eqref{shift_properties_Lb} and \eqref{shift_properties_Uscr} in the derivation. The second integral on the right-hand side of \eqref{separation_of_integral} has to be evaluated at each time step. This implies the following recurrence relation of $\mathscr{K}_{ij}^{(0)}(-t,t)$ summarized in the following theorem:
\begin{theorem}
\label{thm:Sscr_shift}
    Given $t \geqslant 0$ and $\Delta t \geqslant 0$,
    we have
    \begin{equation}
        \mathscr{K}_{ij}^{(0)}(-t-\dt,t+\dt) = \sum_{k,l} b_{kl}^{ij} \mathscr{K}_{kl}^{(0)}(-t,t) 
        + \mathscr{D}_{ij}^{\dt}(t+\dt)
        \label{Sscr_shift_property}
    \end{equation}
    where $\mathscr{D}_{ij}^{\dt}(t+\dt)$ is defined by
    \begin{equation*}
        \mathscr{D}_{ij}^{\dt}(t+\dt)
        =\sum_{\substack{m=1\\m\text{ is odd}}}^{+\infty}
        \ii^{m+1} \int_{T_{t+\dt}^{(m)}}
        \dd \boldsymbol{s} (-1)^{\#\{\boldsymbol{s} < 0\}}
        \mathscr{U}_{ij}^{(0)}(-t-\dt,\boldsymbol{s},t+\dt)
        \mathcal{L}_b(\boldsymbol{s},t+\dt).
    \end{equation*}
\end{theorem}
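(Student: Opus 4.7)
The plan is to start directly from the definition \eqref{Sscr} of $\mathscr{K}_{ij}^{(0)}(-t-\dt,t+\dt)$ and, for each odd $m$, decompose the $m$-dimensional integration domain $\{-t-\dt \leqslant \bs \leqslant t+\dt\}$ into the two pieces appearing in \eqref{separation_of_integral}. The first task is to verify that
\[
\{-t-\dt \leqslant \bs \leqslant t+\dt\} = \mathcal{I}_{\dt}(\{-t \leqslant \bs \leqslant t\}) \cup T_{t+\dt}^{(m)}
\]
holds up to a null set and that the two pieces on the right overlap only on a null set. This follows from the observation that $\mathcal{I}_{\dt}$ shifts positive components by $+\dt$ and negative components by $-\dt$, so its image consists precisely of those ordered tuples in $\{-t-\dt \leqslant \bs \leqslant t+\dt\}$ whose every coordinate satisfies $|s_j| > \dt$, which is the complement of $T_{t+\dt}^{(m)}$. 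I also need to check that $\mathcal{I}_{\dt}$ respects the simplex ordering, which is immediate since it preserves signs and only enlarges gaps between points with opposite signs.

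Second, on the image $\mathcal{I}_{\dt}(\{-t \leqslant \bs \leqslant t\})$ I perform the change of variables $\tilde{\bs} = \mathcal{I}_{\dt}(\bs)$. Since $\mathcal{I}_{\dt}$ is a piecewise translation, its Jacobian equals one, and $\#\{\tilde{\bs} < 0\} = \#\{\bs < 0\}$ because no component crosses the origin under the map. After this substitution, \Cref{lemma:Uscr_shift} replaces $\mathscr{U}_{ij}^{(0)}(-t-\dt,\mathcal{I}_{\dt}(\bs),t+\dt)$ by $\sum_{k,l} b_{kl}^{ij}\,\mathscr{U}_{kl}^{(0)}(-t,\bs,t)$, while \eqref{shift_properties_Lb} replaces $\Ls_b(\mathcal{I}_{\dt}(\bs),t+\dt)$ by $\Ls_b(\bs,t)$. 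Summing the resulting integrals over all odd $m$, and interchanging the (finite) sum over $k,l$ with the $m$-sum, collapses the first piece to $\sum_{k,l} b_{kl}^{ij}\,\mathscr{K}_{kl}^{(0)}(-t,t)$ by the very definition \eqref{Sscr}.

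The contribution from the second piece is, by construction, exactly $\mathscr{D}_{ij}^{\dt}(t+\dt)$, so no further computation is needed there. Adding the two pieces yields the claimed recurrence \eqref{Sscr_shift_property}.

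The main obstacle is bookkeeping rather than conceptual: I must verify simultaneously that the simplex ordering survives $\mathcal{I}_{\dt}$, that the domain decomposition is a disjoint union up to measure zero, that the Jacobian is trivial, and that the sign indicator $(-1)^{\#\{\bs < 0\}}$ is preserved. Once those geometric facts are in hand, the algebraic part is a direct substitution using \Cref{lemma:Uscr_shift} and \eqref{shift_properties_Lb}, and the theorem drops out immediately.
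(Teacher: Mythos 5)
Your proposal is correct and follows essentially the same route as the paper: the paper likewise splits the integral over $\{-t-\dt \leqslant \bs \leqslant t+\dt\}$ into the piece $\mathcal{I}_{\dt}(\{-t \leqslant \bs \leqslant t\})$ plus $T_{t+\dt}^{(m)}$ as in \eqref{separation_of_integral}, changes variables via $\mathcal{I}_{\dt}$, and invokes \Cref{lemma:Uscr_shift} together with \eqref{shift_properties_Lb} to collapse the first piece to $\sum_{k,l} b_{kl}^{ij}\,\mathscr{K}_{kl}^{(0)}(-t,t)$. The only difference is that you explicitly verify the geometric facts (the decomposition is a disjoint union up to a null set, the ordering survives $\mathcal{I}_{\dt}$, the Jacobian is one, and $\#\{\bs<0\}$ is preserved), which the paper leaves implicit.
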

The recurrence relation above will be used in each time step of the Heun's scheme for \eqref{intego_differential_compact_form} to achieve reuse of calculations. Similar as \eqref{Kn}, the series $\mathscr{D}_{ij}^{\dt}$ to be newly computed is truncated by some odd integer $\bar{M}$ and the integrals are evaluated using Monte Carlo method. The details of such a procedure are described in Algorithm \ref{algo:dyson}. In the input, $N$ denotes the number of total time steps, and the number of samples $\mathcal{M}^{(m)}_n$ is similarly given as \eqref{Mnm1} with the volume of simplex $\{-t_n \leqslant \bs_m \leqslant t_n\}$ replaced by the volume of $T^{(m)}_{t_{n-1}+\Delta t}$:    
\begin{equation}\label{Mnm}
    \mathcal{M}^{(m)}_n =  \mathcal{M}_0 \cdot  \left|T^{(m)}_{t_n+\Delta t} \right|  \cdot \mathcal{B}^{\frac{m+1}{2}} \quad \text{where} \quad \left|T^{(m)}_{t_n+\Delta t} \right|  = \frac{(2t_n)^m - (2t_{n-1})^m}{(m-1)!!}. 
\end{equation}

\begin{algorithm}[ht]
  \caption{}\label{algo:dyson}
  \begin{algorithmic}[1]
\State \textbf{input}  $\Delta t$, $\bar{M}$, $N$, $\mathcal{M}^{(m)}_n$
  \medskip
    \State \label{Line2}Set $G_{0}  \gets \hat{O}_s$, $\mathcal{K}_0 \gets  \mathbf{O}_{2\times 2}$ (zero matrix), $\mathscr{D}_{ij}^{n} \gets  \mathbf{O}_{2\times 2}$, $\mathscr{K}_{ij}^0 \gets \mathbf{O}_{2\times 2}$ for $i,j=\pm 1$ and $n=1,\cdots,N$ 
      \medskip
    \State \label{Line3}Compute $a_{ij}$ according to \eqref{aij}, $b_{kl}^{ij}$ according to \eqref{bij} for $i,j,k,l=\pm 1$ 
      \medskip
  \For{$n$ from $0$ to $N-1$}  
  \medskip
    \For{odd $m$ from $1$ to $\bar{M}$}  
  \medskip
  \State Sample $\{\bs^{(k)}_{n+1}\}_{k=1}^{\mathcal{M}^{(m)}_{n+1}}$ from uniform distribution $U\left( T^{(m)}_{t_{n}+\Delta t}\right)$
  \medskip
  \State \label{Line7}Set $\mathscr{S}^{(k)}_{ij} \gets  \left|T^{(m)}_{t_n+\Delta t} \right| 
        \cdot \ii^{m+1} (-1)^{\#\{\bs_{n+1}^{(k)} < 0\}}
        \mathscr{U}_{ij}^{(0)}(-t_{n+1},\bs_{n+1}^{(k)},t_{n+1})
        \mathcal{L}_b(\bs_{n+1}^{(k)},t_{n+1})$
  \medskip
  \State\label{Line8}Set $\mathscr{D}_{ij}^{n+1}
        \gets    \mathscr{D}_{ij}^{n+1} + \frac{1}{\mathcal{M}^{(m)}_{n+1}} \displaystyle \sum_{k=1}^{\mathcal{M}^{(m)}_{n+1}} \mathscr{S}^{(k)}_{ij}$
      \medskip
     \EndFor
      \medskip
    \State Set  $\mathscr{K}_{ij}^{n+1} \gets  \sum_{k,l} b_{kl}^{ij} \mathscr{K}_{kl}^{n} 
        + \mathscr{D}_{ij}^{n+1}$ and $ \mathcal{K}_{n+1} \gets  \sum_{i,j} a_{ij} W_s\mathscr{K}_{ij}^{n+1}$
          \medskip
     \State Compute $G_{n+1}$ by scheme \eqref{heun} 
     \medskip
     \EndFor
   \medskip
  \State \textbf{return} $G_{n}$ for $n = 1,\cdots, N$
   \end{algorithmic}
\end{algorithm}

 To end this section, let us have a brief discussion on the computational and memory cost of Algorithm \ref{algo:dyson}. In particular, we compare it with the direct implementation of Heun's scheme \eqref{heun} without using the recurrence relation, as well as the existing reuse algorithm in \cite{cai2022fast} which avoids duplicate calculations of the bath influence functionals.
 
 Below we list the total computational cost on the Monte Carlo integration in the three approaches: 
  \begin{itemize}
 \item The scheme \eqref{heun} and \eqref{Kn}:
 \begin{equation} \label{eq:algo1}
 \sum_{\substack{m =1\\m \text{~is odd}}}^{\bar{M}} \sum_{n=1}^N \bar{\mathcal{M}}_n^{(m)} [C_s(m)+ C_b(m)].
 \end{equation}
 \item The fast algorithm in \cite{cai2022fast}:
 \begin{equation} \label{eq:algo2}
 \sum_{\substack{m =1\\m \text{~is odd}}}^{\bar{M}} \sum_{n=1}^N [\bar{\mathcal{M}}_n^{(m)} C_s(m) + \mathcal{M}_n^{(m)} C_b(m)].
 \end{equation}
 \item The algorithm in this paper: 
 \begin{equation} \label{eq:algo3}
 \sum_{\substack{m =1\\m \text{~is odd}}}^{\bar{M}} \sum_{n=1}^N \mathcal{M}_n^{(m)} [C_s(m)+ C_b(m)].
 \end{equation}
 \end{itemize}
 Here $C_s(m)$ and $C_b(m)$ are respectively the computational cost of the system associated functional $\mathcal{U}^{(0)}$ and the bath influence functional $\mathcal{L}_b$ with a given $m-$point time sequence. Asymptotically, $\mathcal{U}^{(0)}$ is evaluated with the computational cost at $C_s(m) \sim O(m)$ by the definition \eqref{U0}, and $\mathcal{L}_b$ can be computed with the cost at $C_b(m) \sim O(2^m)$ upon applying the inclusion-exclusion principle \cite{yang2021inclusion}. For large $\bar{M}$, $C_s(m)$ in the three approaches is negligible, so that \eqref{eq:algo2} and \eqref{eq:algo3} will have similar values, which reduce the original cost \eqref{eq:algo1} by a factor $O(N)$ (see the complexity analysis in \cite[Section 2.5]{cai2022fast}). However, when $\bar{M}$ is small, $C_s(m)$ is comparable to $C_b(m)$, and then the ratio of \eqref{eq:algo3} to \eqref{eq:algo2} will be at $O(1/N)$ asymptotically using the formulas of $\mathcal{M}^{(m)}_n$ and $\bar{\mathcal{M}}^{(m)}_n$. In this sense, the algorithm in the current work can considerably improve the computational efficiency particularly in the weak system-bath coupling regime.

 Moreover, Algorithm \ref{algo:dyson} requires a lower memory cost compared to the algorithm in \cite{cai2022fast}. In \Cref{Line7,Line8} of Algorithm \ref{algo:dyson}, where the major computations concentrate in, one may draw a single sample $\bs_{n+1}^{(k)}$ each time and compute the corresponding $\mathscr{S}^{(k)}_{ij}$, which is added to the summation in \Cref{Line8} and can then be discarded. In this process, the memory cost can stay low since we only need to store $\mathscr{D}_{ij}^{n+1}$ and $\mathscr{K}_{ij}^{n}$ for $i,j=\pm 1$, which are essentially eight two-by-two matrices in total. The algorithm in \cite{cai2022fast}, however, requires in total $N(N+1)/2$ such matrices to store the infinite series $\mathcal{K}_n$ for all time steps. We refer readers to \cite[Section 2.4]{cai2022fast} for the details.

\section{Bold-thin-bold diagrammatic Monte Carlo method}
\label{section_semi_inchworm}
It is well known that the methods based on direct summation of the Dyson series usually suffer from the numerical sign problem \cite{loh1990sign,muhlbacher2008real,werner2009diagrammatic} for long-time simulations.
To tame the sign problem, bold lines, which are usually the sum of many diagrams, have been introduced in a number of methods to reduce the number of diagrams \cite{gull2010bold,cohen2015taming}.
Among these methods, the inchworm Monte Carlo method has been successfully applied to models including quantum impurity problems and exact non-adiabatic dynamics \cite{chen2017inchworm1,chen2017inchworm2,cai2020inchworm,cai2020numerical}.
Below, we will first review the inchworm Monte Carlo method in the context of spin-boson model. 
Our review will be based on the reformulation of $\mathcal{K}(-t,t)$, which is a perspective slightly different from the literature \cite{chen2017inchworm1, cai2020inchworm}.
We will then modify this approach to derive a novel diagrammatic Monte Carlo method that also has the iterative structure introduced in \Cref{Sec_Dyson_reuse}.

\subsection{Review of inchworm Monte Carlo method}
\label{sec_review_imc}
In open quantum systems, the bold lines are defined to be the partial resummation of the Dyson series. In \cite{chen2017inchworm1}, the bold lines are introduced as ``full propagators'' $G(s_\ii,s_\ff)$, defined as
\begin{equation}
 \label{fullPropagator2D}
 \begin{split}
  G(s_\ii,s_\ff) = G^{(0)}_s(s_\ii,s_\ff) + 
     \sum_{\substack{m=2\\ m \text{~is even}}}^{+\infty}
    \ii^m
    \int_{s_\ii \leqslant \boldsymbol{s} \leqslant s_\ff}
    \dd \boldsymbol{s}
    \left[(-1)^{\#\{\boldsymbol{s}<0\}}
    \mathcal{U}^{(0)} (s_\ii,\boldsymbol{s},s_\ff)
    \mathcal{L}_b(\boldsymbol{s})\right], \qquad & \\
   s_\ii \in [-t,t]\backslash \{0\}, \quad s_\ff \in [s_\ii, t] \backslash \{0\}, &
 \end{split}
\end{equation}
where $G^{(0)}_s(s_\ii,s_\ff)$ is given in \eqref{Gs0_definition} and the system associated functional $ \mathcal{U}^{(0)} (s_\ii,\boldsymbol{s},s_\ff)$ is formulated as \eqref{U0} with $\pm t$ replaced by $s_\ff$ and $s_\ii$. The following properties of the full propagator will be found useful later in this paper (see their proofs in \cite{cai2022fast}):
\begin{proposition}\label{G property}
 \
  \begin{itemize}
      \item \textbf{Shift invariance:} For any $\Delta t \geqslant 0$, if $s_\ff + \Delta t < 0$ or $s_\ii  > 0$, we have  
\begin{equation}\label{G shift invariance}
 G(s_\ii+\Delta t,s_\ff+ \Delta t) = G(s_\ii,s_\ff)
\end{equation}
\item \textbf{Conjugate symmetry:} For any $-t \leqslant s_\ii \leqslant s_\ff < t$, we have 
\begin{equation}\label{G conjugate symmetry}
    G(-s_\ff,-s_\ii) = G(s_\ii,s_\ff)^\dagger
\end{equation}
\item  \textbf{Jump condition:} $G(\cdot,\cdot)$ is discontinuous on the line segments
$[-t,0] \times \{0\}$ and $\{0\} \times [-t,0]$. At the origin, we have 
\begin{gather}  
\lim_{s_\ff \rightarrow s_\ii^+}  \lim_{s_\ii \rightarrow 0^+}   G(s_\ii,s_\ff)  = \lim_{s_\ii \rightarrow s_\ff^-}  \lim_{s_\ff \rightarrow 0^-}   G(s_\ii,s_\ff)  = I, \label{jump condition} \\
\lim_{s_\ii \rightarrow 0^-}  \lim_{s_\ff \rightarrow 0^+}   G(s_\ii,s_\ff)  = \hat{O}_s. \label{jump condition dyson}
\end{gather}
    \end{itemize}
 \end{proposition}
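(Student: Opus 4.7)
The plan is to prove each of the three properties separately by direct manipulation of the Dyson series \eqref{fullPropagator2D}, exploiting the termwise invariances of $G_s^{(0)}$ and $B$ that have already been established in earlier propositions.

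\textbf{Shift invariance.} When $s_\ff + \dt < 0$ (respectively $s_\ii > 0$), every time point in the shifted simplex $\{s_\ii + \dt \leqslant \bs \leqslant s_\ff + \dt\}$ is strictly negative (respectively strictly positive), so no bare propagator factor in $\mathcal{U}^{(0)}$ straddles zero. The change of variables $\tilde{s}_k = s_k - \dt$ maps this simplex to $\{s_\ii \leqslant \tilde{\bs} \leqslant s_\ff\}$ with unit Jacobian; applying \eqref{properties_Gs0} and \eqref{properties_B} termwise to the resulting integrand, and noting that $(-1)^{\#\{\bs < 0\}}$ is preserved because all time points share a single sign, reduces the integral defining $G(s_\ii + \dt, s_\ff + \dt)$ to exactly the one defining $G(s_\ii, s_\ff)$. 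The leading bare term is also invariant by \eqref{properties_Gs0}.

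\textbf{Conjugate symmetry.} The key step is the time-reversal involution $\tilde{s}_k := -s_{m+1-k}$, which maps $\{s_\ii \leqslant \bs \leqslant s_\ff\}$ bijectively onto $\{-s_\ff \leqslant \tilde{\bs} \leqslant -s_\ii\}$ with unit Jacobian. A short case analysis on the three branches of \eqref{Gs0_definition} (using Hermiticity of $H_s$ and of $\hat{O}_s$) yields $G_s^{(0)}(s_\ii, s_\ff)^\dagger = G_s^{(0)}(-s_\ff, -s_\ii)$, so Hermitian conjugation of the product \eqref{U0} --- which reverses the order of factors and conjugates each one --- matches $\mathcal{U}^{(0)}(-s_\ff, \tilde{\bs}, -s_\ii)$. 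Next, \eqref{bath_function_B} depends on its arguments only through $|\tau_1| - |\tau_2|$, which negates under $(\tau_1, \tau_2) \mapsto (-\tau_2, -\tau_1)$; a direct check then gives $\overline{B(s_j, s_k)} = B(-s_k, -s_j)$, whence $\overline{\mathcal{L}_b(\bs)} = \mathcal{L}_b(\tilde{\bs})$ term by term. Finally, for even $m$ we have $\overline{\ii^m} = \ii^m$ and $(-1)^{\#\{\bs < 0\}} = (-1)^{\#\{\tilde{\bs} < 0\}}$ (since $\#\{\tilde{\bs} < 0\} = m - \#\{\bs < 0\}$ almost everywhere). Combining these identities establishes \eqref{G conjugate symmetry}.

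\textbf{Jump condition.} The $m$-dimensional simplex $\{s_\ii \leqslant \bs \leqslant s_\ff\}$ has volume $(s_\ff - s_\ii)^m / m!$, which tends to zero as $(s_\ii, s_\ff) \to (0,0)$, while the integrand in \eqref{fullPropagator2D} remains uniformly bounded because $W_s$ and the operator exponentials are bounded on the two-dimensional $\mathcal{H}_s$ and $B$ is continuous at the origin. Hence every $m \geqslant 2$ term vanishes in each limit, and the behavior of $G$ near the origin is controlled entirely by $G_s^{(0)}(s_\ii, s_\ff)$. Reading \eqref{Gs0_definition} on the appropriate branch produces $I$ when $s_\ii, s_\ff$ approach $0$ from the same side and $\hat{O}_s$ when they approach from opposite sides, establishing \eqref{jump condition} and \eqref{jump condition dyson}. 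The main obstacle I anticipate is the bookkeeping in the conjugate symmetry step --- tracking the reversal of the operator product, the transformation of $B$ under the involution, and the alignment of signs via the even parity of $m$ --- rather than any deep analytic difficulty.
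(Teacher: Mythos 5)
The paper itself contains no proof of this proposition --- it defers to the appendix of \cite{cai2022fast} --- and your termwise argument (change of variables in each Dyson integral combined with the elementary invariances \eqref{properties_Gs0} and \eqref{properties_B}, the time-reversal involution $\tilde{s}_k = -s_{m+1-k}$ together with Hermiticity of $H_s$, $W_s$ and $\hat{O}_s$ for conjugate symmetry, and the vanishing simplex volume for the jump condition) is correct and is essentially the same route taken in that reference. One point is worth making explicit in the jump-condition step: to pass the limit $(s_\ii,s_\ff)\to(0,0)$ through the \emph{infinite} sum, uniform boundedness of the integrand at each fixed $m$ is not quite enough; you need a bound summable in $m$, which you do get since the $m$-th term is dominated by
\begin{equation*}
\frac{(s_\ff-s_\ii)^m}{m!}\,(m-1)!!\;\|W_s\|^m\,\bigl(\max|B|\bigr)^{m/2}\max\bigl(1,\|\hat{O}_s\|\bigr),
\end{equation*}
and $(m-1)!!/m! = 1/m!!$ makes the resulting series converge for every value of $s_\ff - s_\ii$, so the tail is uniformly small and the exchange is legitimate.
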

We remark that the initial condition of the integro-differential equation \eqref{integro_differential_eq} is a result of jump condition \eqref{jump condition dyson} as $\lim_{t\rightarrow 0^+}G(-t,t) = \hat{O}_s$. Later we will see that other solvers for the full propagator $G(s_\ii,s_\ff)$ may use \eqref{jump condition} as the initial conditions.   

The central idea of inchworm method \cite{cai2020inchworm} lies in the resummation of the series $\mathcal{K}(-t,t)$ in the governing equation \eqref{integro_differential_eq} using the previous calculations of the full propagators.
Diagrammatically, the full propagator $G(s_\ii,s_\ff)$ is represented by a bold line segment connecting the initial time point $s_\ii$ with the final time point $s_\ff$. This is consistent with \eqref{fullPropagator} when $s_\ii =-t$ and $s_\ff = t$. The inchworm method then replaces all the thin line segments $G^{(0)}_s(s_{\ii},s_{\ff})$ in the diagrams \eqref{diagram dyson} by bold line segments, and then removes some diagrams from the summation to maintain the equality. For example, the first diagram on the right-hand side of \eqref{diagram dyson} is replaced by 
\begin{equation}
 \label{first term inchworm}
\begin{tikzpicture}[anchor=base, baseline]
\draw [thick] (-1.4,0)--(1.4,0);
\draw[-] (-0.7,0.1) to[bend left=75] (1.4,0.1);
\fill [black] (-1.4,-0.1) rectangle (1.4,0.1);
\node [below] at (-1.4,-0.18) {$-t$};
\node [below] at (1.4,-0.18) {$t$};
\draw [thick] (-1.4,-0.12)--(-1.4,0.12);
\draw [thick] (1.4,-0.12)--(1.4,0.12);
\draw [thick,white] (-0.7,0.1) -- (-0.7,-0.1);
\node [below] at (-0.7,-0.1) {$s_1$};
\end{tikzpicture} = 
\int_{-t}^t \dd s_1 \  \sgn(s_1)  \ii^2  G(s_1,t) W_s  G(-t,s_1) B(s_1,t). 
\end{equation}
By \eqref{fullPropagator2D}, the two bold line segments above can be expressed as 
\begin{equation}
   \label{G expansion one arc}
 \begin{split}
     G(-t,s_1) = & \  G^{(0)}_s(-t,s_1) +  \int_{-t}^{s_1}  \int_{-t}^{s_3}  \dd s_2 \dd s_3\ (-1)^{\#\{s_2,s_3<0\}}  \ii^2  \times \\
     & \times G^{(0)}_s(s_3,s_1)  W_s  G^{(0)}_s(s_2,s_3) W_s G^{(0)}_s(-t,s_2)  B(s_2,s_3) + \cdots,  \\
          G(s_1,t) = & \  G^{(0)}_s(s_1,t) +  \int_{s_1}^t \int_{s_1}^{s'_3} \dd s'_2 \dd s'_3 \ (-1)^{\#\{s'_2,s'_3<0\}}  \ii^2   \times \\
          & \times G^{(0)}_s(s'_3,t) W_s G^{(0)}_s(s'_2,s'_3)W_s  G^{(0)}_s(s_1,s'_2)  B(s'_2,s'_3) + \cdots. 
 \end{split}
\end{equation}
Inserting the above formulas into \eqref{first term inchworm} yields that 
\begin{equation}
 \label{first term inchworm diagram}
 \begin{split}
 &    \begin{tikzpicture}[anchor=base, baseline]
\draw [thick] (-1.4,0)--(1.4,0);
\draw[-] (-0.7,0.1) to[bend left=75] (1.4,0.1);
\fill [black] (-1.4,-0.1) rectangle (1.4,0.1);
\node [below] at (-1.4,-0.18) {$-t$};
\node [below] at (1.4,-0.18) {$t$};
\draw [thick] (-1.4,-0.12)--(-1.4,0.12);
\draw [thick] (1.4,-0.12)--(1.4,0.12);
\draw [thick,white] (-0.7,0.1) -- (-0.7,-0.1);
\node [below] at (-0.7,-0.1) {$s_1$};
\end{tikzpicture} =  \ \begin{tikzpicture}[anchor=base, baseline] 
 \draw [thick] (-1.4,0) -- (1.4,0);
 \draw [thick] (-1.4,-0.1)--(-1.4,0.1); \draw [thick] (1.4,-0.1)--(1.4,0.1);
 \node [below] at (-1.4,-0.1) {$-t$}; 
  \node [below] at (1.4,-0.1) {$t$}; 
  \draw[-] (-1,0) to[bend left=60] (1.4,0);
  \draw plot[only marks,mark =*, mark options={color=black, scale=0.5}]coordinates {(-1,0)};
  \node [below] at (-1,0) {$s_1$};
 \end{tikzpicture}
+ \begin{tikzpicture}[anchor=base, baseline] 
 \draw [thick] (-1.4,0) -- (1.4,0);
 \draw [thick] (-1.4,-0.1)--(-1.4,0.1); \draw [thick] (1.4,-0.1)--(1.4,0.1);
 \node [below] at (-1.4,-0.1) {$-t$}; 
  \node [below] at (1.4,-0.1) {$t$}; 
  \draw[-] (-1,0) to[bend left=75] (-0.2,0);
    \draw[-] (0.6,0) to[bend left=75] (1.4,0);
  \draw plot[only marks,mark =*, mark options={color=black, scale=0.5}]coordinates {(-1,0) (-0.2,0) (0.6,0)};
  \node [below] at (-1,0) {$s_1$};  \node [below] at (-0.2,0) {$s_2$};   \node [below] at (0.6,0) {$s_3$};
\end{tikzpicture} \\
& \hspace{5cm} + 
\begin{tikzpicture}[anchor=base, baseline] 
 \draw [thick] (-1.4,0) -- (1.4,0);
 \draw [thick] (-1.4,-0.1)--(-1.4,0.1); \draw [thick] (1.4,-0.1)--(1.4,0.1);
 \node [below] at (-1.4,-0.1) {$-t$}; 
  \node [below] at (1.4,-0.1) {$t$}; 
  \draw[-] (-1,0) to[bend left=75] (1.4,0);
    \draw[-] (-0.2,0) to[bend left=75] (0.6,0);
  \draw plot[only marks,mark =*, mark options={color=black, scale=0.5}]coordinates {(-1,0) (-0.2,0) (0.6,0)};
  \node [below] at (-1,0) {$s_1$};  \node [below] at (-0.2,0) {$s_2$};   \node [below] at (0.6,0) {$s_3$};
\end{tikzpicture} 
+ 
\begin{tikzpicture}[anchor=base, baseline] 
 \draw [thick] (-1.4,0) -- (1.4,0);
 \draw [thick] (-1.4,-0.1)--(-1.4,0.1); \draw [thick] (1.4,-0.1)--(1.4,0.1);
 \node [below] at (-1.4,-0.1) {$-t$}; 
  \node [below] at (1.4,-0.1) {$t$}; 
  \draw[-] (-0.1,0) to[bend left=75] (1.4,0);
   \draw[-] (-1,0) to[bend left=75] (-0.5,0);
    \draw[-] (0.4,0) to[bend left=75] (0.9,0);
  \draw plot[only marks,mark =*, mark options={color=black, scale=0.5}]coordinates {(-1,0)(-0.5,0)(-0.1,0)(0.4,0)(0.9,0)};
    \node [below] at (-1,0) {$s_1$};
        \node [below] at (-0.5,0) {$s_2$};
  \node [below] at (-0.1,0) {$s_3$};
      \node [below] at (0.4,0) {$s_4$};    \node [below] at (0.9,0) {$s_5$};
 \end{tikzpicture} + \cdots
  \end{split}
\end{equation}
which includes all the thin diagrams where the arc connecting to $t$ does not intersect with any other arcs. These diagrams form a subset of the diagrams in \eqref{diagram dyson}, and thus \eqref{first term inchworm} can be considered as a partial sum of $\mathcal{K}(-t,t)$. For example, the first, second and fourth diagram in \eqref{diagram dyson} are included in \eqref{first term inchworm diagram} while the third diagram is not since the two arcs have an intersection.
Therefore, on the right-hand side of \eqref{diagram dyson}, after replacing the thin line segments in the first diagram with bold line segments, the second and the fourth diagrams, as well as infinite other diagrams in \eqref{first term inchworm diagram}, should be removed.
Then, for the third diagram, we again replace all the thin line segments with bold line segments:
\begin{multline*}
\begin{tikzpicture}[anchor=base, baseline]
\draw [thick] (-1.4,0)--(1.4,0);
\draw[-] (-0.7,0.1) to[bend left=75] (0.7,0.1);
\draw[-] (0,0.1) to[bend left=75] (1.4,0.1);
\fill [black] (-1.4,-0.1) rectangle (1.4,0.1);
\node [below] at (-1.4,-0.18) {$-t$};
\node [below] at (1.4,-0.18) {$t$};
\draw [thick] (-1.4,-0.12)--(-1.4,0.12);
\draw [thick] (1.4,-0.12)--(1.4,0.12);
\draw [thick,white] (-0.7,0.1) -- (-0.7,-0.1);
\draw [thick,white] (0.7,0.1) -- (0.7,-0.1);
\draw [thick,white] (0,0.1) -- (0,-0.1);
\node [below] at (-0.7,-0.1) {$s_1$};\node [below] at (0,-0.1) {$s_2$};
\node [below] at (0.7,-0.1) {$s_3$};
\end{tikzpicture} = 
\int_{-t \leqslant s_1 \leqslant s_2\leqslant s_3 \leqslant t} \dd s_1   \dd s_2  \dd s_3 \\
(-1)^{\#\{\boldsymbol{s}<0\}} \ii^4 G(s_3,t) W_s G(s_2,s_3) W_s G(s_1,s_2) W_s G(-t,s_1)  B(s_1,s_3) B(s_2,t), 
\end{multline*}
which can also be expanded into infinitely many diagrams, and we will remove these diagrams from \eqref{diagram dyson} to maintain the equality.
We can continue such a process of replacements and removals, and eventually the series $\mathcal{K}(-t,t)$ will be expressed as the sum of bold diagrams only:
\begin{equation}\label{diagram inchworm}
\begin{split}
    \mathcal{K}(-t,t) = \mathcal{K}^c(-t,t)\coloneqq
    \begin{tikzpicture}[anchor=base, baseline]
\draw [thick] (-1.4,0)--(1.4,0);
\draw[-] (-0.7,0.1) to[bend left=75] (1.4,0.1);
\fill [black] (-1.4,-0.1) rectangle (1.4,0.1);
\node [below] at (-1.4,-0.18) {$-t$};
\node [below] at (1.4,-0.18) {$t$};
\draw [thick] (-1.4,-0.12)--(-1.4,0.12);
\draw [thick] (1.4,-0.12)--(1.4,0.12);
\draw [thick,white] (-0.7,0.1) -- (-0.7,-0.1);
\node [below] at (-0.7,-0.1) {$s_1$};
\end{tikzpicture}
+ 
\begin{tikzpicture}[anchor=base, baseline]
\draw [thick] (-1.4,0)--(1.4,0);
\draw[-] (-0.7,0.1) to[bend left=75] (0.7,0.1);
\draw[-] (0,0.1) to[bend left=75] (1.4,0.1);
\fill [black] (-1.4,-0.1) rectangle (1.4,0.1);
\node [below] at (-1.4,-0.18) {$-t$};
\node [below] at (1.4,-0.18) {$t$};
\draw [thick] (-1.4,-0.12)--(-1.4,0.12);
\draw [thick] (1.4,-0.12)--(1.4,0.12);
\draw [thick,white] (-0.7,0.1) -- (-0.7,-0.1);
\draw [thick,white] (0.7,0.1) -- (0.7,-0.1);
\draw [thick,white] (0,0.1) -- (0,-0.1);
\node [below] at (-0.7,-0.1) {$s_1$};\node [below] at (0,-0.1) {$s_2$};
\node [below] at (0.7,-0.1) {$s_3$};
\end{tikzpicture} 
+ 
\begin{tikzpicture}[anchor=base, baseline]
\draw [thick] (-1.5,0)--(1.5,0);
\fill [black] (-1.5,-0.1) rectangle (1.5,0.1);
\draw[-] (-1,0.1) to[bend left=75] (0.5,0.1);
\draw[-] (-0.5,0.1) to[bend left=75] (1,0.1);
\draw[-] (0,0.1) to[bend left=75] (1.5,0.1);
\node [below] at (-1.5,-0.18) {$-t$};
\node [below] at (1.5,-0.18) {$t$};
\draw [thick] (-1.5,-0.12)--(-1.5,0.12);
\draw [thick] (1.5,-0.12)--(1.5,0.12);
\draw [thick,white] (-1,0.1) -- (-1,-0.1);
\draw [thick,white] (-0.5,0.1) -- (-0.5,-0.1);
\draw [thick,white] (-0,0.1) -- (-0,-0.1);
\draw [thick,white] (0.5,0.1) -- (0.5,-0.1);
\draw [thick,white] (1,0.1) -- (1,-0.1);
\node [below] at (-1,-0.1) {$s_1$};\node [below] at (-0.5,-0.1) {$s_2$};
\node [below] at (0,-0.1) {$s_3$};\node [below] at (0.5,-0.1) {$s_4$};
\node [below] at (1,-0.1) {$s_5$};
\end{tikzpicture} \\
+ 
\begin{tikzpicture}[anchor=base, baseline]
\draw [thick] (-1.5,0)--(1.5,0);
\fill [black] (-1.5,-0.1) rectangle (1.5,0.1);
\draw[-] (-1,0.1) to[bend left=75] (0.5,0.1);
\draw[-] (-0.5,0.1) to[bend left=75] (1.5,0.1);
\draw[-] (0,0.1) to[bend left=75] (1,0.1);
\node [below] at (-1.5,-0.18) {$-t$};
\node [below] at (1.5,-0.18) {$t$};
\draw [thick] (-1.5,-0.12)--(-1.5,0.12);
\draw [thick] (1.5,-0.12)--(1.5,0.12);
\draw [thick,white] (-1,0.1) -- (-1,-0.1);
\draw [thick,white] (-0.5,0.1) -- (-0.5,-0.1);
\draw [thick,white] (-0,0.1) -- (-0,-0.1);
\draw [thick,white] (0.5,0.1) -- (0.5,-0.1);
\draw [thick,white] (1,0.1) -- (1,-0.1);
\node [below] at (-1,-0.1) {$s_1$};\node [below] at (-0.5,-0.1) {$s_2$};
\node [below] at (0,-0.1) {$s_3$};\node [below] at (0.5,-0.1) {$s_4$};
\node [below] at (1,-0.1) {$s_5$};
\end{tikzpicture} 
+ 
\begin{tikzpicture}[anchor=base, baseline]
\draw [thick] (-1.5,0)--(1.5,0);
\fill [black] (-1.5,-0.1) rectangle (1.5,0.1);
\draw[-] (-1,0.1) to[bend left=75] (0,0.1);
\draw[-] (-0.5,0.1) to[bend left=75] (1,0.1);
\draw[-] (0.5,0.1) to[bend left=75] (1.5,0.1);
\node [below] at (-1.5,-0.18) {$-t$};
\node [below] at (1.5,-0.18) {$t$};
\draw [thick] (-1.5,-0.12)--(-1.5,0.12);
\draw [thick] (1.5,-0.12)--(1.5,0.12);
\draw [thick,white] (-1,0.1) -- (-1,-0.1);
\draw [thick,white] (-0.5,0.1) -- (-0.5,-0.1);
\draw [thick,white] (-0,0.1) -- (-0,-0.1);
\draw [thick,white] (0.5,0.1) -- (0.5,-0.1);
\draw [thick,white] (1,0.1) -- (1,-0.1);
\node [below] at (-1,-0.1) {$s_1$};\node [below] at (-0.5,-0.1) {$s_2$};
\node [below] at (0,-0.1) {$s_3$};\node [below] at (0.5,-0.1) {$s_4$};
\node [below] at (1,-0.1) {$s_5$};
\end{tikzpicture} 
+
\begin{tikzpicture}[anchor=base, baseline]
\draw [thick] (-1.5,0)--(1.5,0);
\fill [black] (-1.5,-0.1) rectangle (1.5,0.1);
\draw[-] (-1,0.1) to[bend left=75] (1,0.1);
\draw[-] (-0.5,0.1) to[bend left=75] (0.5,0.1);
\draw[-] (0,0.1) to[bend left=75] (1.5,0.1);
\node [below] at (-1.5,-0.18) {$-t$};
\node [below] at (1.5,-0.18) {$t$};
\draw [thick] (-1.5,-0.12)--(-1.5,0.12);
\draw [thick] (1.5,-0.12)--(1.5,0.12);
\draw [thick,white] (-1,0.1) -- (-1,-0.1);
\draw [thick,white] (-0.5,0.1) -- (-0.5,-0.1);
\draw [thick,white] (-0,0.1) -- (-0,-0.1);
\draw [thick,white] (0.5,0.1) -- (0.5,-0.1);
\draw [thick,white] (1,0.1) -- (1,-0.1);
\node [below] at (-1,-0.1) {$s_1$};\node [below] at (-0.5,-0.1) {$s_2$};
\node [below] at (0,-0.1) {$s_3$};\node [below] at (0.5,-0.1) {$s_4$};
\node [below] at (1,-0.1) {$s_5$};
\end{tikzpicture} + \cdots.
\end{split}
\end{equation}
In these diagrams, all the arcs are ``linked'', meaning that any two time points can be connected with each other using the arcs as ``bridges''. 
We remark that such resummation includes all diagrams shown in \eqref{diagram dyson} and no diagram is double-counted. One may refer to \cite[Section 3.3]{cai2020inchworm} for the rigorous proof. Mathematically, we may use the subset $\mathcal{Q}^c(\bs) \subset \mathcal{Q}(\bs)$ to denote the collection of all linked pairings. The superscript $c$ here refers to ``connected". For example, 
\begin{align*}
     &\mathcal{Q}^c(s_1,s_2) = \big\{\{(s_1,s_2)\}\big\},\\
     &  \mathcal{Q}^c(s_1,s_2,s_3,s_4) = \big\{\{(s_1,s_3),(s_2,s_4)\}\big\} \\
     &  \mathcal{Q}^c(s_1,s_2,s_3,s_4,s_5,s_6) = \big\{\{(s_1,s_4),(s_2,s_5),(s_3,s_6)\},\{(s_1,s_4),(s_2,s_6),(s_3,s_5)\},\\
     & \hspace{7cm}\{(s_1,s_3),(s_2,s_5),(s_4,s_6)\},\{(s_1,s_5),(s_2,s_4),(s_3,s_6)\}\big\}.
\end{align*}
and thus the resummation \eqref{diagram inchworm} is formulated as 
\begin{equation}
 \label{K inchworm}
 \mathcal{K}^c(-t,t) = \sum_{\substack{m=1 \\ m \text{~is odd}}}^{+\infty}  \ii^{m+1} \int_{-t\leqslant \bs \leqslant t} \dd \bs  (-1)^{\#\{\bs < 0\}} \mathcal{U}(-t, \bs , t)  \Ls_b^c(\bs,t)  
\end{equation}
where for odd $m$,
\begin{align*}
   & \mathcal{U}(-t, s_1,\cdots,s_m , t) =   \ G(s_m,t) W_s
    G(s_{m-1},s_m) W_s
    \cdots
    W_s G(s_1,s_2) 
    W_s G(-t,s_1), \\
    & \Ls_b^c(s_1,\cdots,s_{m+1})  =  \ \sum_{\mathfrak{q}\in\mathcal{Q}^c(s_1,\cdots,s_{m+1})}
    \prod_{(s_j,s_k)\in\mathfrak{q}}
    B(s_j,s_k).
\end{align*}
Note that here $m$ is odd, and thus $\mathcal{Q}^c$ has $m+1$ arguments. Similarly, the infinite series $\mathcal{K}^c(-t,t)$ can be approximated by truncating $m$ to some large odd $\bar{M}$ and evaluating the high-dimensional integrals via Monte Carlo method. Such a numerical solver is known as the inchworm Monte Carlo method. Compared with the Dyson series $\mathcal{K}^{(0)}(-t,t)$ defined in \eqref{K dyson}, working with $\mathcal{K}^c(-t,t)$ is advantageous as the its convergence with respect to $m$ is faster since each diagram in \eqref{diagram inchworm} includes infinite diagrams in \eqref{diagram dyson}. In addition, given any $m$, the inchworm Monte Carlo method only considers linked pairings and thus contains fewer diagrams than \eqref{diagram dyson}, making the direct evaluation of the bath influence functional $\Ls_b^c(s_1,\cdots,s_m,t)$ cheaper than $\Ls_b(s_1,\cdots,s_m,t)$ in the Dyson series. 

In order for efficient implementations, we examine the possibility of applying the idea of iterative calculations proposed in Section \ref{sec:dyson} to the inchworm Monte Carlo method, which requires several invariant properties for the functionals in the reformulated $\mathcal{K}^c(-t,t)$. It is not hard to observe that the bath influence functional $\Ls_b^c(s_1,\cdots,s_m,t)$ satisfies the invariance similar to \eqref{shift_properties_Lb}:
\begin{equation*}
    \mathcal{L}_b^c(\mathcal{I}_{\dt}(\boldsymbol{s}),t+\dt)
    = \mathcal{L}_b^c(\boldsymbol{s},t)
\end{equation*}
according to \eqref{properties_B}, and the shift invariance \eqref{G shift invariance} produces the similar result as \eqref{properties_Gs0}. However, when $s_\ii < 0 \leqslant s_\ff$, it is not clear whether we can write $G(s_\ii,s_\ff)$ as a linear combination of some simple basis as in \eqref{basis}. As a result, a decomposition similar to \eqref{shift_property}, which plays a key role in our iterative algorithm, is unavailable for $G(s_\ii,s_\ff)$. This indicates that special treatment for the propagators is needed when crossing the time 0 to make our fast algorithm applicable.

\subsection{The bold-thin-bold diagrammatic Monte Carlo method}
\label{Modified_IMCM}
In this subsection, 
we will modify the current inchworm Monte Carlo method 
by considering another resummation of $\mathcal{K}(-t,t)$ 
such that the idea of our iterative scheme for Dyson series can again be applied.
Instead of replacing all bare propagators $G^{(0)}_s(s_\ii,s_\ff)$ in Dyson series by full propagators, we replace them by 
    \begin{equation}
     \label{G semi}
        \widehat{G}(s_\ii,s_\ff)
        = \begin{cases}
            G(s_\ii,s_\ff),
            &\text{if }s_\ii \leqslant s_\ff < 0
            \text{ or }0\leqslant s_\ii \leqslant s_\ff \\
             G_s^{(0)}(s_\ii,s_\ff),
            &\text{ if } s_\ii < 0 \leqslant s_\ff
        \end{cases}.
    \end{equation}
Such expression allows us to define its linear basis analogous to \eqref{basis}:
    \begin{equation}
     \label{basis_2}
        \widehat{\mathscr{G}}_{ij} (s_\ii,s_\ff)
        = \begin{cases}
            G(s_\ii,s_\ff),
            &\text{ if }s_\ii \leqslant s_\ff < 0
            \text{ or }0\leqslant s_\ii \leqslant s_\ff \\
            \e^{\ii s_\ff H_s} \dyad{i}{j} \e^{\ii s_\ii H_s},
            &\text{ if } s_\ii < 0 \leqslant s_\ff
        \end{cases},
\end{equation}
which immediately gives us the similar result as \eqref{shift_property}:  
\begin{equation} \label{recurrence}
   \widehat{\mathscr{G}}_{ij}(s_\ii-\dt,s_\ff+\dt)
       = \sum_{k,l} b_{kl}^{ij} \widehat{\mathscr{G}}_{kl}(s_\ii,s_\ff) \text{~for~} s_\ii < 0 \leqslant s_\ff.
\end{equation}
With the above desired invariance satisfied, one can now repeat the analysis for Dyson series to obtain a formula similar to Theorem \ref{thm:Sscr_shift}, based on which the reuse algorithm can again be applied. We postpone the details in Section \ref{sec:semi_inchworm_numerical_method}. For now, let us focus on how the series $\mathcal{K}(-t,t)$ will be reformulated upon replacing all $G_s^{(0)}(s_\ii,s_\ff)$ by $\widehat{G}(s_\ii,s_\ff)$. In particular, we are interested in the new bath influence functional after this replacement.  

In the diagrammatic representation, all the thin line segments are replaced by the corresponding bold line segments only when they do not include time 0. Below, we will derive the integro-differential equation based on these propagators following the ``replace and remove'' procedure introduced in the previous subsection. Again, we start from the first diagram in $\mathcal{K}^{(0)}(-t,t)$ defined by \eqref{diagram dyson}. But now we split this diagram into two subdiagrams:
\begin{equation}
  \label{first thin diagram split}
 \begin{split}
    \begin{tikzpicture}[anchor=base, baseline] 
 \draw [thick] (-1.4,0) -- (1.4,0);
 \draw [thick] (-1.4,-0.1)--(-1.4,0.1); \draw [thick] (1.4,-0.1)--(1.4,0.1);
 \node [below] at (-1.4,-0.1) {$-t$}; 
  \node [below] at (1.4,-0.1) {$t$}; 
  \draw[-] (-1,0) to[bend left=60] (1.4,0);
  \draw plot[only marks,mark =*, mark options={color=black, scale=0.5}]coordinates {(-1,0)};
  \node [below] at (-1,0) {$s_1$};
 \end{tikzpicture}
 = & \ \int^0_{-t} \dd s_1 \  (-1)  \ii^2    G^{(0)}_s(s_1,t) W_s G^{(0)}_s(-t,s_1) B(s_1,t) \\
 & \hspace{3cm} + \int^t_0 \dd s_1 \   \ii^2  G^{(0)}_s(s_1,t) W_s  G^{(0)}_s(-t,s_1) B(s_1,t) \\ 
 =:  & \ 
 \begin{tikzpicture}[anchor=base, baseline]
\draw [thick] (-1.4,0)--(1.4,0);
\draw[-] (-0.7,0) to[bend left=75] (1.4,0);
\node [below] at (-1.4,-0.18) {$-t$};
\node [below] at (-0.7,-0.18) {$s_1$};
\node [below] at (0,-0.18) {0};
\draw [thick] (0,-0.1)--(0,0.1);
\node [below] at (1.4,-0.18) {$t$};
\draw [thick] (-1.4,-0.12)--(-1.4,0.12);
\draw plot[only marks,mark =*, mark options={color=black, scale=0.5}]coordinates{(-0.7,0)};
\draw [thick] (1.4,-0.12)--(1.4,0.12);
\end{tikzpicture}
+
 \begin{tikzpicture}[anchor=base, baseline]
\draw [thick] (-1.4,0)--(1.4,0);
\draw[-] (0.7,0) to[bend left=75] (1.4,0);
\node [below] at (-1.4,-0.18) {$-t$};
\node [below] at (0,-0.18) {0};
\node [below] at (0.7,-0.18) {$s_1$};
\node [below] at (1.4,-0.18) {$t$};
\draw [thick] (-1.4,-0.12)--(-1.4,0.12);
\draw [thick] (0,-0.1)--(0,0.1);
\draw plot[only marks,mark =*, mark options={color=black, scale=0.5}]coordinates{(0.7,0)};
\draw [thick] (1.4,-0.12)--(1.4,0.12);
\end{tikzpicture} .
\end{split}
\end{equation}
Here the time 0 is marked explicitly in the diagrams, and the location of $s_1$ with respect to time 0 indicates the domain of integration. When upgrading thin lines to bold lines, we will not change the line segments containing the zero point. This can be automatically achieved by replacing $G^{(0)}_s(s_\ii,s_\ff)$ with $\widehat{G}(s_\ii,s_\ff)$:
\begin{equation}
   \label{first bold diagram split}
   \begin{split}
   &  \int^0_{-t} \dd s_1 \  (-1)  \ii^2 \widehat{G}(s_1,t)   W_s  \widehat{G}(-t,s_1) B(s_1,t) + \int^t_0 \dd s_1 \    \ii^2  \widehat{G}(s_1,t) W_s \widehat{G}(-t,s_1)B(s_1,t) \\
= &  \int^0_{-t} \dd s_1 \  (-1)  \ii^2 G^{(0)}_s(s_1,t)  W_s  G(-t,s_1)B(s_1,t) + \int^t_0 \dd s_1 \    \ii^2 G(s_1,t)  W_s G^{(0)}_s(-t,s_1) B(s_1,t) \\
=: &  \begin{tikzpicture}[anchor=base, baseline]
\draw [thick] (-1.4,0)--(1.4,0);
\draw[-] (-0.7,0.1) to[bend left=75] (1.4,0.1);
\fill [black] (-1.4,-0.1) rectangle (-0.7,0.1);
\node [below] at (-1.4,-0.18) {$-t$};
\node [below] at (-0.7,-0.18) {$s_1$};
\node [below] at (0,-0.18) {0};
\node [below] at (1.4,-0.18) {$t$};
\draw [thick] (-1.4,-0.12)--(-1.4,0.12);
\draw [thick] (-0.7,-0.12)--(-0.7,0.12);
\draw [thick] (0,-0.1)--(0,0.1);\draw [thick] (1.4,-0.12)--(1.4,0.12);
\end{tikzpicture}
+ 
\begin{tikzpicture}[anchor=base, baseline]
\draw [thick] (-1.4,0)--(1.4,0);
\draw[-] (0.7,0.1) to[bend left=75] (1.4,0.1);
\fill [black] (0.7,-0.1) rectangle (1.4,0.1);
\node [below] at (-1.4,-0.18) {$-t$};
\node [below] at (0,-0.18) {0};
\node [below] at (0.7,-0.18) {$s_1$};
\node [below] at (1.4,-0.18) {$t$};
\draw [thick] (-1.4,-0.12)--(-1.4,0.12);
\draw [thick] (0,-0.1)--(0,0.1);\draw [thick] (0.7,-0.12)--(0.7,0.12);
\draw [thick] (1.4,-0.12)--(1.4,0.12);
\end{tikzpicture}.
 \end{split}
\end{equation}
Similar to \eqref{first term inchworm diagram}, one may expand the bold line segments as Dyson series by inserting the expressions \eqref{G expansion one arc} to find the thin diagrams included in \eqref{first bold diagram split}. Clearly, the two diagrams in \eqref{first thin diagram split} are included. In addition, the following diagrams with two arcs
\begin{displaymath}
\begin{tikzpicture}[anchor=base, baseline]
\draw [thick] (-1.4,0)--(1.4,0);
\draw (-1.05,0) arc[radius = 0.175,start angle = 180,end angle = 0];
\draw[-] (-0.35,0) to[bend left=75] (1.4,0);
\node [below left] at (-1.4,-0.18) {$-t$};
\node [below] at (-1.05,-0.18) {$s_1$};
\node [below] at (-0.7,-0.18) {$s_2$};
\node [below] at (-0.35,-0.18) {$s_3$};
\node [below] at (0,-0.18) {0};
\node [below] at (1.4,-0.18) {$t$};
\draw [thick] (-1.4,-0.12)--(-1.4,0.12);
\draw plot[only marks,mark =*, mark options={color=black, scale=0.5}]coordinates{(-1.05,0)};
\draw plot[only marks,mark =*, mark options={color=black, scale=0.5}]coordinates{(-0.7,0)};
\draw plot[only marks,mark =*, mark options={color=black, scale=0.5}]coordinates{(-0.35,0)};
\draw [thick] (0,-0.1)--(0,0.1);
\draw [thick] (1.4,-0.12)--(1.4,0.12);
\end{tikzpicture}
\in 
\begin{tikzpicture}[anchor=base, baseline]
\draw [thick] (-1.4,0)--(1.4,0);
\draw[-] (-0.7,0.1) to[bend left=75] (1.4,0.1);
\fill [black] (-1.4,-0.1) rectangle (-0.7,0.1);
\node [below] at (-1.4,-0.18) {$-t$};
\node [below] at (-0.7,-0.18) {$s_1$};
\node [below] at (0,-0.18) {0};
\node [below] at (1.4,-0.18) {$t$};
\draw [thick] (-1.4,-0.12)--(-1.4,0.12);
\draw [thick] (-0.7,-0.12)--(-0.7,0.12);
\draw [thick] (0,-0.1)--(0,0.1);
\draw [thick] (1.4,-0.12)--(1.4,0.12);
\end{tikzpicture}\quad ,\quad 
\begin{tikzpicture}[anchor=base, baseline]
\draw [thick] (-1.4,0)--(1.4,0);
\draw[-] (0.35,0) to[bend left=75] (1.4,0);
\draw (0.7,0) arc[radius = 0.175,start angle = 180,end angle = 0];
\node [below] at (-1.4,-0.18) {$-t$};
\node [below] at (0,-0.18) {0};
\node [below] at (0.35,-0.18) {$s_1$};
\node [below] at (0.7,-0.18) {$s_2$};
\node [below] at (1.05,-0.18) {$s_3$};
\node [below] at (1.4,-0.18) {$t$};
\draw [thick] (-1.4,-0.12)--(-1.4,0.12);
\draw [thick] (0,-0.1)--(0,0.1);
\draw plot[only marks,mark =*, mark options={color=black, scale=0.5}]coordinates{(0.35,0)};
\draw plot[only marks,mark =*, mark options={color=black, scale=0.5}]coordinates{(0.7,0)};
\draw plot[only marks,mark =*, mark options={color=black, scale=0.5}]coordinates{(1.05,0)};
\draw [thick] (1.4,-0.12)--(1.4,0.12);
\end{tikzpicture} \in 
\begin{tikzpicture}[anchor=base, baseline]
\draw [thick] (-1.4,0)--(1.4,0);
\draw[-] (0.7,0.1) to[bend left=75] (1.4,0.1);
\fill [black] (0.7,-0.1) rectangle (1.4,0.1);
\node [below] at (-1.4,-0.18) {$-t$};
\node [below] at (0,-0.18) {0};
\node [below] at (0.7,-0.18) {$s_1$};
\node [below] at (1.4,-0.18) {$t$};
\draw [thick] (-1.4,-0.12)--(-1.4,0.12);
\draw [thick] (0,-0.1)--(0,0.1);
\draw [thick] (0.7,-0.12)--(0.7,0.12);
\draw [thick] (1.4,-0.12)--(1.4,0.12);
\end{tikzpicture}
\end{displaymath}
as well as infinitely many other diagrams with more arcs also appear in \eqref{first bold diagram split}. These diagrams should be removed from the sum after replacing \eqref{first thin diagram split} with \eqref{first bold diagram split}, and these newly introduced diagrams in \eqref{first bold diagram split} which combine thin and bold line segments can again be considered as partial sums of $\mathcal{K}(-t,t)$. To proceed, we split the diagrams with two arcs in \eqref{diagram dyson} by adding the zero point: 
\begin{equation}
   \label{second thin diagram split}
  \begin{split}
 &  \begin{tikzpicture}[anchor=base, baseline] 
 \draw [thick] (-1.2,0) -- (1.2,0);
 \draw [thick] (-1.2,-0.1)--(-1.2,0.1); \draw [thick] (1.2,-0.1)--(1.2,0.1);
 \node [below] at (-1.2,-0.1) {$-t$}; 
  \node [below] at (1.2,-0.1) {$t$}; 
  \draw[-] (-0.8,0) to[bend left=75] (0,0);
    \draw[-] (0.4,0) to[bend left=75] (1.2,0);
  \draw plot[only marks,mark =*, mark options={color=black, scale=0.5}]coordinates {(-0.8,0) (0,0) (0.4,0)};
  \node [below] at (-0.8,0) {$s_1$};  \node [below] at (0,0) {$s_2$};   \node [below] at (0.4,0) {$s_3$};
\end{tikzpicture}
 = 
\boxed{ \begin{tikzpicture}[anchor=base, baseline]
\draw [thick] (-1.2,0)--(1.2,0);
\draw (-0.9,0) arc[radius = 0.15,start angle = 180,end angle = 0];
\draw[-] (-0.3,0) to[bend left=75] (1.2,0);
\node [below left] at (-1.2,-0.18) {$-t$};
\node [below] at (-0.9,-0.18) {$s_1$};
\node [below] at (-0.6,-0.18) {$s_2$};
\node [below] at (-0.3,-0.18) {$s_3$};
\node [below] at (0,-0.18) {0};
\node [below] at (1.2,-0.18) {$t$};
\draw [thick] (-1.2,-0.12)--(-1.2,0.12);
\draw plot[only marks,mark =*, mark options={scale=0.5}]coordinates{(-0.9,0)};
\draw plot[only marks,mark =*, mark options={ scale=0.5}]coordinates{(-0.6,0)};
\draw plot[only marks,mark =*, mark options={scale=0.5}]coordinates{(-0.3,0)};
\draw [thick] (0,-0.1)--(0,0.1);
\draw [thick] (1.2,-0.12)--(1.2,0.12);
\end{tikzpicture} }
+
\begin{tikzpicture}[anchor=base, baseline]
\draw [thick] (-1.2,0)--(1.2,0);
\draw (-0.8,0) arc[radius = 0.2,start angle = 180,end angle = 0];
\draw[-] (0.6,0) to[bend left=75] (1.2,0);
\node [below] at (-1.2,-0.18) {$-t$};
\node [below] at (-0.8,-0.18) {$s_1$};
\node [below] at (-0.4,-0.18) {$s_2$};
\node [below] at (0,-0.18) {0};
\node [below] at (0.6,-0.18) {$s_3$};
\node [below] at (1.2,-0.18) {$t$};
\draw [thick] (-1.2,-0.12)--(-1.2,0.12);
\draw plot[only marks,mark =*, mark options={color=black, scale=0.5}]coordinates{(-0.8,0)};
\draw plot[only marks,mark =*, mark options={color=black, scale=0.5}]coordinates{(-0.4,0)};
\draw [thick] (0,-0.1)--(0,0.1);
\draw plot[only marks,mark =*, mark options={color=black, scale=0.5}]coordinates{(0.6,0)};
\draw [thick] (1.2,-0.12)--(1.2,0.12);
\end{tikzpicture}
+
\begin{tikzpicture}[anchor=base, baseline]
\draw [thick] (-1.2,0)--(1.2,0);
\draw[-] (-0.6,0) to[bend left=75] (0.4,0);
\draw (0.8,0) arc[radius = 0.2,start angle = 180,end angle = 0];
\node [below] at (-1.2,-0.18) {$-t$};
\node [below] at (-0.6,-0.18) {$s_1$};
\node [below] at (0,-0.18) {0};
\node [below] at (0.4,-0.18) {$s_2$};
\node [below] at (0.8,-0.18) {$s_3$};
\node [below] at (1.2,-0.18) {$t$};
\draw [thick] (-1.2,-0.12)--(-1.2,0.12);
\draw plot[only marks,mark =*, mark options={color=black, scale=0.5}]coordinates{(-0.6,0)};
\draw [thick] (0,-0.1)--(0,0.1);
\draw plot[only marks,mark =*, mark options={color=black, scale=0.5}]coordinates{(0.4,0)};
\draw plot[only marks,mark =*, mark options={color=black, scale=0.5}]coordinates{(0.8,0)};
\draw [thick] (1.2,-0.12)--(1.2,0.12);
\end{tikzpicture}
+
\begin{tikzpicture}[anchor=base, baseline]
\draw [thick] (-1.2,0)--(1.2,0);
\draw (0.3,0) arc[radius = 0.15,start angle = 180,end angle = 0];
\draw (0.9,0) arc[radius = 0.15,start angle = 180,end angle = 0];
\node [below] at (-1.2,-0.18) {$-t$};
\node [below] at (0,-0.18) {0};
\node [below] at (0.3,-0.18) {$s_1$};
\node [below] at (0.6,-0.18) {$s_2$};
\node [below] at (0.9,-0.18) {$s_3$};
\node [below] at (1.2,-0.18) {$t$};
\draw [thick] (-1.2,-0.12)--(-1.2,0.12);
\draw [thick] (0,-0.1)--(0,0.1);
\draw plot[only marks,mark =*, mark options={color=black, scale=0.5}]coordinates{(0.3,0)};
\draw plot[only marks,mark =*, mark options={color=black, scale=0.5}]coordinates{(0.6,0)};
\draw plot[only marks,mark =*, mark options={color=black, scale=0.5}]coordinates{(0.9,0)};
\draw [thick] (1.2,-0.12)--(1.2,0.12);
\end{tikzpicture},\\
& \begin{tikzpicture}[anchor=base, baseline] 
 \draw [thick] (-1.2,0) -- (1.2,0);
 \draw [thick] (-1.2,-0.1)--(-1.2,0.1); \draw [thick] (1.2,-0.1)--(1.2,0.1);
 \node [below] at (-1.2,-0.1) {$-t$}; 
  \node [below] at (1.2,-0.1) {$t$}; 
  \draw[-] (-0.8,0) to[bend left=75] (0.6,0);
    \draw[-] (-0.2,0) to[bend left=75] (1.2,0);
  \draw plot[only marks,mark =*, mark options={color=black, scale=0.5}]coordinates {(-0.8,0) (-0.2,0) (0.6,0)};
  \node [below] at (-0.8,0) {$s_1$};  \node [below] at (-0.2,0) {$s_2$};   \node [below] at (0.6,0) {$s_3$};
\end{tikzpicture}
 = 
 \begin{tikzpicture}[anchor=base, baseline]
\draw [thick] (-1.2,0)--(1.2,0);
\draw[-] (-0.9,0) to[bend left=75] (-0.3,0);
\draw[-] (-0.6,0) to[bend left=75] (1.2,0);
\node [below left] at (-1.2,-0.18) {$-t$};
\node [below] at (-0.9,-0.18) {$s_1$};
\node [below] at (-0.6,-0.18) {$s_2$};
\node [below] at (-0.3,-0.18) {$s_3$};
\node [below] at (0,-0.18) {0};
\node [below] at (1.2,-0.18) {$t$};
\draw [thick] (-1.2,-0.12)--(-1.2,0.12);
\draw plot[only marks,mark =*, mark options={color=black, scale=0.5}]coordinates{(-0.9,0)};
\draw plot[only marks,mark =*, mark options={color=black, scale=0.5}]coordinates{(-0.6,0)};
\draw plot[only marks,mark =*, mark options={color=black, scale=0.5}]coordinates{(-0.3,0)};
\draw [thick] (0,-0.1)--(0,0.1);
\draw [thick] (1.2,-0.12)--(1.2,0.12);
\end{tikzpicture}
+
\begin{tikzpicture}[anchor=base, baseline]
\draw [thick] (-1.2,0)--(1.2,0);
\draw[-] (-0.8,0) to[bend left=75] (0.6,0);
\draw[-] (-0.4,0) to[bend left=75] (1.2,0);
\node [below] at (-1.2,-0.18) {$-t$};
\node [below] at (-0.8,-0.18) {$s_1$};
\node [below] at (-0.4,-0.18) {$s_2$};
\node [below] at (0,-0.18) {0};
\node [below] at (0.6,-0.18) {$s_3$};
\node [below] at (1.2,-0.18) {$t$};
\draw [thick] (-1.2,-0.12)--(-1.2,0.12);
\draw plot[only marks,mark =*, mark options={color=black, scale=0.5}]coordinates{(-0.8,0)};
\draw plot[only marks,mark =*, mark options={color=black, scale=0.5}]coordinates{(-0.4,0)};
\draw [thick] (0,-0.1)--(0,0.1);
\draw plot[only marks,mark =*, mark options={color=black, scale=0.5}]coordinates{(0.6,0)};
\draw [thick] (1.2,-0.12)--(1.2,0.12);
\end{tikzpicture}
+
\begin{tikzpicture}[anchor=base, baseline]
\draw [thick] (-1.2,0)--(1.2,0);
\draw[-] (-0.6,0) to[bend left=75] (0.8,0);
\draw[-] (0.4,0) to[bend left=75] (1.2,0);
\node [below] at (-1.2,-0.18) {$-t$};
\node [below] at (-0.6,-0.18) {$s_1$};
\node [below] at (0,-0.18) {0};
\node [below] at (0.4,-0.18) {$s_2$};
\node [below] at (0.8,-0.18) {$s_3$};
\node [below] at (1.2,-0.18) {$t$};
\draw [thick] (-1.2,-0.12)--(-1.2,0.12);
\draw plot[only marks,mark =*, mark options={color=black, scale=0.5}]coordinates{(-0.6,0)};
\draw [thick] (0,-0.1)--(0,0.1);
\draw plot[only marks,mark =*, mark options={color=black, scale=0.5}]coordinates{(0.4,0)};
\draw plot[only marks,mark =*, mark options={color=black, scale=0.5}]coordinates{(0.8,0)};
\draw [thick] (1.2,-0.12)--(1.2,0.12);
\end{tikzpicture}
+
\begin{tikzpicture}[anchor=base, baseline]
\draw [thick] (-1.2,0)--(1.2,0);
\draw[-] (0.3,0) to[bend left=75] (0.9,0);
\draw[-] (0.6,0) to[bend left=75] (1.2,0);
\node [below] at (-1.2,-0.18) {$-t$};
\node [below] at (0,-0.18) {0};
\node [below] at (0.3,-0.18) {$s_1$};
\node [below] at (0.6,-0.18) {$s_2$};
\node [below] at (0.9,-0.18) {$s_3$};
\node [below] at (1.2,-0.18) {$t$};
\draw [thick] (-1.2,-0.12)--(-1.2,0.12);
\draw [thick] (0,-0.1)--(0,0.1);
\draw plot[only marks,mark =*, mark options={color=black, scale=0.5}]coordinates{(0.3,0)};
\draw plot[only marks,mark =*, mark options={color=black, scale=0.5}]coordinates{(0.6,0)};
\draw plot[only marks,mark =*, mark options={color=black, scale=0.5}]coordinates{(0.9,0)};
\draw [thick] (1.2,-0.12)--(1.2,0.12);
\end{tikzpicture},
\\
& \begin{tikzpicture}[anchor=base, baseline] 
 \draw [thick] (-1.2,0) -- (1.2,0);
 \draw [thick] (-1.2,-0.1)--(-1.2,0.1); \draw [thick] (1.2,-0.1)--(1.2,0.1);
 \node [below] at (-1.2,-0.1) {$-t$}; 
  \node [below] at (1.2,-0.1) {$t$}; 
  \draw[-] (-0.8,0) to[bend left=75] (1.2,0);
    \draw[-] (-0.2,0) to[bend left=75] (0.6,0);
  \draw plot[only marks,mark =*, mark options={color=black, scale=0.5}]coordinates {(-0.8,0) (-0.2,0) (0.6,0)};
  \node [below] at (-0.8,0) {$s_1$};  \node [below] at (-0.2,0) {$s_2$};   \node [below] at (0.6,0) {$s_3$};
\end{tikzpicture} = 
\begin{tikzpicture}[anchor=base, baseline]
\draw [thick] (-1.2,0)--(1.2,0);
\draw[-] (-0.9,0) to[bend left=60] (1.2,0);
\draw (-0.6,0) arc[radius = 0.15,start angle = 180,end angle = 0];
\node [below left] at (-1.2,-0.18) {$-t$};
\node [below] at (-0.9,-0.18) {$s_1$};
\node [below] at (-0.6,-0.18) {$s_2$};
\node [below] at (-0.3,-0.18) {$s_3$};
\node [below] at (0,-0.18) {0};
\node [below] at (1.2,-0.18) {$t$};
\draw [thick] (-1.2,-0.12)--(-1.2,0.12);
\draw plot[only marks,mark =*, mark options={color=black, scale=0.5}]coordinates{(-0.9,0)};
\draw plot[only marks,mark =*, mark options={color=black, scale=0.5}]coordinates{(-0.6,0)};
\draw plot[only marks,mark =*, mark options={color=black, scale=0.5}]coordinates{(-0.3,0)};
\draw [thick] (0,-0.1)--(0,0.1);
\draw [thick] (1.2,-0.12)--(1.2,0.12);
\end{tikzpicture}
+
\begin{tikzpicture}[anchor=base, baseline]
\draw [thick] (-1.2,0)--(1.2,0);
\draw[-] (-0.8,0) to[bend left=75] (1.2,0);
\draw[-] (-0.4,0) to[bend left=75] (0.6,0);
\node [below] at (-1.2,-0.18) {$-t$};
\node [below] at (-0.8,-0.18) {$s_1$};
\node [below] at (-0.4,-0.18) {$s_2$};
\node [below] at (0,-0.18) {0};
\node [below] at (0.6,-0.18) {$s_3$};
\node [below] at (1.2,-0.18) {$t$};
\draw [thick] (-1.2,-0.12)--(-1.2,0.12);
\draw plot[only marks,mark =*, mark options={color=black, scale=0.5}]coordinates{(-0.8,0)};
\draw plot[only marks,mark =*, mark options={color=black, scale=0.5}]coordinates{(-0.4,0)};
\draw [thick] (0,-0.1)--(0,0.1);
\draw plot[only marks,mark =*, mark options={color=black, scale=0.5}]coordinates{(0.6,0)};
\draw [thick] (1.2,-0.12)--(1.2,0.12);
\end{tikzpicture}
+
\begin{tikzpicture}[anchor=base, baseline]
\draw [thick] (-1.2,0)--(1.2,0);
\draw[-] (-0.6,0) to[bend left=75] (1.2,0);
\draw (0.4,0) arc[radius = 0.2,start angle = 180,end angle = 0];
\node [below] at (-1.2,-0.18) {$-t$};
\node [below] at (-0.6,-0.18) {$s_1$};
\node [below] at (0,-0.18) {0};
\node [below] at (0.4,-0.18) {$s_2$};
\node [below] at (0.8,-0.18) {$s_3$};
\node [below] at (1.2,-0.18) {$t$};
\draw [thick] (-1.2,-0.12)--(-1.2,0.12);
\draw plot[only marks,mark =*, mark options={color=black, scale=0.5}]coordinates{(-0.6,0)};
\draw [thick] (0,-0.1)--(0,0.1);
\draw plot[only marks,mark =*, mark options={color=black, scale=0.5}]coordinates{(0.4,0)};
\draw plot[only marks,mark =*, mark options={color=black, scale=0.5}]coordinates{(0.8,0)};
\draw [thick] (1.2,-0.12)--(1.2,0.12);
\end{tikzpicture}
+
\boxed{ \begin{tikzpicture}[anchor=base, baseline]
\draw [thick] (-1.2,0)--(1.2,0);
\draw[-] (0.3,0) to[bend left=75] (1.2,0);
\draw (0.6,0) arc[radius = 0.15,start angle = 180,end angle = 0];
\node [below] at (-1.2,-0.18) {$-t$};
\node [below] at (0,-0.18) {0};
\node [below] at (0.3,-0.18) {$s_1$};
\node [below] at (0.6,-0.18) {$s_2$};
\node [below] at (0.9,-0.18) {$s_3$};
\node [below] at (1.2,-0.18) {$t$};
\draw [thick] (-1.2,-0.12)--(-1.2,0.12);
\draw [thick] (0,-0.1)--(0,0.1);
\draw plot[only marks,mark =*, mark options={scale=0.5}]coordinates{(0.3,0)};
\draw plot[only marks,mark =*, mark options={scale=0.5}]coordinates{(0.6,0)};
\draw plot[only marks,mark =*, mark options={scale=0.5}]coordinates{(0.9,0)};
\draw [thick] (1.2,-0.12)--(1.2,0.12);
\end{tikzpicture} } .
\end{split}
\end{equation}
As mentioned previously, the two diagrams in the boxes are already included in \eqref{first bold diagram split}, and do not need to be considered any more. For other diagrams on the right-hand sides, we again replace them with the new diagrams by changing $G_s^{(0)}$ to $\widehat{G}$, and delete the thin diagrams included in these diagrams. By repeating this process, the resummation can be written as 
\begin{multline}
    \label{diagram semi inchworm}
    \mathcal{K}(-t,t) =  \widehat{\mathcal{K}}(-t,t) \coloneqq 
    \begin{tikzpicture}[anchor=base, baseline]
\draw [thick] (-1.2,0)--(1.2,0);
\draw[-] (-0.6,0.1) to[bend left=75] (1.2,0);
\fill [black] (-1.2,-0.1) rectangle (-0.6,0.1);
\node [below] at (-1.2,-0.18) {$-t$};
\node [below] at (-0.6,-0.18) {$s_1$};
\node [below] at (0,-0.18) {0};
\node [below] at (1.2,-0.18) {$t$};
\draw [thick] (-1.2,-0.12)--(-1.2,0.12);
\draw [thick] (-0.6,-0.12)--(-0.6,0.12);
\draw [thick] (0,-0.1)--(0,0.1);
\draw [thick] (1.2,-0.12)--(1.2,0.12);
\end{tikzpicture}
+
\begin{tikzpicture}[anchor=base, baseline]
\draw [thick] (-1.2,0)--(1.2,0);
\draw[-] (0.6,0.1) to[bend left=75] (1.2,0.1);
\fill [black] (0.6,-0.1) rectangle (1.2,0.1);
\node [below] at (-1.2,-0.18) {$-t$};
\node [below] at (0,-0.18) {0};
\node [below] at (0.6,-0.18) {$s_1$};
\node [below] at (1.2,-0.18) {$t$};
\draw [thick] (-1.2,-0.12)--(-1.2,0.12);
\draw [thick] (0,-0.1)--(0,0.1);\draw [thick] (0.6,-0.12)--(0.6,0.12);
\draw [thick] (1.2,-0.12)--(1.2,0.12);
\end{tikzpicture}
+
\begin{tikzpicture}[anchor=base, baseline]
\draw [thick] (-1.2,0)--(1.2,0);
\draw (-0.8,0.1) arc[radius = 0.2,start angle = 180,end angle = 0];
\draw[-] (0.6,0.1) to[bend left=75] (1.2,0.1);
\fill [black] (-1.2,-0.1) rectangle (-0.4,0.1);
\fill [black] (0.6,-0.1) rectangle (1.2,0.1);
\node [below] at (-1.2,-0.18) {$-t$};
\node [below] at (-0.8,-0.18) {$s_1$};
\node [below] at (-0.4,-0.18) {$s_2$};
\node [below] at (0,-0.18) {0};
\node [below] at (0.6,-0.18) {$s_3$};
\node [below] at (1.2,-0.18) {$t$};
\draw [thick] (-1.2,-0.12)--(-1.2,0.12);
\draw [thick, white] (-0.8,-0.1)--(-0.8,0.1);
\draw [thick] (-0.4,-0.12)--(-0.4,0.12);
\draw [thick] (0,-0.1)--(0,0.1);\draw [thick] (0.6,-0.12)--(0.6,0.12);
\draw [thick] (1.2,-0.12)--(1.2,0.12);
\end{tikzpicture}
+
\begin{tikzpicture}[anchor=base, baseline]
\draw [thick] (-1.2,0)--(1.2,0);
\draw[-] (-0.6,0.1) to[bend left=75] (0.4,0.1);
\draw (0.8,0.1) arc[radius = 0.2,start angle = 180,end angle = 0];
\fill [black] (-1.2,-0.1) rectangle (-0.6,0.1);
\fill [black] (0.4,-0.1) rectangle (1.2,0.1);
\node [below] at (-1.2,-0.18) {$-t$};
\node [below] at (-0.6,-0.18) {$s_1$};
\node [below] at (0,-0.18) {0};
\node [below] at (0.4,-0.18) {$s_2$};
\node [below] at (0.8,-0.18) {$s_3$};
\node [below] at (1.2,-0.18) {$t$};
\draw [thick] (-1.2,-0.12)--(-1.2,0.12);
\draw [thick] (-0.6,-0.12)--(-0.6,0.12);
\draw [thick] (0,-0.1)--(0,0.1);\draw [thick] (0.4,-0.12)--(0.4,0.12);
\draw [thick, white] (0.8,-0.1)--(0.8,0.1);
\draw [thick] (1.2,-0.12)--(1.2,0.12);
\end{tikzpicture}\\
+
\begin{tikzpicture}[anchor=base, baseline]
\draw [thick] (-1.2,0)--(1.2,0);
\draw (0.3,0.1) arc[radius = 0.15,start angle = 180,end angle = 0];
\draw (0.9,0.1) arc[radius = 0.15,start angle = 180,end angle = 0];
\fill [black] (0.3,-0.1) rectangle (1.2,0.1);
\node [below] at (-1.2,-0.18) {$-t$};
\node [below] at (0,-0.18) {0};
\node [below] at (0.3,-0.18) {$s_1$};
\node [below] at (0.6,-0.18) {$s_2$};
\node [below] at (0.9,-0.18) {$s_3$};
\node [below] at (1.2,-0.18) {$t$};
\draw [thick] (-1.2,-0.12)--(-1.2,0.12);
\draw [thick] (0,-0.1)--(0,0.1);\draw [thick] (0.3,-0.12)--(0.3,0.12);
\draw [thick, white] (0.6,-0.1)--(0.6,0.1);
\draw [thick, white] (0.9,-0.1)--(0.9,0.1);
\draw [thick] (1.2,-0.12)--(1.2,0.12);
\end{tikzpicture}
+
\begin{tikzpicture}[anchor=base, baseline]
\draw [thick] (-1.2,0)--(1.2,0);
\draw[-] (-0.9,0.1) to[bend left=75] (-0.3,0.1);
\draw[-] (-0.6,0.1) to[bend left=75] (1.2,0);
\fill [black] (-1.2,-0.1) rectangle (-0.3,0.1);
\node [below left] at (-1.2,-0.18) {$-t$};
\node [below] at (-0.9,-0.18) {$s_1$};
\node [below] at (-0.6,-0.18) {$s_2$};
\node [below] at (-0.3,-0.18) {$s_3$};
\node [below] at (0,-0.18) {0};
\node [below] at (1.2,-0.18) {$t$};
\draw [thick] (-1.2,-0.12)--(-1.2,0.12);
\draw [thick, white] (-0.9,-0.1)--(-0.9,0.1);
\draw [thick, white] (-0.6,-0.1)--(-0.6,0.1);
\draw [thick] (-0.3,-0.12)--(-0.3,0.12);
\draw [thick] (0,-0.1)--(0,0.1);\draw [thick] (1.2,-0.12)--(1.2,0.12);
\end{tikzpicture}
+
\begin{tikzpicture}[anchor=base, baseline]
\draw [thick] (-1.2,0)--(1.2,0);
\draw[-] (-0.8,0.1) to[bend left=75] (0.6,0.1);
\draw[-] (-0.4,0.1) to[bend left=75] (1.2,0.1);
\fill [black] (-1.2,-0.1) rectangle (-0.4,0.1);
\fill [black] (0.6,-0.1) rectangle (1.2,0.1);
\node [below] at (-1.2,-0.18) {$-t$};
\node [below] at (-0.8,-0.18) {$s_1$};
\node [below] at (-0.4,-0.18) {$s_2$};
\node [below] at (0,-0.18) {0};
\node [below] at (0.6,-0.18) {$s_3$};
\node [below] at (1.2,-0.18) {$t$};
\draw [thick] (-1.2,-0.12)--(-1.2,0.12);
\draw [thick, white] (-0.8,-0.1)--(-0.8,0.1);
\draw [thick] (-0.4,-0.12)--(-0.4,0.12);
\draw [thick] (0,-0.1)--(0,0.1);\draw [thick] (0.6,-0.12)--(0.6,0.12);
\draw [thick] (1.2,-0.12)--(1.2,0.12);
\end{tikzpicture}
+
\begin{tikzpicture}[anchor=base, baseline]
\draw [thick] (-1.2,0)--(1.2,0);
\draw[-] (-0.6,0.1) to[bend left=75] (0.8,0.1);
\draw[-] (0.4,0.1) to[bend left=75] (1.2,0.1);
\fill [black] (-1.2,-0.1) rectangle (-0.6,0.1);
\fill [black] (0.4,-0.1) rectangle (1.2,0.1);
\node [below] at (-1.2,-0.18) {$-t$};
\node [below] at (-0.6,-0.18) {$s_1$};
\node [below] at (0,-0.18) {0};
\node [below] at (0.4,-0.18) {$s_2$};
\node [below] at (0.8,-0.18) {$s_3$};
\node [below] at (1.2,-0.18) {$t$};
\draw [thick] (-1.2,-0.12)--(-1.2,0.12);
\draw [thick] (-0.6,-0.12)--(-0.6,0.12);
\draw [thick] (0,-0.1)--(0,0.1);\draw [thick] (0.4,-0.12)--(0.4,0.12);
\draw [thick, white] (0.8,-0.1)--(0.8,0.1);
\draw [thick] (1.2,-0.12)--(1.2,0.12);
\end{tikzpicture}
+
\begin{tikzpicture}[anchor=base, baseline]
\draw [thick] (-1.2,0)--(1.2,0);
\draw[-] (0.3,0.1) to[bend left=75] (0.9,0.1);
\draw[-] (0.6,0.1) to[bend left=75] (1.2,0.1);
\fill [black] (0.3,-0.1) rectangle (1.2,0.1);
\node [below] at (-1.2,-0.18) {$-t$};
\node [below] at (0,-0.18) {0};
\node [below] at (0.3,-0.18) {$s_1$};
\node [below] at (0.6,-0.18) {$s_2$};
\node [below] at (0.9,-0.18) {$s_3$};
\node [below] at (1.2,-0.18) {$t$};
\draw [thick] (-1.2,-0.12)--(-1.2,0.12);
\draw [thick] (0,-0.1)--(0,0.1);\draw [thick] (0.3,-0.12)--(0.3,0.12);
\draw [thick, white] (0.6,-0.1)--(0.6,0.1);
\draw [thick, white] (0.9,-0.1)--(0.9,0.1);
\draw [thick] (1.2,-0.12)--(1.2,0.12);
\end{tikzpicture}\\
+
\begin{tikzpicture}[anchor=base, baseline]
\draw [thick] (-1.2,0)--(1.2,0);
\draw[-] (-0.9,0.1) to[bend left=75] (1.2,0);
\draw (-0.6,0.1) arc[radius = 0.15,start angle = 180,end angle = 0];
\fill [black] (-1.2,-0.1) rectangle (-0.3,0.1);
\node [below left] at (-1.2,-0.18) {$-t$};
\node [below] at (-0.9,-0.18) {$s_1$};
\node [below] at (-0.6,-0.18) {$s_2$};
\node [below] at (-0.3,-0.18) {$s_3$};
\node [below] at (0,-0.18) {0};
\node [below] at (1.2,-0.18) {$t$};
\draw [thick] (-1.2,-0.12)--(-1.2,0.12);
\draw [thick, white] (-0.9,-0.1)--(-0.9,0.1);
\draw [thick, white] (-0.6,-0.1)--(-0.6,0.1);
\draw [thick] (-0.3,-0.12)--(-0.3,0.12);
\draw [thick] (0,-0.1)--(0,0.1);\draw [thick] (1.2,-0.12)--(1.2,0.12);
\end{tikzpicture}
+
\begin{tikzpicture}[anchor=base, baseline]
\draw [thick] (-1.2,0)--(1.2,0);
\draw[-] (-0.8,0.1) to[bend left=75] (1.2,0.1);
\draw[-] (-0.4,0.1) to[bend left=75] (0.6,0.1);
\fill [black] (-1.2,-0.1) rectangle (-0.4,0.1);
\fill [black] (0.6,-0.1) rectangle (1.2,0.1);
\node [below] at (-1.2,-0.18) {$-t$};
\node [below] at (-0.8,-0.18) {$s_1$};
\node [below] at (-0.4,-0.18) {$s_2$};
\node [below] at (0,-0.18) {0};
\node [below] at (0.6,-0.18) {$s_3$};
\node [below] at (1.2,-0.18) {$t$};
\draw [thick] (-1.2,-0.12)--(-1.2,0.12);
\draw [thick, white] (-0.8,-0.1)--(-0.8,0.1);
\draw [thick] (-0.4,-0.12)--(-0.4,0.12);
\draw [thick] (0,-0.1)--(0,0.1);\draw [thick] (0.6,-0.12)--(0.6,0.12);
\draw [thick] (1.2,-0.12)--(1.2,0.12);
\end{tikzpicture}
+
\begin{tikzpicture}[anchor=base, baseline]
\draw [thick] (-1.2,0)--(1.2,0);
\draw[-] (-0.6,0.1) to[bend left=75] (1.2,0.1);
\draw (0.4,0.1) arc[radius = 0.2,start angle = 180,end angle = 0];
\fill [black] (-1.2,-0.1) rectangle (-0.6,0.1);
\fill [black] (0.4,-0.1) rectangle (1.2,0.1);
\node [below] at (-1.2,-0.18) {$-t$};
\node [below] at (-0.6,-0.18) {$s_1$};
\node [below] at (0,-0.18) {0};
\node [below] at (0.4,-0.18) {$s_2$};
\node [below] at (0.8,-0.18) {$s_3$};
\node [below] at (1.2,-0.18) {$t$};
\draw [thick] (-1.2,-0.12)--(-1.2,0.12);
\draw [thick] (-0.6,-0.12)--(-0.6,0.12);
\draw [thick] (0,-0.1)--(0,0.1);\draw [thick] (0.4,-0.12)--(0.4,0.12);
\draw [thick, white] (0.8,-0.1)--(0.8,0.1);
\draw [thick] (1.2,-0.12)--(1.2,0.12);
\end{tikzpicture} + \cdots .
\end{multline}
Here we only draw diagrams with up to three time points for conciseness. The formula above shows that most diagrams have a thin line including zero sandwiched by two bold lines. We therefore call the stochastic method based on \eqref{diagram semi inchworm} the ``bold-thin-bold diagrammatic Monte Carlo'' method, or the BTB method in short.

Mathematically, we use the notation $\widehat{\mathcal{Q}}(\bs)$ to denote the set of pairings with nodes $\bs$ that are preserved in \eqref{diagram semi inchworm}. Similar to $\mathcal{Q}^c(\bs)$ for the inchworm method, $\widehat{\mathcal{Q}}(\bs)$ is again a subset of $\mathcal{Q}(\bs)$. However, since the location of time 0 is taken into consideration now in each diagram, the definition of $\widehat{\mathcal{Q}}(\bs)$ relies on the distribution of $\bs$. For example,    
\begin{align*}
      &\widehat{\mathcal{Q}}(s_1,s_2) = \big\{\{(s_1,s_2)\}\big\},\\
      & \widehat{\mathcal{Q}}(s_1,s_2,s_3,s_4)  =
      \begin{cases}
            \big\{\{(s_1,s_3),(s_2,s_4)\},\{(s_1,s_4),(s_2,s_3)\}\big\},
            &\text{if } s_1 \leqslant s_2 \leqslant s_3 < 0 \\ 
           \mathcal{Q}(s_1,s_2,s_3,s_4),
            &\text{if } s_1 \leqslant s_2 < 0 \leqslant s_3 \text{~or~}  s_1 < 0 \leqslant  s_2  \leqslant s_3  \\
            \big\{\{(s_1,s_2),(s_3,s_4)\},\{(s_1,s_3),(s_2,s_4)\}\big\},
            &\text{if } 0 \leqslant s_1 \leqslant s_2 \leqslant s_3 
        \end{cases}.
\end{align*}
In general, the set $\widehat{\mathcal{Q}}(\bs)$ is given by the following definition:
\begin{definition}\label{def:improper_diagrams}
Given $\bs = (s_1,\cdots,s_m)$ and $s_1 < \cdots < s_{\ell-1} < 0 \leqslant s_{\ell} < \cdots < s_m$ for some even $m$ ($\ell=1$ if $s_1 \geqslant 0$), the set $\widehat{\mathcal{Q}}(\boldsymbol{s})$ is defined by its complement in $\mathcal{Q}(\boldsymbol{s})$:
for $\mathfrak{q}\in\mathcal{Q}(\boldsymbol{s})$,
 we have $\mathfrak{q}\not\in\widehat{\mathcal{Q}}(\boldsymbol{s})$ iff there exists a subset 
$$\mathfrak{q}' := \{(s_{i_1},s_{i_2}),\cdots,(s_{i_{2k-1}},s_{i_{2k}})\} \subset \mathfrak{q}$$ satisfying the following two properties:
\begin{enumerate}
    \item[\textbf{(a)}] $\{i_1,\cdots,i_{2k}\}$ is a set of consecutive integers in $\{1,\cdots,m\}$;
    \item[\textbf{(b)}] $n_1 < n_2 < \ell -1$ or $\ell < n_1 < n_2 < m$ where $n_1 = \min\{i_1,\cdots,i_{2k}\}$ and $n_2 = \max\{ i_1,\cdots,i_{2k}\}$.
\end{enumerate}
\end{definition}
The set $\mathcal{Q}(\boldsymbol{s})\backslash\widehat{\mathcal{Q}}(\boldsymbol{s})$ includes all possible pairings in the Dyson series \eqref{diagram dyson} that have been removed when we replace thin lines with bold lines. Therefore, the pairings in $\widehat{\mathcal{K}}(-t,t)$ is then given by $\widehat{\mathcal{Q}}(\bs)$.

To better understand the reason for such a definition, we provide in \Cref{fig:improper_diagram} two examples of pairings in $\mathcal{Q}(\boldsymbol{s})\backslash\widehat{\mathcal{Q}}(\boldsymbol{s})$. Note that for all the pairings in $\widehat{\mathcal{K}}(-t,t)$, the last time point $s_m$ always equals $t$. In \Cref{fig:improper1}, the diagram represents the set $\mathfrak{q} = \{(s_1, s_4), (s_2, s_3)\}$ with $\ell = 1$, and its subset $\mathfrak{q}' = \{(s_2, s_3)\}$ (the red arc) satisfies both conditions in \Cref{def:improper_diagrams}. The reason why this diagram should be excluded from the summation \eqref{diagram semi inchworm} is that all the thin diagrams it includes have already been counted in the second diagram in \eqref{diagram semi inchworm}. Similarly, for \Cref{fig:improper2}, which is the diagrammatic presentation of $\mathfrak{q} = \{(s_1, s_3), (s_2, s_4), (s_5, s_7), (s_6, s_8)\}$ with $\ell = 6$, we can choose $\mathfrak{q}' = \{(s_1, s_3), (s_2, s_4)\}$ and find that $\mathfrak{q} \not\in \widehat{\mathcal{Q}}(\boldsymbol{s})$. In fact, this diagram is a partial sum of the eighth diagram in \eqref{diagram semi inchworm}, and thus should be excluded. Generally, the condition \textbf{(b)} requires that the subset $\mathfrak{q}'$ does not include any end points of the ``bold sections'', which refers to the interval $[s_1, t]$ for \Cref{fig:improper1} and the intervals $[-t,s_5]$ and $[s_6,t]$ for \Cref{fig:improper2}. In this case, the diagram represented by $\mathfrak{q} \backslash \mathfrak{q}'$, which has fewer points, embraces $\mathfrak{q}$ completely, and therefore all the diagrams in $\mathcal{Q}(\boldsymbol{s})\backslash\widehat{\mathcal{Q}}(\boldsymbol{s})$ should not be considered in \eqref{diagram semi inchworm}.


\begin{figure}[h]
\centering
\subfloat[]{\label{fig:improper1}%
\begin{tikzpicture}[anchor=base, baseline]
\draw [thick] (-1.6,0)--(1.6,0);
\draw[-] (0.4,0.1) to[bend left=75] (1.6,0.1);
\draw[-,red] (0.8,0.1) to[bend left=75] (1.2,0.1);
\fill [black] (0.4,-0.1) rectangle (1.6,0.1);
\node [below] at (-1.6,-0.18) {$-t$};
\node [below] at (0,-0.18) {0};
\node [below] at (0.4,-0.18) {$s_1$};
\node [below] at (0.8,-0.18) {$s_2$};
\node [below] at (1.2,-0.18) {$s_3$};
\node [below] at (1.6,-0.18) {$t$};
\draw [thick] (-1.6,-0.12)--(-1.6,0.12);
\draw [thick] (0,-0.1)--(0,0.1);\draw [thick] (0.4,-0.12)--(0.4,0.12);
\draw [thick, white] (0.8,-0.1)--(0.8,0.1);
\draw [thick, white] (1.2,-0.1)--(1.2,0.1);
\draw [thick] (1.6,-0.12)--(1.6,0.12);
\end{tikzpicture}} \qquad
\subfloat[]{\label{fig:improper2}%
\begin{tikzpicture}[anchor=base, baseline]
\draw [thick] (-2.2,0)--(2.2,0);
\draw[-,red] (-1.8333,0.1) to[bend left=75] (-1.1,0.1);
\draw[-,red] (-1.4667,0.1) to[bend left=75] (-0.73333,0.1);
\draw[-] (-0.36667,0.1) to[bend left=75] (1.4667,0.1);
\draw[-] (0.73333,0.1) to[bend left=75] (2.2,0.1);
\fill [black] (-2.2,-0.1) rectangle (-0.36667,0.1);
\fill [black] (0.73333,-0.1) rectangle (2.2,0.1);
\node [below] at (-2.2,-0.18) {$-t$};
\node [below] at (-1.8333,-0.18) {$s_1$};
\node [below] at (-1.4667,-0.18) {$s_2$};
\node [below] at (-1.1,-0.18) {$s_3$};
\node [below] at (-0.73333,-0.18) {$s_4$};
\node [below] at (-0.36667,-0.18) {$s_5$};
\node [below] at (0,-0.18) {0};
\node [below] at (0.73333,-0.18) {$s_6$};
\node [below] at (1.4667,-0.18) {$s_7$};
\node [below] at (2.2,-0.18) {$t$};
\draw [thick] (-2.2,-0.12)--(-2.2,0.12);
\draw [thick, white] (-1.8333,-0.1)--(-1.8333,0.1);
\draw [thick, white] (-1.4667,-0.1)--(-1.4667,0.1);
\draw [thick, white] (-1.1,-0.1)--(-1.1,0.1);
\draw [thick, white] (-0.73333,-0.1)--(-0.73333,0.1);
\draw [thick] (-0.36667,-0.12)--(-0.36667,0.12);
\draw [thick] (0,-0.1)--(0,0.1);\draw [thick] (0.73333,-0.12)--(0.73333,0.12);
\draw [thick, white] (1.4667,-0.1)--(1.4667,0.1);
\draw [thick] (2.2,-0.12)--(2.2,0.12);
\end{tikzpicture}}
  \caption{Examples of diagram with pairings $\mathfrak{q} \in \mathcal{Q}(\boldsymbol{s}) \backslash \widehat{\mathcal{Q}}(\boldsymbol{s})$.}
 \label{fig:improper_diagram}
\end{figure}

As a comparison, the diagrams given in Figure \ref{fig:proper_diagram} are pairings in $\widehat{\mathcal{Q}}(\bs)$, which should be counted into the series \eqref{diagram semi inchworm}. Although these diagrams also have proper subsets with consecutive integer indices (red arcs in \Cref{fig:proper_diagram}), the subdiagrams all touch the ends of the bold section, and thus cannot be removed to generate fewer-point diagrams that include them. For example, the first diagram in \Cref{fig:proper_diagram} can be written as $\mathfrak{q} = \{(s_1, s_4), (s_2, s_3)\}$ with $\ell = 2$. If we remove the points $s_2$ and $s_3$ from $\mathfrak{q}$, the resulting diagram is the first diagram in \eqref{diagram semi inchworm}, which does not include $\mathfrak{q}$. Other diagrams in \Cref{fig:proper_diagram} should be preserved due to the same reason.

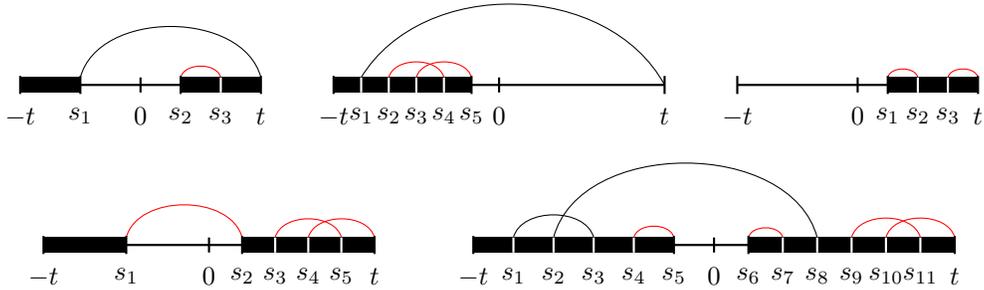
\begin{figure}[h]
\centering
\begin{tikzpicture}[anchor=base, baseline]
\draw [thick] (-1.6,0)--(1.6,0);
\draw[-] (-0.8,0.1) to[bend left=75] (1.6,0.1);
\draw[-,red] (0.53333,0.1) to[bend left=75] (1.0667,0.1);
\fill [black] (-1.6,-0.1) rectangle (-0.8,0.1);
\fill [black] (0.53333,-0.1) rectangle (1.6,0.1);
\node [below] at (-1.6,-0.18) {$-t$};
\node [below] at (-0.8,-0.18) {$s_1$};
\node [below] at (0,-0.18) {0};
\node [below] at (0.53333,-0.18) {$s_2$};
\node [below] at (1.0667,-0.18) {$s_3$};
\node [below] at (1.6,-0.18) {$t$};
\draw [thick] (-1.6,-0.12)--(-1.6,0.12);
\draw [thick] (-0.8,-0.12)--(-0.8,0.12);
\draw [thick] (0,-0.1)--(0,0.1);\draw [thick] (0.53333,-0.12)--(0.53333,0.12);
\draw [thick, white] (1.0667,-0.1)--(1.0667,0.1);
\draw [thick] (1.6,-0.12)--(1.6,0.12);
\end{tikzpicture} 
\quad   
\begin{tikzpicture}[anchor=base, baseline]
\draw [thick] (-2.2,0)--(2.2,0);
\draw[-] (-1.8333,0.1) to[bend left=60] (2.2,0);
\draw[-,red] (-1.4667,0.1) to[bend left=75] (-0.73333,0.1);
\draw[-,red] (-1.1,0.1) to[bend left=75] (-0.36667,0.1);
\fill [black] (-2.2,-0.1) rectangle (-0.36667,0.1);
\node [below] at (-2.2,-0.18) {$-t$};
\node [below] at (-1.8333,-0.18) {$s_1$};
\node [below] at (-1.4667,-0.18) {$s_2$};
\node [below] at (-1.1,-0.18) {$s_3$};
\node [below] at (-0.73333,-0.18) {$s_4$};
\node [below] at (-0.36667,-0.18) {$s_5$};
\node [below] at (0,-0.18) {0};
\node [below] at (2.2,-0.18) {$t$};
\draw [thick] (-2.2,-0.12)--(-2.2,0.12);
\draw [thick, white] (-1.8333,-0.1)--(-1.8333,0.1);
\draw [thick, white] (-1.4667,-0.1)--(-1.4667,0.1);
\draw [thick, white] (-1.1,-0.1)--(-1.1,0.1);
\draw [thick, white] (-0.73333,-0.1)--(-0.73333,0.1);
\draw [thick] (-0.36667,-0.12)--(-0.36667,0.12);
\draw [thick] (0,-0.1)--(0,0.1);\draw [thick] (2.2,-0.12)--(2.2,0.12);
\end{tikzpicture}
\quad 
\begin{tikzpicture}[anchor=base, baseline]
\draw [thick] (-1.6,0)--(1.6,0);
\draw[-,red] (0.4,0.1) to[bend left=75] (0.8,0.1);
\draw[-,red] (1.2,0.1) to[bend left=75] (1.6,0.1);
\fill [black] (0.4,-0.1) rectangle (1.6,0.1);
\node [below] at (-1.6,-0.18) {$-t$};
\node [below] at (0,-0.18) {0};
\node [below] at (0.4,-0.18) {$s_1$};
\node [below] at (0.8,-0.18) {$s_2$};
\node [below] at (1.2,-0.18) {$s_3$};
\node [below] at (1.6,-0.18) {$t$};
\draw [thick] (-1.6,-0.12)--(-1.6,0.12);
\draw [thick] (0,-0.1)--(0,0.1);\draw [thick] (0.4,-0.12)--(0.4,0.12);
\draw [thick, white] (0.8,-0.1)--(0.8,0.1);
\draw [thick, white] (1.2,-0.1)--(1.2,0.1);
\draw [thick] (1.6,-0.12)--(1.6,0.12);
\end{tikzpicture}
\quad  \quad 
\begin{tikzpicture}[anchor=base, baseline]
\draw [thick] (-2.2,0)--(2.2,0);
\draw[-,red] (-1.1,0.1) to[bend left=75] (0.44,0.1);
\draw[-,red] (0.88,0.1) to[bend left=75] (1.76,0.1);
\draw[-,red] (1.32,0.1) to[bend left=75] (2.2,0.1);
\fill [black] (-2.2,-0.1) rectangle (-1.1,0.1);
\fill [black] (0.44,-0.1) rectangle (2.2,0.1);
\node [below] at (-2.2,-0.18) {$-t$};
\node [below] at (-1.1,-0.18) {$s_1$};
\node [below] at (0,-0.18) {0};
\node [below] at (0.44,-0.18) {$s_2$};
\node [below] at (0.88,-0.18) {$s_3$};
\node [below] at (1.32,-0.18) {$s_4$};
\node [below] at (1.76,-0.18) {$s_5$};
\node [below] at (2.2,-0.18) {$t$};
\draw [thick] (-2.2,-0.12)--(-2.2,0.12);
\draw [thick] (-1.1,-0.12)--(-1.1,0.12);
\draw [thick] (0,-0.1)--(0,0.1);\draw [thick] (0.44,-0.12)--(0.44,0.12);
\draw [thick, white] (0.88,-0.1)--(0.88,0.1);
\draw [thick, white] (1.32,-0.1)--(1.32,0.1);
\draw [thick, white] (1.76,-0.1)--(1.76,0.1);
\draw [thick] (2.2,-0.12)--(2.2,0.12);
\end{tikzpicture} \quad  \quad 
\begin{tikzpicture}[anchor=base, baseline]
\draw [thick] (-3.2,0)--(3.2,0);
\draw[-] (-2.6667,0.1) to[bend left=75] (-1.6,0.1);
\draw[-] (-2.1333,0.1) to[bend left=75] (1.3714,0.1);
\draw[-,red] (-1.0667,0.1) to[bend left=75] (-0.53333,0.1);
\draw[-,red] (0.45714,0.1) to[bend left=75] (0.91429,0.1);
\draw[-,red] (1.8286,0.1) to[bend left=75] (2.7429,0.1);
\draw[-,red] (2.2857,0.1) to[bend left=75] (3.2,0.1);
\fill [black] (-3.2,-0.1) rectangle (-0.53333,0.1);
\fill [black] (0.45714,-0.1) rectangle (3.2,0.1);
\node [below] at (-3.2,-0.18) {$-t$};
\node [below] at (-2.6667,-0.18) {$s_1$};
\node [below] at (-2.1333,-0.18) {$s_2$};
\node [below] at (-1.6,-0.18) {$s_3$};
\node [below] at (-1.0667,-0.18) {$s_4$};
\node [below] at (-0.53333,-0.18) {$s_5$};
\node [below] at (0,-0.18) {0};
\node [below] at (0.45714,-0.18) {$s_6$};
\node [below] at (0.91429,-0.18) {$s_7$};
\node [below] at (1.3714,-0.18) {$s_8$};
\node [below] at (1.8286,-0.18) {$s_9$};
\node [below] at (2.2857,-0.18) {$s_{10}$};
\node [below] at (2.7429,-0.18) {$s_{11}$};
\node [below] at (3.2,-0.18) {$t$};
\draw [thick] (-3.2,-0.12)--(-3.2,0.12);
\draw [thick, white] (-2.6667,-0.1)--(-2.6667,0.1);
\draw [thick, white] (-2.1333,-0.1)--(-2.1333,0.1);
\draw [thick, white] (-1.6,-0.1)--(-1.6,0.1);
\draw [thick, white] (-1.0667,-0.1)--(-1.0667,0.1);
\draw [thick] (-0.53333,-0.12)--(-0.53333,0.12);
\draw [thick] (0,-0.1)--(0,0.1);\draw [thick] (0.45714,-0.12)--(0.45714,0.12);
\draw [thick, white] (0.91429,-0.1)--(0.91429,0.1);
\draw [thick, white] (1.3714,-0.1)--(1.3714,0.1);
\draw [thick, white] (1.8286,-0.1)--(1.8286,0.1);
\draw [thick, white] (2.2857,-0.1)--(2.2857,0.1);
\draw [thick, white] (2.7429,-0.1)--(2.7429,0.1);
\draw [thick] (3.2,-0.12)--(3.2,0.12);
\end{tikzpicture}
  \caption{Examples of diagrams with pairing $\mathfrak{q}\in \widehat{\mathcal{Q}}(\bs)$.}
 \label{fig:proper_diagram}
\end{figure}

At this point, we can express the resummation $\widehat{\mathcal{K}}(-t,t)$ mathematically by 
\begin{equation}
 \label{K semi inchworm}
    \widehat{\mathcal{K}}(-t,t) = \sum_{\substack{m=1 \\ m \text{~is odd}}}^{+\infty}  \ii^{m+1} \int_{-t\leqslant \bs \leqslant t} \dd \bs  (-1)^{\#\{\bs < 0\}} \widehat{\mathcal{U}}(-t, \bs , t)  \widehat{\mathcal{L}}_b(\bs,t)  
\end{equation}
where 
\begin{align*}
   & \widehat{\mathcal{U}}(-t, s_1,\cdots,s_m , t) =   \ \widehat{G}(s_m,t) W_s
    \widehat{G}(s_{m-1},s_m) W_s
    \cdots
    W_s \widehat{G}(s_1,s_2) 
    W_s \widehat{G}(-t,s_1), \\
    & \widehat{\mathcal{L}}_b(s_1,\cdots,s_{m+1})  =  \ \sum_{\mathfrak{q}\in\widehat{\mathcal{Q}}(s_1,\cdots,s_{m+1})}
    \prod_{(s_j,s_k)\in\mathfrak{q}}
    B(s_j,s_k).
\end{align*}

The evolution of $G(-t,t)$ is again solved numerically based on the governing equation \eqref{integro_differential_eq} together with the resummation $\widehat{\mathcal{K}}(-t,t)$. This BTB method can be considered as an intermediate approach between Dyson series and inchworm method. Compared to the Dyson series, the BTB method has faster convergence with respect to $m$ as many thin line segments of the diagrams in $\widehat{\mathcal{K}}(-t,t)$ are now replaced by bold ones which include infinite subdiagrams. However, as the propagator crossing time 0 remains to be a thin line segment for the purpose of reuse algorithm, the convergence of BTB method should be slower than that of inchworm method. This also implies that the number of diagrams considered in BTB method is smaller than Dyson series but larger than the inchworm method (see Table \ref{tab:diagram_numbers}). Hence, if we directly evaluate  the three bath influence functionals $\mathcal{L}_b(\bs,t)$, $\widehat{\mathcal{L}}_b(\bs,t)$ and $\mathcal{L}_b^c(\bs,t)$ for the same $m$ by summing up the corresponding diagrams, the computation of $\mathcal{L}_b^c(\bs,t)$ is the fastest, and the computation of $\mathcal{L}_b(\bs,t)$ the slowest.

\begin{table}[h]
\centering
\begin{tabular}{| c | c | c | c | c |} 
\hline
 $m$\rule{0pt}{12pt} &
 $\vert \mathcal{Q}(\bs,t) \vert$ &
 $\vert \mathcal{Q}^c(\bs,t) \vert $&
 $\displaystyle \min_{\bs} \vert \widehat{\mathcal{Q}}(\bs,t) \vert$ &
 $\displaystyle \max_{\bs}  \vert \widehat{\mathcal{Q}}(\bs,t) \vert$
 \\ \hline
 1 & 1 & 1 & 1 & 1 \\
 3 & 3 & 1 & 2 & 3 \\
 5 & 15 & 4 & 6 & 12 \\
 7 & 105 & 27 & 36 & 66 \\
 9 & 945 & 248 & 310 & 510\\
 11 & 10395 & 2830 & 3396 & 5100 \\ \hline
\end{tabular}
\caption{Number of pairings in $\mathcal{Q}(\bs,t)$, $\mathcal{Q}^c(\bs,t)$ and $\widehat{\mathcal{Q}}(\bs,t)$ for $\bs = (s_1,\cdots,s_m)$. Note that the value of $\vert \widehat{\mathcal{Q}}(\bs,t) \vert$ relies on the distribution of $\bs$ and thus we list both the smallest and largest value.
\label{tab:diagram_numbers}}
\end{table}

Although the BTB method has more diagrams than the inchworm method, the BTB method can save a significant amount of computational cost by using the recurrence relation \eqref{recurrence}. More importantly, later we will see the inchworm method needs to deal with two time variables, while the BTB method is essentially a one-dimensional problem. The details will be discussed in the following subsection.

\subsection{Numerical scheme for the BTB method}
  \label{sec:semi_inchworm_numerical_method}
We now elaborate the iterative algorithm for BTB method. With the linear basis \eqref{basis_2}, we can define     
\begin{equation}\label{Uij*}
    \Uij(-t,\boldsymbol{s},t)
    = \widehat{\mathscr{G}}_{ij}(s_m,t) W_s \widehat{\mathscr{G}}_{ij}(s_{m-1},s_m)
    W_s \cdots W_s 
    \widehat{\mathscr{G}}_{ij}(s_1,s_2)
    W_s
    \widehat{\mathscr{G}}_{ij}(-t,s_1)
\end{equation}
and 
\begin{equation}
    \widehat{\mathscr{K}{}}_{\!\!ij}(-t,t)
    = \sum_{\substack{m=1\\m \text{ is odd}}} 
    \ii^{m+1}
    \int_{-t\leqslant \boldsymbol{s} \leqslant t}
    \dd \boldsymbol{s} (-1)^{\#\{\boldsymbol{s}<0\}}
    \Uij(-t,\boldsymbol{s},t) \widehat{\mathcal{L}}_b(\boldsymbol{s},t),
\end{equation}
which are analogous to \eqref{U0 basis} and \eqref{Sscr}, respectively. Then resummation \eqref{K semi inchworm} is then formulated as 
\begin{equation*}
    \widehat{\mathcal{K}}(-t,t) 
    = \sum_{i,j} a_{ij} \widehat{\mathscr{K}{}}_{\!\!ij}(-t,t)
\end{equation*}
where $a_{ij}$ is given as \eqref{aij}. The shift invariance \eqref{G shift invariance} of $G(s_\ii,s_\ff)$ 
and the recurrence relation \eqref{recurrence} of $\widehat{\mathscr{G}}_{ij}(s_\ii,s_\ff)$
yield
\begin{equation*}
    \Uij(-t-\dt,\mathcal{I}_{\dt}(\boldsymbol{s}),t+\dt) =  \sum_{k,l}b_{kl}^{ij} \widehat{\mathscr{U}{}}_{\!\!kl}(-t,\boldsymbol{s},t)
\end{equation*}
following the analysis for Lemma \ref{lemma:Uscr_shift}. In addition, it is also easy to check that  
\begin{displaymath}
    \widehat{\mathcal{L}}_b(\mathcal{I}_{\dt}(\boldsymbol{s}),t+\dt)
    = \widehat{\mathcal{L}}_b(\boldsymbol{s},t).
\end{displaymath}
Eventually, we may arrive at a recurrence relation similar to \eqref{Sscr_shift_property} which plays a key role in the reuse algorithm computing $\widehat{\mathcal{K}}(-t,t)$ as 
\begin{equation}\label{Sscr_shift_property_2}
    \widehat{\mathscr{K}{}}_{\!\!ij} (-t-\dt, t+\dt)
    = \sum_{k,l} b_{kl}^{ij} \widehat{\mathscr{K}{}}_{\!\!kl}(-t,t)
    + \widehat{\mathscr{D}}_{ij}^{\dt}(t+\dt),
\end{equation}
where 
\begin{equation} 
    \begin{split}
        \widehat{\mathscr{D}}_{ij}^{\dt}(t+\dt)
        =&\sum_{\substack{m=1\\m\text{ is odd}}}^{+\infty} 
        \ii^{m+1} \int_{T_{t+\dt}^{(m)}}
        \dd \boldsymbol{s} (-1)^{\#\{\boldsymbol{s} < 0\}}
        \Uij(-t-\dt,\boldsymbol{s},t+\dt)
        \widehat{\mathcal{L}}_b(\boldsymbol{s},t+\dt),
    \end{split}
\end{equation}
and the coefficient $b_{ij}$ satisfies \eqref{bij}, which depends on the time step $\Delta t$. In order to compute $\widehat{\mathscr{D}}_{ij}^{\dt}(t+\dt)$ numerically, we need the following approximations:
\begin{itemize}
\item Truncation of the infinite series at $m = \bar{M}$;
\item Application of the Monte Carlo method to compute the high-dimensional integrals;
\item A numerical scheme to calculate the system factor $\Uij(-t-\dt,\boldsymbol{s},t+\dt)$.
\end{itemize} 
While the first two points are also implemented in 
\Cref{Sec_Dyson_reuse}, the third point requires some extra effort for the BTB method. Using the shift invariance \eqref{G shift invariance} and the conjugate property \eqref{G conjugate symmetry} of the full propagator, we can rewrite the definition of $\widehat{\mathscr{G}}_{ij}(s_\ii, s_\ff)$ as
    \begin{equation} \label{hatGij}
        \widehat{\mathscr{G}}_{ij} (s_\ii,s_\ff)
        = \begin{cases}
            G^{\circ}(s_\ff-s_\ii)^{\dagger},
            &\text{ if }s_\ii < s_\ff < 0, \\
            G^{\circ}(s_\ff-s_\ii), &
            \text{ if }0 < s_\ii < s_\ff, \\
            \e^{\ii s_\ff H_s} \dyad{i}{j} \e^{\ii s_\ii H_s},
            &\text{ if } s_\ii < 0 < s_\ff,
        \end{cases}
\end{equation}
where $G^{\circ}(\cdot)$ is defined by the limit
\begin{displaymath}
G^{\circ}(s) = \lim_{\epsilon \rightarrow 0^+} G(\epsilon, s), \qquad \forall s > 0.
\end{displaymath}
Thus, according to the definition \eqref{Uij*}, the computation of $\Uij(-t,\boldsymbol{s},t)$ requires the knowledge of $G^{\circ}(s)$ for $s > 0$. In our implementation, we will first compute $G^{\circ}_k \approx G^{\circ}(k\Delta t)$ for all $k = 1,\cdots,N$. Let
\begin{equation}
\label{eq:G0}
G^{\circ}_0 = \lim_{s \rightarrow 0^+} G^{\circ}(s) = I.
\end{equation}
Then for any $s \in (0, N\Delta t]$, the value of $G^{\circ}(s)$ can be approximated by piecewise linear interpolation. The interpolant $G_I^{\circ}(s)$ can be written as
\begin{equation}\label{interpolation}
    G_I^{\circ}(s) = 
    \frac{t_{k+1}-s}{\dt}
    G^{\circ}_k 
    + \frac{s -t_k}{\dt}
    G^{\circ}_{k+1}, \quad s \in [t_k, t_{k+1}], \quad k \geqslant 0.
\end{equation}
We can then let
    \begin{equation*}
     \widehat{\mathscr{G}}_{ij}^I (s_\ii,s_\ff)
        = \begin{cases}
            G^{\circ}_I(s_\ff-s_\ii)^{\dagger},
            &\text{ if }s_\ii < s_\ff < 0, \\
            G^{\circ}_I(s_\ff-s_\ii), &
            \text{ if }0 < s_\ii < s_\ff, \\
            \e^{\ii s_\ff H_s} \dyad{i}{j} \e^{\ii s_\ii H_s},
            &\text{ if } s_\ii < 0 < s_\ff,
        \end{cases}
\end{equation*}
and thus
\begin{equation}\label{tUij*}
  \UijI(-t,\bs,t)  = \widehat{\mathscr{G}}_{ij}^I(s_m,t) W_s \widehat{\mathscr{G}}_{ij}^I(s_{m-1},s_m)  W_s \cdots W_s  \widehat{\mathscr{G}}_{ij}^I(s_1,s_2) W_s \widehat{\mathscr{G}}_{ij}^I(-t,s_1)
\end{equation}
is a practical approximation of $\Uij(-t,\bs,t)$.
At this point, we can again solve $G(-t,t)$ numerically by Heun's scheme \eqref{heun} following the procedures in Algorithm \ref{algo:dyson} where \Cref{Line7,Line8} are now replaced by the approximation of $\widehat{\mathscr{D}}_{ij}^{\dt}(-t_{n+1},t_{n+1})$: 
\begin{equation*}
     \widehat{\mathscr{D}}_{ij}^{n+1} =  \frac{1}{\mathcal{M}^{(m)}_{n+1}} \sum_{k=1}^{\mathcal{M}^{(m)}_{n+1}} \widehat{\mathscr{S}\,}{}^{\!(k)}_{\!\!\!ij}
\end{equation*}
where 
\begin{equation*}
    \widehat{\mathscr{S}\,}{}^{\!(k)}_{\!\!\!ij}
        =  \left|T^{(m)}_{t_n+\Delta t} \right| 
        \cdot \ii^{m+1} (-1)^{\#\{\bs_{n+1}^{(k)} < 0\}}
         \UijI(-t_{n+1},\bs_{n+1}^{(k)},t_{n+1})
        \widehat{\mathcal{L}}_b(\bs_{n+1}^{(k)},t_{n+1}).
\end{equation*}

The evaluation of $G^{\circ}_k$ is based on the following generalized inchworm integro-differential equation \cite{cai2020inchworm}
\begin{equation}
    \frac{\dd G^{\circ}(t)}{\dd t}
    = \ii H_s G^{\circ}(t) + \sum_{\substack{m=1\\m\text{ is odd}}}^{+\infty}
    \ii^{m+1}
    \int_{0 \leqslant \boldsymbol{s} \leqslant t}
    \dd \boldsymbol{s} \, \mathcal{U}(0,\bs,t)
    \mathcal{L}_b^c (\boldsymbol{s},t) \text{~for~} t > 0,
    \label{inchworm_integro_differential_equation}
\end{equation}
where
\begin{equation*}
\mathcal{U}(0,\bs,t) = W_s G^{\circ}(t-s_m) W_s G^{\circ}(s_m-s_{m-1}) W_s \cdots W_s G^{\circ}(s_2-s_1) W_s  G^{\circ}(s_1).
\end{equation*}
The numerical solver of \eqref{inchworm_integro_differential_equation} can again be constructed by combining Huen's method and the Monte Carlo intergration:
\begin{equation}\label{Ginitials}
\begin{split}
    &G_{\star}^{\circ} = G^{\circ}_k + \dt \left[
    \ii H_s G^{\circ}_k + \sum_{\substack{m=1\\m \text{ is odd}}}^{\bar{M}}
    \frac{(t_k)^m}{m!\mathcal{N}_k^{(m)}}
    \sum_{i=1}^{\mathcal{N}_k^{(m)}}
    \ii^{m+1}
    W_s
    \mathcal{U}^I(0,\boldsymbol{s}_k^{(i)},t_k)
    \mathcal{L}_b^c (\boldsymbol{s}_k^{(i)},t_k)
    \right], \\
    &G^{\circ}_{\star\star} = G^{\circ}_{\star} + \dt \left[
    \ii H_s G_{\star}^{\circ} + \sum_{\substack{m=1\\m\text{ is odd}}}^{\bar{M}}
    \frac{(t_{k+1})^m}{m!\mathcal{N}_{k+1}^{(m)}}
        \sum_{i=1}^{\mathcal{N}_{k+1}^{(m)}} \ii^{m+1} W_s \mathcal{U}_{\star}^I(0,\boldsymbol{s}_{k+1}^{(i)},t_{k+1})
    \mathcal{L}_b^c (\boldsymbol{s}_{k+1}^{(i)},t_{k+1})
    \right],\\
    &G^{\circ}_{k+1} = \frac{1}{2}\left(G^{\circ}_k+G^{\circ}_{\star\star}\right).
\end{split}
\end{equation}
The initial condition of the scheme above is given as $G^{\circ}_{0} = I$ according to \eqref{eq:G0}. The interpolating functional $\mathcal{U}^I$ is defined similarly to \eqref{tUij*}:
\begin{displaymath}
\mathcal{U}^I(0,\bs,t) =  W_s G_I^{\circ}(t-s_m) W_s G_I^{\circ}(s_m-s_{m-1}) W_s \cdots W_s G_I^{\circ}(s_2-s_1) W_s  G_I^{\circ}(s_1),
\end{displaymath}
and $\mathcal{U}^I_{\star}$ in the second stage is given by the same formula as $\mathcal{U}^I$ with all $G^{\circ}_\star$ replaced by the interpolated function $G^{\circ}_{\star I}$ satisfying
\begin{displaymath}
  G^{\circ}_{\star I}(s) = 
  \begin{cases}
      G^{\circ}_{I}(s) , & \text{~if~} s \in [t_\ell,t_{\ell+1}] \text{~for~} 0 \leqslant  \ell \leqslant k-1 \\ 
       \frac{t_{k+1}-s}{\dt}
    G^{\circ}_k 
    + \frac{s -t_k}{\dt}
    G^{\circ}_{\star} , & \text{~if~} s \in [t_{k},t_{k+1}] 
  \end{cases}.
\end{displaymath}
The samples $\{\boldsymbol{s}_m^{(i)}\}_{i=1}^{\mathcal{N}_k^{(m)}}$ are drawn from uniform distribution $U(0\leqslant \boldsymbol{s}_k \leqslant t_k)$.
Similar to \eqref{Mnm1}, the number of samples $\mathcal{N}_k^{(m)}$ is again set to be proportional to the volume of the integration domain:
\begin{equation*}
    \mathcal{N}_k^{(m)} = \mathcal{N}_0
    \frac{(t_k)^m}{(m-1)!!}\mathcal{B}^{\frac{m+1}{2}}.
\end{equation*}
In our implementation, we choose $\mathcal{N}_0$ to be the same as $\mathcal{M}_0$ used in \eqref{Mnm1}.

As a summary, our BTB method consists of two stages. The first stage solves the function $G^{\circ}(t)$ from \eqref{inchworm_integro_differential_equation} using the scheme \eqref{Ginitials}, and the second stage runs a procedure similar to \Cref{algo:dyson}, with suitable alterations to change bare diagrams to the BTB diagrams. Due to the larger number of samples in the second stage, we expect that the computational cost of the second stage is dominating. Compared with \Cref{algo:dyson}, the BTB method may also have lower computational cost for large $\bar{M}$ due to the smaller number of diagrams. This will be verified by our numerical tests in the next section.

At the end of this section, we would like to compare the BTB method with the inchworm Monte Carlo method. Both methods use bold diagrams to mitigate the numerical sign problem. However, the BTB method only uses bold lines on one side of the zero, allowing us to use the shift invariance to reduce all the bold lines to a one-variable function (see \eqref{hatGij}); the inchworm method uses $G(s_\ii, s_\ff)$ with $s_\ii < 0$ and $s_\ff > 0$, so that the full propagator must be computed as a two-dimensional function. Thus, compared with the inchworm method, the BTB method reduces the number of propagators to be computed by a factor of $O(N)$. Moreover, although the inchworm method also allows reusing computed influence functionals as proposed in \cite{cai2022fast}, all the samples and the influence functionals must be stored, which may lead to a significant memory cost. The superiority of the inchworm method is its faster convergence with respect to $m$. Based on these reasons, the BTB method is believed to be superior when the coupling is not too strong, while the inchworm method is preferable when the numerical sign problem is severe.

\section{Numerical results}
\label{section_numerical_results}
In the following experiments,
 we choose the spectral density $J(\omega)$ in \eqref{bath_function_B} as
\begin{equation}
    \label{spectral_density}
    J(\omega) = \frac{\pi}{2} \sum_j \frac{c_j^2}{\omega_j} \delta(\omega - \omega_j)
\end{equation}
with $\delta$ being the Dirac delta function 
and $\omega_j,c_j$ given by
\begin{equation}
    \label{frequencies_and_coupling_intensity}
    \omega_j = - \omega_c \ln
    \left(
    1- \frac{j}{L} \left(1-\e^{-\omega_{\max}/\omega_c}\right)
    \right),
    \quad
    c_j = \omega_j \sqrt{\frac{\xi\omega_c}{L}
    \left(1-\e^{-\omega_{\max}/\omega_c}\right)},
    \quad j=1,\dots,L,
\end{equation}
where $L$ is the number of harmonic oscillators in the bath.
These parameters correspond to the bath with Ohmic spectral density \cite{makri1999linear}.
With the spectral density $J(\omega)$,
 function $B(\tau_1,\tau_2)$ \eqref{bath_function_B} becomes
\begin{equation*}
\begin{split}
    \label{Bath_function_B_with_spectral}
    B(\tau_1,\tau_2) &= \frac{1}{2}\sum_j \frac{c_j^2}{\omega_j} 
    \left(
    \coth\left(\frac{\beta\omega_j}{2}\right)
    \cos(\omega_j \Delta \tau ) - \ii \sin(\omega_j \Delta \tau )
    \right)
\end{split}
\end{equation*}
where $\Delta \tau = \vert\tau_1\vert - \vert \tau_2\vert$.
From the above expression,
 the amplitude of the function $B$
 depends on the parameters $\omega_c,\xi,\beta$ and $L$.
We therefore plot the amplitude of the function
 for the sets of parameters we use
 in this paper in \Cref{amplitude}.
From \Cref{amplitude},
 the amplitude of $B(\tau_1,\tau_2)$
 increases as either $\xi$ or $\omega_c$ increases.
Larger amplitude of function $B$ indicates stronger coupling between the system and the bath,
 which in general brings difficulty to numerical simulations.
  
\begin{figure}[ht]
    \includegraphics[width=0.6\textwidth]{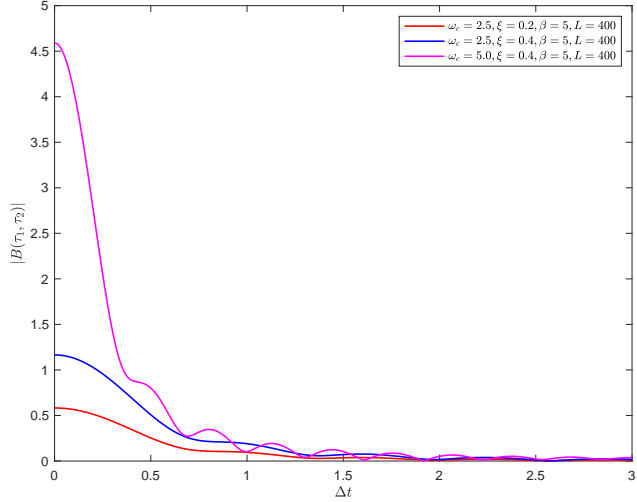}
    \caption{The amplitude of function $B(\tau_1,\tau_2)$ for different parameters.}
    \label{amplitude}
\end{figure}

For all the experiments,
 the number of harmonic oscillators is assumed to be $L = 400$, 
 the maximum frequency is chosen as $\omega_{\max} = 4\omega_c$
 and time step $\dt$ is chosen to be 0.05. 
Other parameters will be specified for each experiment.
All the experiments in this paper are run on the CPU model ``Intel\textsuperscript{\textregistered} Core\texttrademark \ i5-8257U'', and four threads are used in the numerical simulations.

\subsection{Experiments with different biases}
We first check the validity of our method
 by using the following parameters:
\begin{equation*}
    \Delta = 1,\quad
    \omega_c = 2.5,\quad
    \beta = 5,\quad
    \xi=0.2
\end{equation*}
with two different biases $\epsilon=0$ and $\epsilon=1$.
The results of both Dyson series and BTB method are shown in \Cref{Test_1_2}.
The i-QuAPI results are also plotted for reference.
\begin{figure}[ht]
    \subfloat[$\epsilon = 0$\label{Test_1}]{\includegraphics[width= 0.45\textwidth]{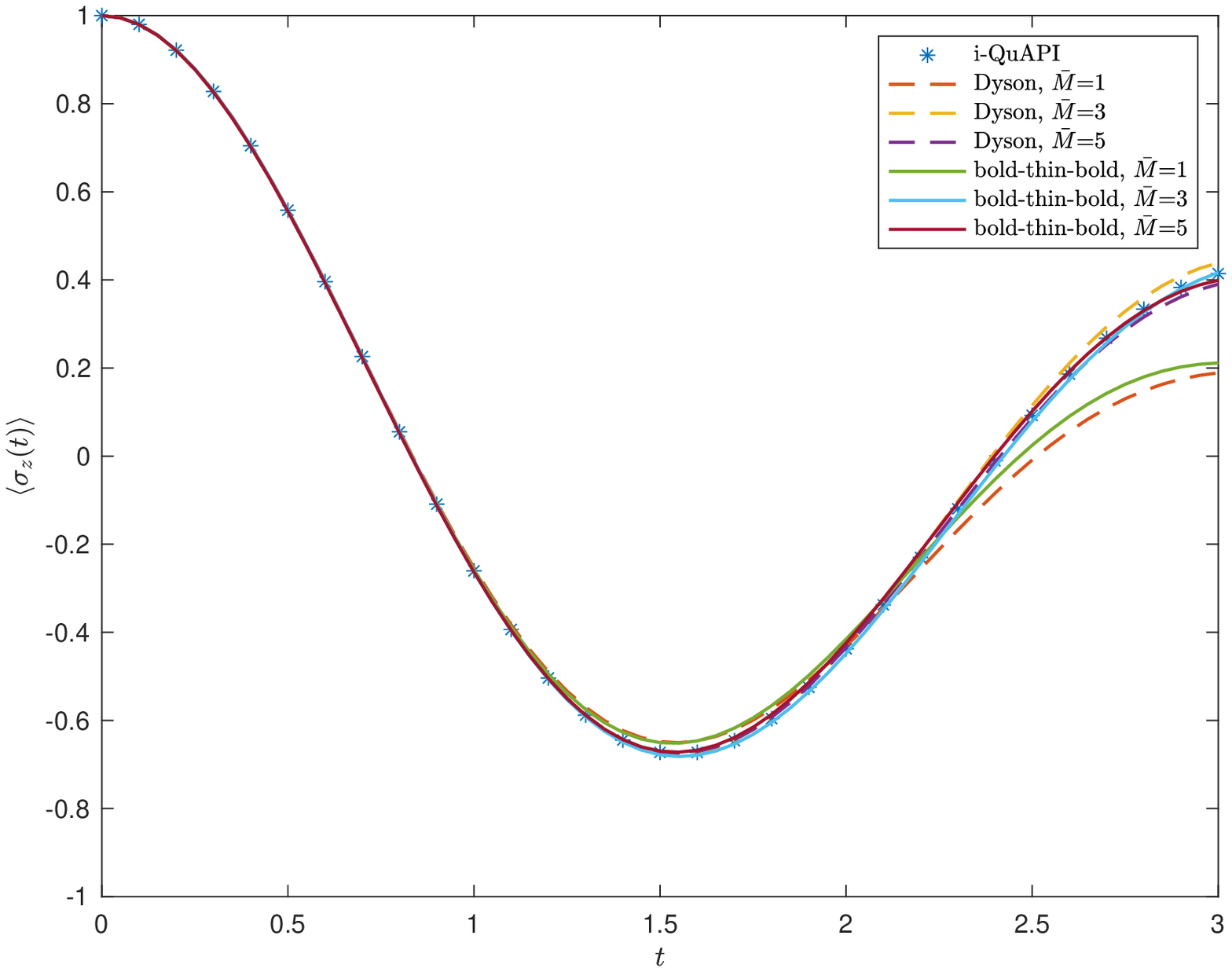}} \quad
    \subfloat[$\epsilon = 1$\label{Test_2}]{\includegraphics[width= 0.45\textwidth]{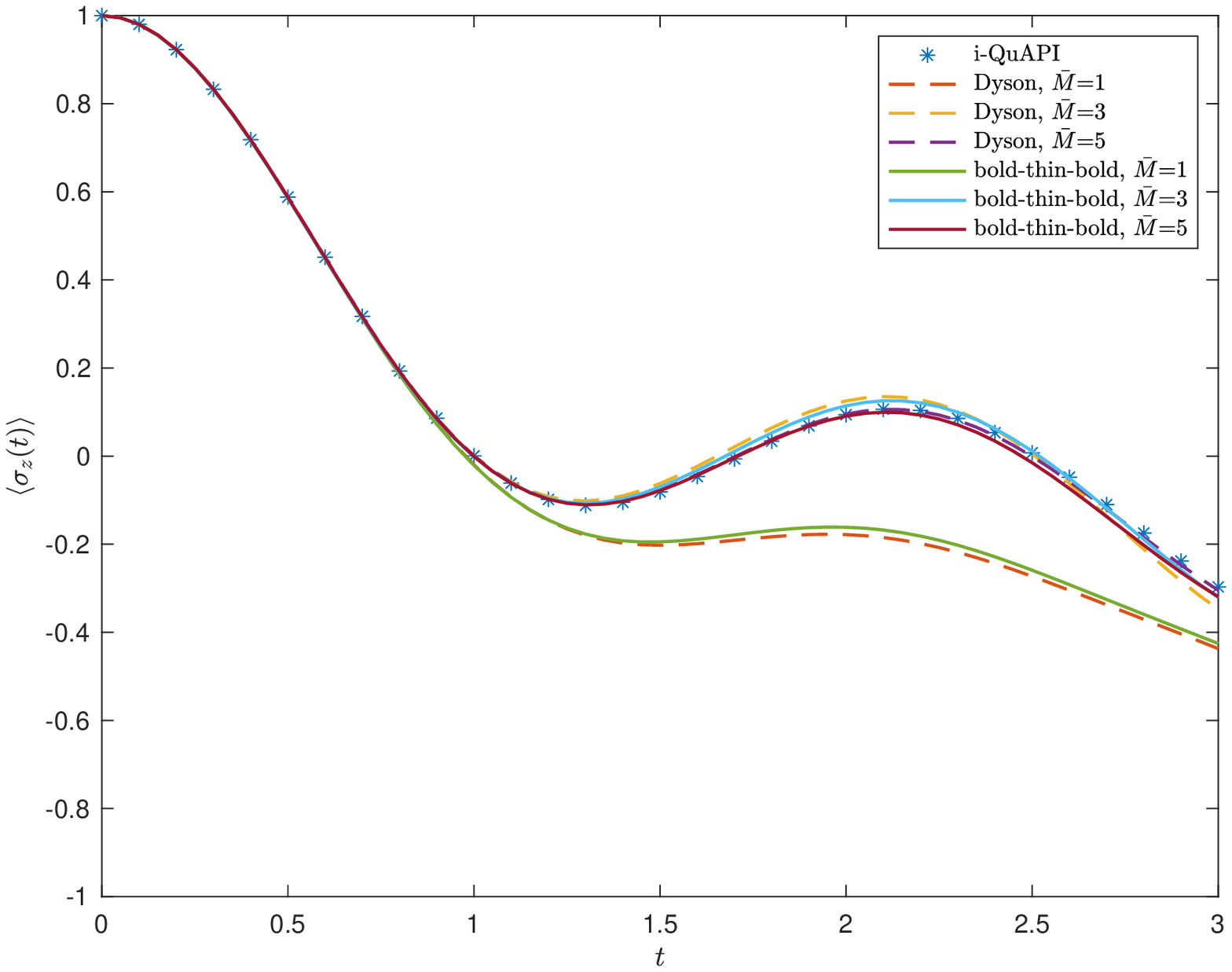}} \quad
    \caption{Evolution of $\expval{\hat{\sigma}_z(t)}$ with $\epsilon=0$ and $\epsilon=1$. 
    The number of samples is determined based on \eqref{Mnm} with $\mathcal{M}_0 = 10^6$.}
    \label{Test_1_2}
\end{figure}
\Cref{Test_1_2} shows that truncating
\eqref{K dyson}\eqref{K semi inchworm} with
$\bar{M}=1$ leads to significant numerical errors around $t = 3$ in both cases, while remarkable improvement is made when $\bar{M}$ is increased to $3$, yielding curves well matching the ones generated by i-QuAPI and $\bar{M} = 5$. The convergence towards the same curve in both methods validates our iterative algorithm for the spin-boson model with or without energy difference.

With these sets of parameters,
 the two-point correlation $B(\cdot,\cdot)$ defined in \eqref{bath_function_B} 
 has a relatively small amplitude according to \Cref{amplitude},
leading to a fast convergence with respect to $\bar{M}$ for both series in \eqref{K dyson} and \eqref{K semi inchworm}.
 In our implementation, the empirical constant $\mathcal{B}$ in \eqref{Mnm}
 is chosen as a sixth of the largest amplitude of the function $B(\tau_1,\tau_2)$, which equals $0.0971$ in these two examples. 
Consequently, the number of samples required in the Monte Carlo sampling is relatively small, and the computational time for both methods is less than 20 seconds.
In our experiments, we find that the simulation based on the Dyson series is slightly faster than the BTB method. This is because BTB method needs extra computations on $G_{k}^\circ$ by \eqref{Ginitials} which is not required for Dyson series. A closer study on the computational time will be carried out in the next subsection.

\subsection{Experiments for different frequencies}
\label{test_difference_frequencies}
One critical problem of the Monte Carlo based methods 
 for open quantum systems is the sign problem.
The analysis in \cite[Section 5]{cai2020inchworm} shows that the severity of the numerical sign problem is mainly characterized by the magnitude of the two-point correlation function $B(\cdot, \cdot)$.
To test the performance of both methods in the case where $B(\cdot,\cdot)$ has a larger amplitude, we choose the following parameters:
\begin{equation*}
    \Delta = 1,\quad
    \epsilon = 1, \quad
    \beta = 5,\quad
    \xi=0.4
\end{equation*}
with frequencies $\omega_c=2.5$ and $\omega_c=5$.
According to \Cref{amplitude}, 
 the increase of both $\xi$ and $\omega_c$
 will lead to a greater amplitude of $B(\cdot,\cdot)$, indicating
 stronger coupling between the system and the bath.
The results of both methods up to $\bar{M} = 7$ are given in \Cref{Test_3_4}.
\begin{figure}[ht]
    \subfloat[$\omega_c = 2.5$\label{Test_3}]{\includegraphics[width= 0.45\textwidth]{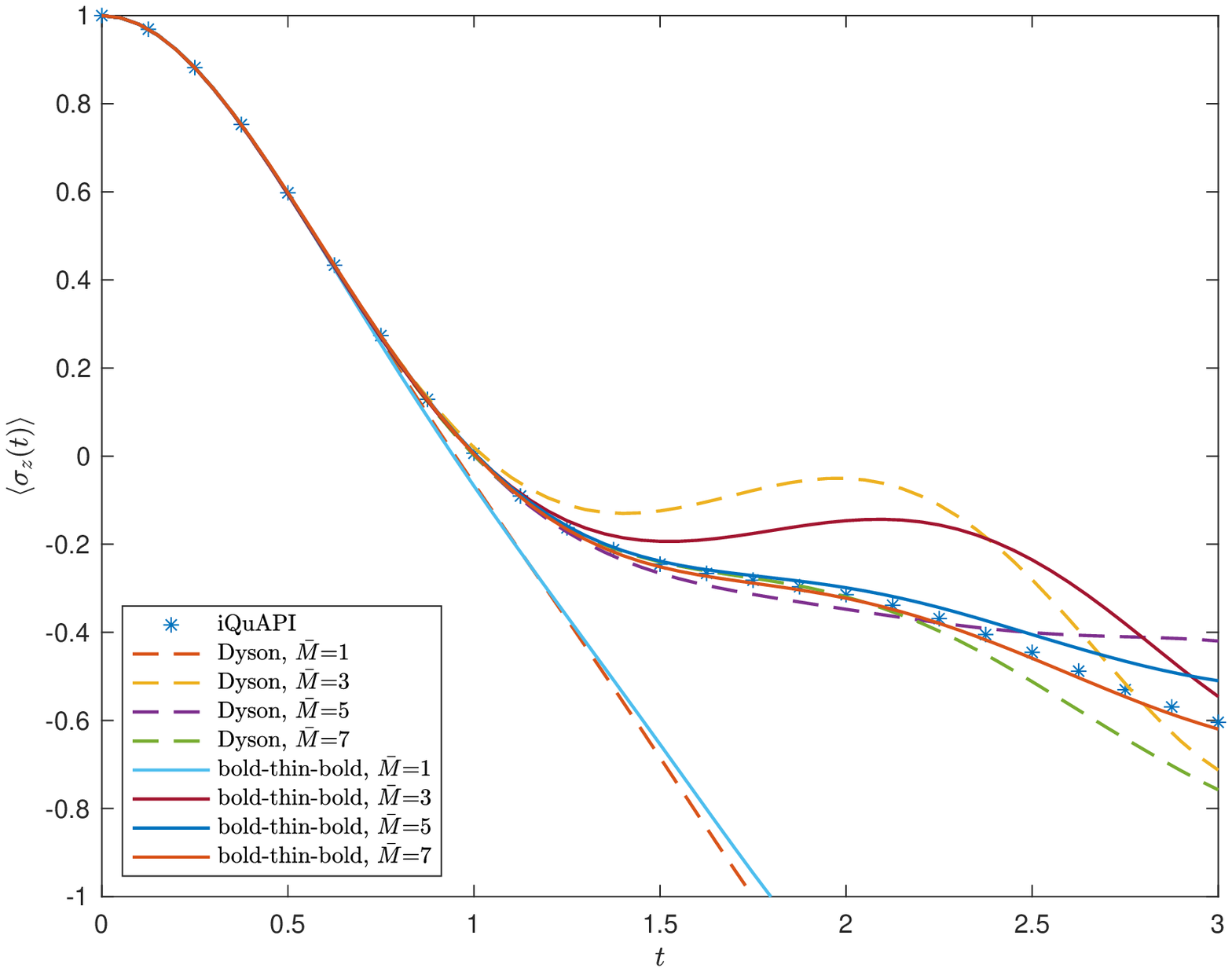}}
    \quad
    \subfloat[$\omega_c = 5$\label{Test_4}]{\includegraphics[width= 0.45\textwidth]{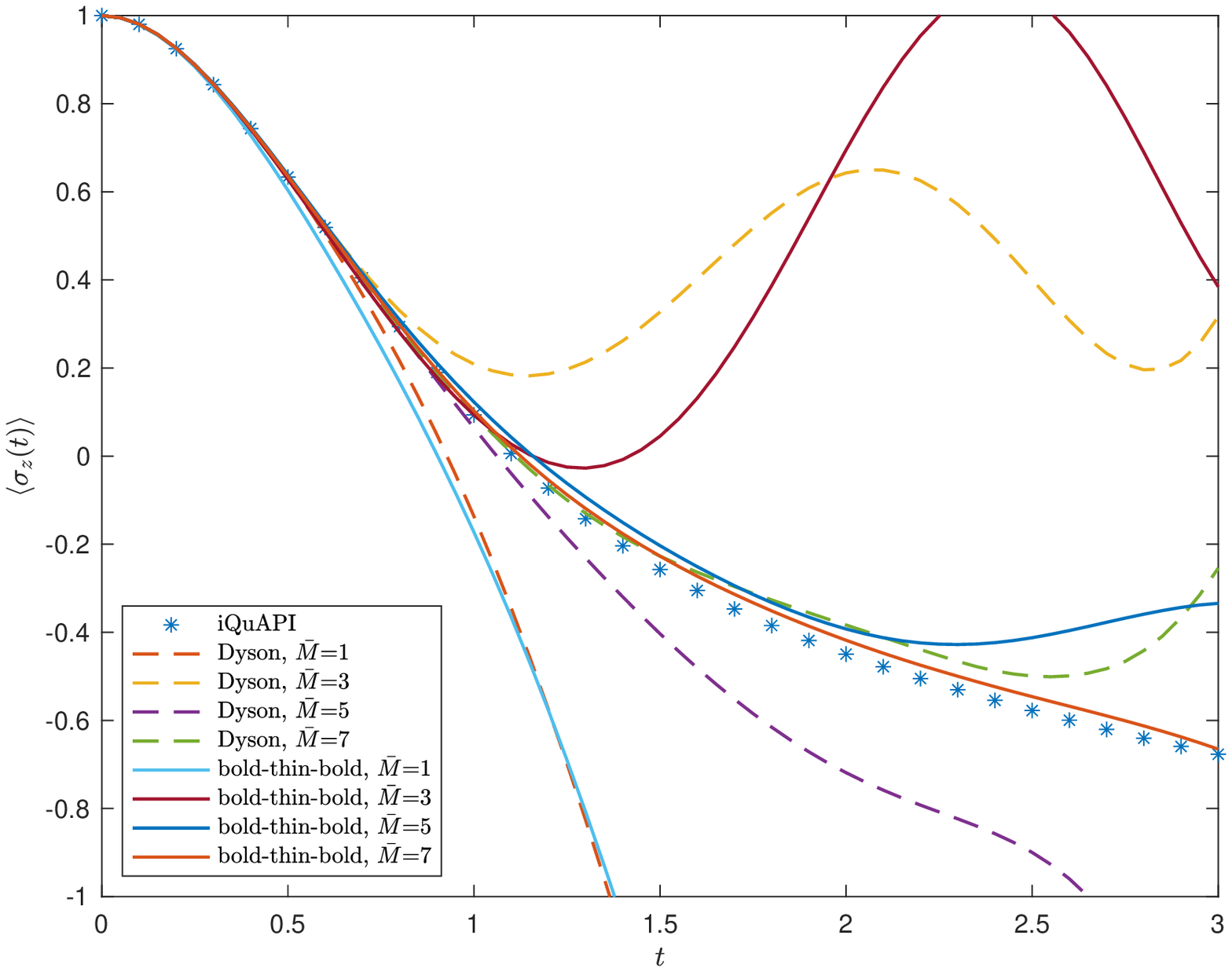}}
    \quad
    \caption{Evolution of $\expval{\hat{\sigma}_z(t)}$ with $\omega_c=2.5$ and $\omega_c=5$. 
    The number of samples is determined based on \eqref{Mnm} with $\mathcal{M}_0 = 10^6$.}
    \label{Test_3_4}
\end{figure}

In these experiments, early truncations of the series at
 $\bar{M}=1,3$ are insufficient to approximate the series $\mathcal{K}(-t,t)$ due to the stronger coupling.
In general, for a fixed $\bar{M}$,
 the BTB method provides more accurate results thanks to the use of bold lines.
For example, in \Cref{Test_3},
 if we compare the results with legends ``Dyson, $\bar{M}=7$'' and ``BTB, $\bar{M}=7$'',
it can be observed that the result of BTB method is valid for all $t \in [0,3]$,
while the result of Dyson series starts to deviate from the reference solution provided by i-QuAPI at $t=2$.

In addition to the accuracy,
 another advantage of BTB method over Dyson series is
 its better efficiency.
For the same value of $\bar{M}$, the computation time of BTB method is shorter since the number of diagrams in $\widehat{\mathcal{Q}}$ is 
 smaller than that in $\mathcal{Q}$ (see \Cref{tab:diagram_numbers}).
When $\bar{M}$ is large,
 the computation time of BTB method is significantly shorter compared to Dyson series.
\Cref{computation_time_compare} lists the wall clock time for the tests in \Cref{Test_3_4}.
From \Cref{computation_time_compare},
 it is clear that when $\bar{M}=7$, BTB method only takes about a third of the time
 that the Dyson series takes.
Another reason leading to this difference is that
 for each sample, Dyson series requires the complicated computation on the matrix exponential in each $\mathscr{G}_{ij}^{(0)}$. For BTB method, most of these expensive $\mathscr{G}_{ij}^{(0)}$ are replaced by the cheaper bold lines which are obtained from linear interpolations. The difference in the wall clock time between the experiments with $\omega_c=2.5$ and $\omega_c=5$
 is due to different choice of the empirical constant $\mathcal{B}$.
For $\omega_c=2.5$, the constant is set as 0.1942, while for $\omega_c=5$ the constant is 0.7652 ($\approx 0.1942 \times 3.9403$).
According to \eqref{Mnm},
 the number of samples for $\omega_c = 5$
 is about $3.9403^4\approx 240$ times as that for $\omega_c=2.5$ when $m=7$, which explains
the enormous difference between the two experiments.

\begin{table}
\centering
\begin{tabular}{| c |c c | c c|} 
 \hline
 &\multicolumn{2}{c|}{$\omega_c=2.5$} & \multicolumn{2}{c|}{$\omega_c=5$} \\
 
 $\bar{M}$ & Dyson & BTB & Dyson & BTB \\
 \hline
 1 & 3 & 17 & 8 & 56 \\ 
 3 & 24 & 34 & 312 & 290 \\
 5 & 73 & 58 & 3228 & 1698 \\
 7 & 241 & 115 & 53225 & 16086 \\
 \hline
\end{tabular}
\caption{Comparison of wall clock time (round to nearest second) for tests in \Cref{Test_3_4}.
\label{computation_time_compare}}
\end{table}

\subsection{Variance test}
The severeness of sign problem
 can be measured by the variance of the results.
Therefore, we carried our an experiment
 to numerically evaluate the variance of each method.
The parameters is the same as what we used in \Cref{test_difference_frequencies}:
\begin{equation*}
    \Delta = 1,\quad
    \epsilon = 1, \quad
    \omega_c = 2.5, \quad
    \beta = 5,\quad
    \xi=0.4.
\end{equation*}
In the test, we use the truncation dimension $\bar{M}=5$ and $\mathcal{M}_0 = 2\times 10^6$ as reference solutions, 
which are denoted by $\expval{\hat{\sigma}_z^{\mathrm{D}}(t)}$ for the method based on Dyson series and $\expval{\hat{\sigma}_z^{\mathrm{BTB}}(t)}$
for the BTB method.
We set $\mathcal{M}_0=1000$ and repeat the same experiments $N$ times.
The corresponding results are denoted by
$\expval{\hat{\sigma}_{z,k}^{\mathrm{D}}(t)}$
and $\expval{\hat{\sigma}_{z,k}^{\mathrm{BTB}}(t)}$
for $k=1,\dots,N$.
The variance of the methods are then numerically evaluated by
\begin{equation*}
    \mathrm{Var}^{\mathrm{D}}(t)
    = \frac{1}{N} \sum_{k=1}^N \left\vert
    \expval{\hat{\sigma}_{z,k}^{\mathrm{D}}(t)}
    - \expval{\hat{\sigma}_z^{\mathrm{D}}(t)}
    \right\vert^2,
    \quad
    \mathrm{Var}^{\mathrm{BTB}}(t)
    = \frac{1}{N} \sum_{k=1}^N \left\vert
    \expval{\hat{\sigma}_{z,k}^{\mathrm{BTB}}(t)}
    - \expval{\hat{\sigma}_z^{\mathrm{BTB}}(t)}
    \right\vert^2.
\end{equation*}
We then plot the result $\mathrm{Var}^{\mathrm{D}}(t)$ and $\mathrm{Var}^{\mathrm{BTB}}(t)$ in \Cref{Test_5} and their ratio in \Cref{Test_6}.

\begin{figure}[ht]
    \subfloat[\label{Test_5}]{\includegraphics[width= 0.45\textwidth]{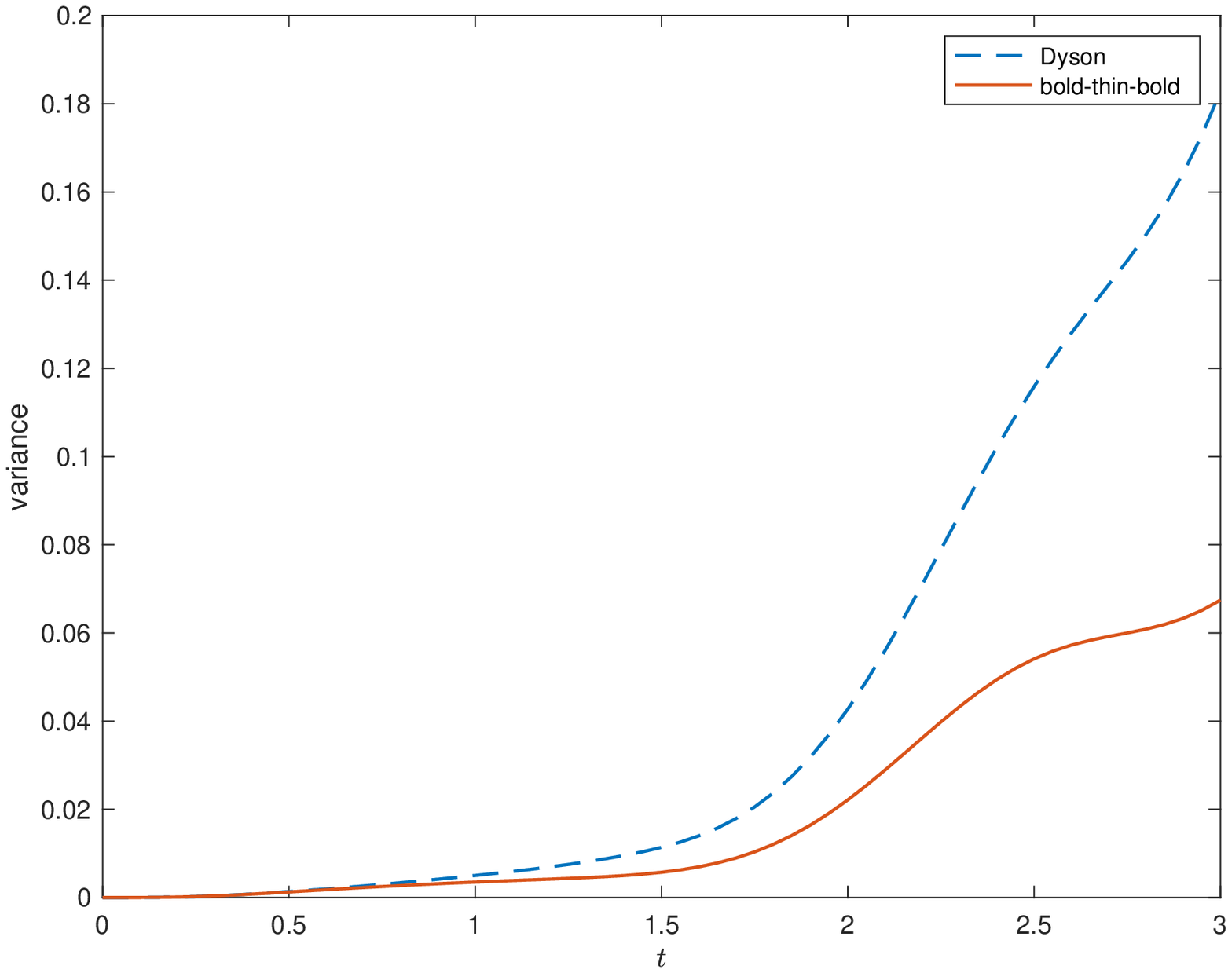}}
    \quad
    \subfloat[\label{Test_6}]{\includegraphics[width= 0.45\textwidth]{}}
    \quad
    \caption{The change of variances $\mathrm{Var}^\mathrm{D}(t)$, $\mathrm{Var}^\mathrm{BTB}(t)$ and the ratio $\mathrm{Var}^\mathrm{D}(t)/\mathrm{Var}^\mathrm{BTB}(t)$ with respect to $t$ (repeat times $N=10000$).}
    \label{Test_5_6}
\end{figure}

As is shown in \Cref{Test_5},
 the variance of the BTB method 
 is lower throughout the simulation.
At $t=3$, the variance of both methods are $0.1828$ and $0.0674$ respectively.
This indicates 
 that the BTB method can better
 tame the numerical sign problem.
We also observe in \Cref{Test_6} that as time evolves, the variance ratio becomes larger, showing that
 the BTB method is more stable for longer time simulations.

\section{Conclusion}
\label{final_conclusion}
We propose two diagrammatic Monte Carlo methods which apply the recurrence relation of the infinite series in the simulations of quantum system-bath dynamics. The first approach is formulated as an integro-differential equation derived from Dyson series, which allows us to solve the evolution of the observables by some classic numerical schemes such as Runge-Kutta methods. The costliest term $\mathcal{K}(-t,t)$ in the equation is a series of high-dimensional integrals, where each integrand includes a system associated functional $\mathcal{U}^{(0)}$ and a bath influence functional $\Ls_b$, and the integrals are evaluated by the Monte Carlo method. We observe that a significant proportion of $\mathcal{K}(-t,t)$ can be expressed as a linear operator applied to $\mathcal{K}(-(t-\Delta t), t-\Delta t)$, so that our fast algorithm only needs to evaluate the remaining part at each time step. This leads to a significantly higher efficiency compared to the direct implementation without using the recurrence relation. Afterwards, inspired by the inchworm method, we develop the second approach called bold-thin-bold diagrammatic Monte Carlo, which clusters some of the diagrams in $\mathcal{K}(-t,t)$ into bold lines and hence the reformulated series form of $\mathcal{K}(-t,t)$ converges faster. The recurrence relation can again be applied to the BTB method, based on which a similar efficient algorithm is developed. These two algorithms can be considered as extensions to the idea in \cite{cai2022fast} that only reuses the information of the bath influence functionals. It can be demonstrated that our methods also have better performance in terms of memory cost. 

While the two methods are mainly discussed in the framework of spin-boson model, the idea can also be generalized to any quantum system interacting with harmonic bath where the two-point correlation $B(\tau_1,\tau_2)$ has a similar formula as \eqref{bath_function_B} which only relies on the time difference $\Delta \tau = |\tau_1|-|\tau_2|$. We also point out that as the BTB method is preferable for the simulations of relatively weak system-bath coupling where $\bar{M}$ is often not too large. In this case, it is worth considering replacing the Monte Carlo method with some deterministic numerical quadrature and impose some truncation of the memory length to avoid the numerical sign problem. This will be studied in our future work.


\bibliographystyle{abbrv}
\bibliography{myBib.bib}

\end{document}